    \let\over=\@@over \let\overwithdelims=\@@overwithdelims
    \let\atop=\@@atop \let\atopwithdelims=\@@atopwithdelims
    \let\above=\@@above \let\abovewithdelims=\@@abovewithdelims
\tikzstyle{int}=[draw, fill=blue!20, minimum size=2em]
\tikzstyle{dot}=[circle, draw, fill=blue!20, minimum size=2em]
\tikzstyle{init} = [pin edge={to-,thin,black}]
\newcommand{\floor}[1]{{\left\lfloor {#1} \right \rfloor}}
\newcommand{\Floor}[1]{\lfloor {#1} \rfloor}
\newcommand{\reals}{\mathbb{R}}
\newcommand{\naturals}{\mathbb{N}}
\newcommand{\integers}{\mathbb{Z}}
\newcommand{\Expect}{\mathbb{E}}
\newcommand{\expect}[1]{\mathbb{E}\left[#1\right]}
\newcommand{\Prob}{\mathbb{P}}
\newcommand{\prob}[1]{\mathbb{P}\left[#1\right]}
\newcommand{\TV}{{\sf TV}}
\newcommand{\expects}[2]{\mathbb{E}_{#2}\left[ #1 \right]}
\newcommand{\diff}{{\rm d}}
\newcommand{\eg}{e.g.\xspace}
\newcommand{\ie}{i.e.\xspace}
\newcommand{\iid}{i.i.d.\xspace~}
\newcommand{\linf}[1]{\left\|{#1} \right\|_{\infty}}
\newcommand{\Poi}{\mathrm{Poi}}
\newcommand{\poi}{\mathrm{poi}}
\newcommand{\Multinom}{\mathrm{Multinomial}}
\newcommand{\Binom}{\mathrm{Binomial}}
\newcommand{\abs}[1]{\left| #1 \right|} % Added by Pengkun
\newcommand{\pth}[1]{\left( #1 \right)}
\newcommand{\qth}[1]{\left[ #1 \right]}
\newcommand{\sth}[1]{\left\{ #1 \right\}}
\newcommand{\iiddistr}{{\stackrel{\text{\iid}}{\sim}}}
\newcommand{\inddistr}{\overset{\text{ind}}{\sim}} % added by Pengkun
\newcommand{\var}{\mathsf{var}}
\newcommand{\Bern}{\text{Bern}}
\newcommand{\indc}[1]{{\mathbf{1}_{\left\{{#1}\right\}}}}
\newcommand{\Indc}{\mathbf{1}}
\definecolor{myblue}{rgb}{.8, .8, 1}
\definecolor{mathblue}{rgb}{0.2472, 0.24, 0.6} % mathematica's Color[1, 1--3]
\definecolor{mathred}{rgb}{0.6, 0.24, 0.442893}
\definecolor{mathyellow}{rgb}{0.6, 0.547014, 0.24}
\newcommand{\tR}{{\tilde{R}}}
\newcommand{\sfP}{{\mathsf{P}}}
\newcommand{\calE}{{\mathcal{E}}}
\newcommand{\calF}{{\mathcal{F}}}
\newcommand{\calM}{{\mathcal{M}}}
\newcommand{\calP}{{\mathcal{P}}}
\newcommand{\calS}{{\mathcal{S}}}
\newcommand{\Th}{{^{\rm th}}}
\newcommand{\diverge}{\to \infty}
\def\unifto{\mathop{{\mskip 3mu plus 2mu minus 1mu%
    \setbox0=\hbox{$\mathchar"3221$}%
    \raise.6ex\copy0\kern-\wd0%
    \lower0.5ex\hbox{$\mathchar"3221$}}\mskip 3mu plus 2mu minus 1mu}}
\def\simleq{{{\mskip 3mu plus 2mu minus 1mu%
    \setbox0=\hbox{$\mathchar"013C$}%
    \raise.2ex\copy0\kern-\wd0%
    \lower0.9ex\hbox{$\mathchar"0218$}}\mskip 3mu plus 2mu minus 1mu}}
\def\simleq{\lesssim}
\def\simgeq{{{\mskip 3mu plus 2mu minus 1mu%
    \setbox0=\hbox{$\mathchar"013E$}%
    \raise.2ex\copy0\kern-\wd0%
    \lower0.9ex\hbox{$\mathchar"0218$}}\mskip 3mu plus 2mu minus 1mu}}
\def\simgeq{\gtrsim}
\newtheorem{theorem}{Theorem}
\newtheorem{lemma}{Lemma}
\newtheorem{prop}{Proposition}
\theoremstyle{definition}
\newtheorem{remark}{Remark}
\newcommand{\Hplug}{\hat{H}_{\text{plug-in}}}
\newcommand{\Rplug}{R_{\text{\rm plug-in}}}
\newif\ifmapx
\edef\jobnametmp{\expandafter\string\csname contraction_apx\endcsname}
\edef\jobnameapx{\expandafter\mkillslash\jobnametmp}
\edef\jobnameexpand{\jobname}
\begin{document}
\ifpdf
\DeclareGraphicsExtensions{.pgf}
\graphicspath{{figures/}}
\fi

% paper title
\title{Minimax rates of entropy estimation on large alphabets via best polynomial approximation}

\author{Yihong Wu and Pengkun Yang\thanks{The authors are with
the Department of Electrical and Computer Engineering and the Coordinated Science Laboratory, University of Illinois at Urbana-Champaign, Urbana, IL, \texttt{\{yihongwu,pyang14\}@illinois.edu}. 
 This paper was presented in part at the IEEE International Symposium on Information Theory, Hong Kong,
June, 2015 \cite{WY14-isit}.
This research was supported in part by the National Science Foundation under
Grant IIS-1447879, CCF-1423088, 
CCF-1527105, and
Strategic Research Initiative on Big-Data Analytics of the College of Engineering
at the University of Illinois.}}

\date{\today}

\maketitle

\begin{abstract}
Consider the problem of estimating the Shannon entropy of a distribution over $k$ elements from $n$ independent samples. We show that the minimax mean-square error is within universal multiplicative constant factors of 
$$\Big(\frac{k }{n \log k}\Big)^2 + \frac{\log^2 k}{n}$$
if $n$ exceeds a constant factor of $\frac{k}{\log k}$; otherwise there exists no consistent estimator.
This refines the recent result of Valiant-Valiant \cite{VV11} that the minimal sample size for consistent entropy estimation scales according to $\Theta(\frac{k}{\log k})$. The apparatus of best polynomial approximation plays a key role in both the construction of optimal estimators and, via a duality argument, the minimax lower bound.
\end{abstract}

%\tableofcontents

%\newpage
\section{Introduction}
\label{sec:intro}

%Denote $ [k]=\sth{1,\dots,k} $, suppose we have an unknown input distribution
%$ P=\pth{ p_1,\dots,p_k } \in \reals_+^k ~\text{s.t.}~ \sum_ip_i=1 $ on $ [k] $,
Let $P$ be a distribution over an alphabet of cardinality $k$.
Let $ X_1,\dots,X_n $ be \iid~samples drawn from $P $. 
Without loss of generality, we shall assume that the alphabet is $[k]\triangleq \sth{1,\dots,k}$.
To perform statistical inference on the unknown distribution $P$ or any functional thereof, a sufficient statistic is the histogram $ N\triangleq (N_1,\ldots,N_k)$, where
$$ N_j=\sum_{i=1}^n \indc{X_i=j}$$
records the number of occurrences of $j \in [k]$ in the sample. Then $ N\sim \Multinom(n,P) $.

The problem of focus is to estimate the Shannon entropy of the distribution $ P $:
%, where the functional $ H $ is defined as   
\[
H(P)=\sum_{i=1}^{k}p_i\log \frac{1}{p_i}.
\]
%The goal to find the lower bound on optimal sample size: the sample size that no consistent estimator exists, which means minimax risk (MSE)
To investigate the decision-theoretic fundamental limit, we consider the minimax quadratic risk of entropy estimation:
\begin{equation}
R^*(k,n)\triangleq \inf_{\hat{H}}\sup_{P \in \calM_k}\Expect_P[( \hat{H}(N)-H(P) )^2]     
    \label{eq:Rkn}
\end{equation}
where $\calM_k$ denotes the set of probability distributions on $[k]$.
The goal of the paper is
a) to provide a constant-factor approximation of the minimax risk $R^*(k,n)$,
b) to devise a linear-time estimator that provably attains $ R^*(k,n) $ within universal constant factors.

%Although various estimators for information measures has been proposed with provable guarantees, such as entropy \cite{Paninski03,CKV04}, Kullback-Leibler divergence and mutual information \cite{Paninski04,WKV09b}, directed information \cite{jiao2012universal}, little is known how close to the statistical optimality these procedures are in the minimax sense. To this end we need \emph{converse} results, \ie, \emph{minimax lower bounds}, which bound the minimax risk from below as a function of the sample size and the statistical model. Consider the minimax mean-square error of estimating directed information based on $n$ samples of the processes $X$ and $Y$ whose joint distribution belongs to the set $\calP$:
%\begin{equation}
%   R_n = \inf_{\hat{I}} \sup_{P_{XY} In \calP} \Expect[(\hat{I} - I(X \to Y))^2].
%   \label{eq:Rn}
%\end{equation}

Entropy estimation has found numerous applications across various fields, such as neuroscience \cite{spikes-book}, physics \cite{Vinck12}, telecommunication \cite{PW96}, biomedical research \cite{porta2001entropy}, etc.
Furthermore, it serves as the building block for estimating other information measures expressible in terms of entropy, such as mutual information and directed information, which are instrumental in machine learning applications such as learning graphical models \cite{CL68,QKC13,jiao2013di,Bresler15}.
%\cite{JVHW15,Paninski03}.

From a statistical standpoint, the problem of entropy estimation falls under the category of \emph{functional estimation}, where we are not interested in directly estimating the high-dimensional parameter (the distribution $P$) per se, but rather a function thereof (the entropy $H(P)$). 
%Lower bounds of the minimax risk $R_n$ provide us a benchmark for evaluating various causality learning algorithms as well as how many samples are necessary to reach a prescribed accuracy of estimating directed information.
Estimating a scalar functional has been intensively studied in nonparametric statistics, \eg, estimate a scalar function of a regression function such as linear functional \cite{Stone80,DL91}, quadratic functional \cite{CL05}, $L_q$ norm \cite{LNS99}, etc.
%or density estimation, 
% estimating linear functional of a regression function or a density (\eg, value at a point \cite{Stone80}, \cite{DL91}), quadratic functional \cite{CL05}, $L_q$ norm \cite{LNS99}. 
%PCA: lower dimensional functional still high-dimensional object.
To estimate a function, perhaps the most natural idea is the ``plug-in'' approach, namely, first estimate the parameter and then substitute into the function. 
This leads to the commonly used plug-in estimator, \ie, the empirical entropy, 
\begin{equation}
    \Hplug = H(\hat{P}),
    \label{eq:Hplug}
\end{equation}
where $\hat P=(\hat p_1,\ldots,\hat p_k)$ denotes the empirical distribution with $\hat p_i = \frac{N_i}{n}$.
As frequently observed in functional estimation problems, the plug-in estimator can suffer from severe bias (see \cite{Efron82,Berkson80} and the references therein). Indeed, although $\Hplug$ is asymptotically efficient and minimax (cf., \eg, \cite[Sections 8.7 and 8.9]{VdV00}),
%rate-minimax\footnote{The asymptotically minimaxity of $\Hplug$ in the large-sample limit was claimed in \cite{Paninski03} without providing a proof. In fact in this regime the rate-minimaxity $\Hplug$, that is, modulo constant factors, can be deduced from \prettyref{thm:main} and \prettyref{prop:Rplug-rate} below.} 
% (which, e.g., can be deduced from \prettyref{thm:main} and \prettyref{prop:Rplug-rate} below) 
 in the ``fixed-$k$-large-$n$'' regime, it can be highly suboptimal in high dimensions, where, due to the large alphabet size and resource constraints, we are constantly contending with the difficulty of \emph{undersampling} in applications such as
\begin{itemize}
        \item corpus linguistics: about half of the words in Shakespearean canon only appeared once \cite{ET76};
        \item network traffic analysis: many customers or website users are only seen a small number of times \cite{benevenuto2009characterizing};
        \item analyzing neural spike trains: natural stimuli generate neural responses of high timing precision resulting in a massive space of meaningful responses \cite{Berry13051997,mainen1995reliability,SLSKB97}.
\end{itemize}
%corpus linguistics (about half of the words in Shakespearean canon only appeared once \cite{ET76}), network traffic analysis (many customers or website users are only seen a small number of times \cite{benevenuto2009characterizing}), analyzing neural spike trains (natural stimuli generate neural responses of high timing precision resulting in a massive space of meaningful responses \cite{Berry13051997,mainen1995reliability,SLSKB97}). 

Statistical inference on large alphabets with insufficient samples 
has a rich history in information theory, statistics and computer science, with early contributions dating back to Fisher \cite{FCW43}, Good and Turing \cite{Good53},  Efron and Thisted \cite{ET76} 
and recent renewed interests on compression, prediction, classification and estimation aspects for large-alphabet sources \cite{OSZ04,BS09,KWTV13,WVK11,VV13}.
However, none of the current results allow a general understanding of the fundamental limits of estimating information quantities of distributions on large alphabets. 
The particularly interesting case is when the sample size scales \emph{sublinearly} with the alphabet size.

%In many situations estimating a lower-dimensional functional requires fewer samples, or equivalently, has a smaller minimax risk (or faster minimax convergence rate). This intuition, however, does not always hold. For example, estimating a covariance matrix (without any low-dimensional structures such as  sparsity or low-rankness) under spectral norm has the same rate as estimating its spectral norm \cite{CMW13}.
%
%Major observation:
%\begin{enumerate}[a)]
%    \item Bias dominates
%    \item Plug-in procedures can be highly suboptimal \cite{} for either test or estimation.
%
%\end{enumerate}

Our main result is the characterization of the minimax risk within universal constant factors:
\begin{theorem}
    \label{thm:main}
If $n \gtrsim \frac{k}{\log k}$,\footnote{For any sequences $\{a_n\}$ and $\{b_n\}$ of positive numbers, we write $a_n\gtrsim b_n$ or $b_n\lesssim a_n$ when $a_n\geq cb_n$ for some absolute constant $c$. Finally, we write $a_n \asymp b_n$ when both $a_n\gtrsim b_n$ and $a_n\lesssim b_n$ hold.} 
    then
    \begin{equation}
    R^*(k,n) \asymp \pth{\frac{k }{n \log k}}^2 + \frac{\log^2 k}{n} .
    \label{eq:main}
\end{equation}
If $n \lesssim \frac{k}{\log k}$, there exists no consistent estimators, \ie, $R^*(k,n) \gtrsim 1$.
\end{theorem}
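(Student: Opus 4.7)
My plan is to prove matching upper and lower bounds after first reducing to the Poissonized model, in which $N_i \iiddistr \Poi(np_i)$ are independent coordinate-wise; a classical argument shows this changes the minimax risk only by constant factors. Throughout, let $L\asymp \log k$ and $\Delta\asymp \log k/n$, which will play the role of a polynomial degree and a threshold separating ``small'' from ``large'' probabilities. Both the estimator (upper bound) and the prior (lower bound) hinge on \emph{best polynomial approximation} of $\phi(x)=-x\log x$ on $[0,\Delta]$, whose uniform error $E_L$ is known to be of order $\Delta/L^2\asymp 1/(n\log k)$ by a rescaling of the classical bound for $x\log x$ on $[0,1]$.

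\textbf{Upper bound.} Let $P_L(x)=\sum_{j=1}^L a_j x^j$ denote the best degree-$L$ polynomial approximation to $\phi$ on $[0,\Delta]$. The crucial observation is that under the Poisson model $N_i^{\underline j}/n^j$ is an \emph{unbiased} estimator of $p_i^j$, so $\tilde g(N_i)\triangleq\sum_j a_j N_i^{\underline j}/n^j$ unbiasedly estimates $P_L(p_i)$ and hence estimates $\phi(p_i)$ with absolute bias at most $E_L$. I will use $\tilde g(N_i)$ on indices where $N_i$ falls below a threshold of order $\log k$ (signalling $p_i\lesssim \Delta$) and a bias-corrected plug-in of the form $-\hat p_i\log\hat p_i+\tfrac{1}{2n}$ on the remaining indices. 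Summing per-coordinate bias contributions over $k$ indices gives $\lesssim k E_L\asymp k/(n\log k)$, matching the first term of \eqref{eq:main} after squaring. The variance is handled via Poisson moment estimates $\Expect[N_i^{2j}]\lesssim (n\Delta)^{2j}+(2j)^{2j}$ against the coefficient growth $|a_j|\lesssim (C/\Delta)^j$, yielding $\lesssim \log^2 k/n$ in total and recovering the second term.

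\textbf{Lower bound via duality.} For the matching lower bound I exploit the dual formulation of best polynomial approximation: there exist probability measures $\mu_0,\mu_1$ supported on $[0,\Delta]$ that agree on all moments up to order $L$ with $\abs{\int \phi\,\diff\mu_0-\int\phi\,\diff\mu_1}\gtrsim E_L$. Rescaling so that $\int x\,\diff\mu_b=1/k$ and drawing $p_i\iiddistr\mu_b$ (conditioned on, or enforced via concentration of, $\sum_i p_i=1$) yields two product priors on the simplex. Under the Poissonized model the $N_i$ are i.i.d.\ from the Poisson mixtures $f_b=\int\Poi(np)\,\diff\mu_b(p)$; the matched moments bound the per-coordinate $\chi^2$-divergence by $(ne\Delta/L)^{2L}$, and tensorizing over $k$ coordinates keeps the total variation bounded away from $1$ once the constants in $L,\Delta$ are tuned. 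Meanwhile the expected entropies under the two priors differ by $\gtrsim k E_L\asymp k/(n\log k)$, so Le Cam's two-point method yields $R^*\gtrsim (k/(n\log k))^2$. Setting $n\asymp k/\log k$ in this same inequality delivers $R^*\gtrsim 1$, hence the non-consistency claim. The complementary $\log^2 k/n$ lower bound will come from a simpler two-point construction, perturbing a near-uniform distribution by a single coordinate of Bernoulli-type fluctuation of order $1/\sqrt n$, whose entropy contribution is $\asymp \log k/\sqrt n$.

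\textbf{Main obstacle.} The most delicate step is the normalization in the lower bound: the product measure $\mu_b^{\otimes k}$ does not live on the probability simplex, so I must either condition on $\{\sum_i p_i=1\}$ or reweight while preserving both (i) the moment matching that governs statistical indistinguishability and (ii) the entropy gap. I expect to resolve this by a truncation-plus-concentration argument, showing that $\sum_i p_i$ concentrates so sharply around $1$ that neither the moment structure nor the entropy difference degrades by more than a constant factor. A secondary technical hurdle is the variance bookkeeping on the upper-bound side: balancing the Chebyshev-type coefficient bounds $|a_j|\lesssim (C/\Delta)^j$ against Poisson moments, and in particular treating the rare event $N_i\gg n\Delta$, determines the precise choice of threshold that keeps the total variance at the target $\log^2 k/n$.
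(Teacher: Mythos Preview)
Your upper-bound sketch matches the paper's construction closely and should go through with careful variance bookkeeping (the paper uses sample splitting and the bound $|a_m|\le 2^{3L}$ on the $[0,1]$-coefficients, but your plan is essentially the same). The genuine gap is in the lower bound, specifically the sentence ``Rescaling so that $\int x\,d\mu_b=1/k$.'' You correctly observe that the dual measures $\mu_0,\mu_1$ from Chebyshev alternation share a common first moment $m$ (degree-$1$ moments match), but $m$ is determined by the alternation points of $\phi$ on $[0,\Delta]$ and you have no control over it beyond $0\le m\le\Delta$. If you rescale the support by $c=1/(km)$ to force the mean to $1/k$, then since $\phi(cx)=c\phi(x)-cx\log c$ and the linear part cancels, the per-coordinate functional separation is multiplied by the same $c$. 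After summing over $k$ coordinates the entropy gap becomes $2E_L/m$, not $kE_L$; to hit the target $k/(n\log k)$ you would need $m\lesssim 1/k$, which there is no reason to expect (and if you instead scale up because $m<1/k$, the support of $np_i$ grows and your $\chi^2$ bound $(ne\Delta/L)^{2L}$ blows up). The paper flags exactly this obstruction and circumvents it by a change-of-measure trick: it dualizes the approximation of $\log$ on $[\eta,1]$ with $\eta\asymp L^{-2}$ to get auxiliary $X,X'\in[\eta,1]$ with matched moments, and then sets $dP_U(u)\propto u^{-1}\,dP_{\alpha X/\eta}(u)$ plus an atom at $0$. This converts the automatically-matched \emph{zeroth} moment of $X,X'$ into a freely \emph{prescribed} first moment $\alpha$ for $U,U'$, while the separation becomes $\alpha\cdot 2E_L(\log,[\eta,1])$. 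A separate lemma shows $E_{cL}(\log,[L^{-2},1])\gtrsim 1$, so choosing $\alpha\asymp k/(n\log k)$ delivers the correct gap with the support $[0,\alpha/\eta]\subset[0,\log k/n]$ intact.

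Your ``Main obstacle'' paragraph therefore misidentifies the difficulty. Once the mean is engineered correctly, the concentration of $\sum_i p_i$ and of $H(\mathsf P)$ around their expectations is routine (Chebyshev on i.i.d.\ bounded summands, as the paper does via its $\calM_k(\epsilon)$ relaxation). The hard step is fixing the mean without collapsing the functional separation, and a truncation-plus-concentration argument does not touch this; you need either the paper's $1/x$ reweighting or some other device that decouples the mean constraint from the optimization over moment-matched pairs.
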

To interpret the minimax rate \prettyref{eq:main}, we note that the second term corresponds to the classical ``parametric'' term inversely proportional to $\frac{1}{n}$, which is governed by the variance and the central limit theorem (CLT). The first term corresponds to the squared bias, which is the main culprit in the regime of insufficient samples. 
Note that $R^*(k,n) \asymp (\frac{k }{n \log k})^2$ if and only if $n \lesssim \frac{k^2}{\log^4 k}$, where the bias dominates. As a consequence, the minimax rate in \prettyref{thm:main} implies that to estimate the entropy within $\epsilon$ bits with probability, say 0.9, the minimal sample size is given by
\begin{equation}
n \asymp \frac{\log^2 k}{\epsilon^2} \vee \frac{k}{\epsilon \log k}.    
        \label{eq:sample-complexity}
\end{equation}

%Next we discuss the implication of \prettyref{thm:main} in the regime  of $\frac{k}{\log k} \lesssim n \lesssim k^C$ for any fixed $C>1$.

Next we evaluate the performance of plug-in estimator in terms of its worst-case mean-square error
\begin{equation}
    \Rplug(k,n) \triangleq \sup_{P \in \calM_k} \Expect_P[(\Hplug(N) - H(P))^2].
    \label{eq:Rplug}
\end{equation}
Analogous to \prettyref{thm:main} which applies to the optimal estimator, the risk of the plug-in estimator admits a similar characterization (see \prettyref{app:plug} for the proof):
% in the following proposition.
\begin{prop}
    \label{prop:Rplug-rate}
    If $n \gtrsim k$, then
    \begin{equation}
        \Rplug(k,n) \asymp \pth{\frac{k }{n}}^2 + \frac{\log^2 k}{n} .
        \label{eq:Rplug-rate}
    \end{equation}
    If $n \lesssim k$, then $\Hplug$ is inconsistent, \ie, $\Rplug(k,n) \gtrsim 1$.
\end{prop}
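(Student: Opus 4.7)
The plan is to decompose the mean-square error as $\Rplug = (\text{bias})^2 + \var(\Hplug)$, bound each piece in the regime $n \gtrsim k$, and exhibit two distributions that witness the two terms in \prettyref{eq:Rplug-rate}; the inconsistency claim for $n \lesssim k$ reduces to a bias lower bound at the uniform distribution.

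For the upper bound, the classical Paninski-type inequality
\[0 \le H(P) - \Expect[\Hplug] \le \log(1 + (k-1)/n)\]
(the lower direction is Jensen's inequality for the concave functional $H$; the upper direction follows from a direct computation with $x \log x$ together with $\log(1+x) \le x$) gives $|\text{bias}| \lesssim k/n$ in our regime. For the variance I would Poissonize, setting $N_i \inddistr \Poi(np_i)$, so that $\Hplug = \sum_i f(N_i/n)$ with $f(x) = -x \log x$ becomes a sum of independent summands. The Poincar\'e inequality for Poisson yields $\var(f(N/n)) \le np \cdot \Expect[(f((N+1)/n) - f(N/n))^2]$, and a pointwise bound on the discrete derivative (roughly $|f((N+1)/n) - f(N/n)| \lesssim \log(n/(N \vee 1))/n$, with the $N=0$ atom handled separately when $p < 1/n$) yields the per-coordinate bound $\var(f(N_i/n)) \lesssim p_i \log^2(1/p_i)/n$. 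Summing and using the elementary inequality $\sum_i p_i \log^2(1/p_i) \lesssim \log^2 k$ for any $P \in \calM_k$ (maximum near the uniform) gives $\var(\Hplug) \lesssim (\log k)^2/n$; the transfer back to the multinomial model is standard.

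The matching lower bounds are witnessed by two distinct distributions. The $(k/n)^2$ term is saturated by uniform $P$: a second-order Taylor expansion of $N_i \log N_i$ around $\Expect[N_i] = n/k$ (justified since $n/k \gtrsim 1$) recovers the Miller--Madow formula $\Expect[\Hplug] = \log k - (k-1)/(2n) + O(1/n)$, giving $|\text{bias}| \asymp k/n$. The $(\log k)^2/n$ term is saturated by $P^\star \triangleq (\tfrac{1}{2}, \tfrac{1}{2(k-1)}, \ldots, \tfrac{1}{2(k-1)})$: the first-order asymptotic variance $V(P^\star)/n$ with $V(P) = \sum_i p_i \log^2 p_i - H(P)^2$ works out to $V(P^\star) = \tfrac{1}{4}\log^2(k-1) \asymp \log^2 k$, and since every coordinate satisfies $n p^\star_i \gtrsim 1$, a CLT-type argument controls the higher-order Taylor terms and promotes this to a rigorous lower bound on $\var(\Hplug)$. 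The inconsistency in the regime $n \lesssim k$ also follows from the uniform distribution: $\Hplug \le \log|\{i : N_i > 0\}| \le \log n$ deterministically, so $|\text{bias}| \ge \log(k/n) \gtrsim 1$ whenever $k/n$ exceeds an absolute constant.

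The main obstacle is the variance upper bound with $\log^2 k$ rather than $\log^2 n$. The direct Efron--Stein argument using the one-sample Lipschitz constant $|\Hplug(N) - \Hplug(N')| \lesssim (\log n)/n$ only produces $\var(\Hplug) \lesssim (\log n)^2/n$, matching the claim only when $n$ is polynomial in $k$; closing the gap uniformly in the regime $n \gtrsim k$ is exactly what the per-coordinate Poissonized Poincar\'e analysis sketched above is for, and everything else is bookkeeping.
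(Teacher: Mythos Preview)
Your decomposition and the bias upper bound match the paper's. The variance upper bound and the $\frac{\log^2 k}{n}$ lower bound take genuinely different routes. For the variance, the paper stays in the multinomial model and applies Steele's (Efron--Stein) inequality, but rather than the crude one-sample Lipschitz bound it expands the difference coordinate by coordinate and splits on the event $\{\tilde N_j \le (n-1)p_j/2\}$, controlling the bad event by a binomial Chernoff bound; this yields $\var\Hplug \lesssim \frac{\log^2 k}{n} + \frac{k\log^2 n}{n^2}$, and the second term is absorbed for $n\ge k$. Your Poissonized Poincar\'e approach would land on the same intermediate bound and is a perfectly valid alternative, trading the Poisson--multinomial transfer for a cleaner per-coordinate calculation. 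For the $\frac{\log^2 k}{n}$ lower bound the paper takes a shortcut you may have missed: it simply invokes the minimax lower bound $R^*(k,n)\gtrsim \frac{\log^2 k}{n}$ from \prettyref{prop:lb1}, which trivially lower bounds $\Rplug\ge R^*$. Your direct variance computation at $P^\star$ is more informative but requires honestly bounding $\var(D(\hat P\|P))$ against $\var(L)$ when $np_i^\star$ is only bounded below by a constant, which is more than a formality.

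The bias lower bound is where your plan has a real gap. The Miller--Madow expansion has remainder of order $k^2/n^2$ at the uniform distribution, not $O(1/n)$, and the underlying Taylor expansion of $x\log x$ is not uniformly controlled when $n/k$ is merely a bounded constant (the third derivative blows up near $0$, and $N_i=0$ occurs with nonvanishing probability). On the other side, your inconsistency argument $\Hplug\le\log n$ only bites once $k/n$ strictly exceeds $1$. Together these leave the boundary regime $n\asymp k$ uncovered by either piece. The paper handles this cleanly by writing the bias exactly as $-\Expect[D(\hat P\|P)]$, applying Pinsker plus Cauchy--Schwarz to get $|\Expect[\Hplug]-H|\ge 2(\Expect\,\TV(\hat P,P))^2=2\bigl(\frac{k}{2n}\Expect|N_1-n/k|\bigr)^2$, and then invoking a sharp non-asymptotic bound on the binomial mean absolute deviation due to Berend--Kontorovich, which gives $|\text{bias}|\gtrsim (k/n)\wedge 1$ uniformly for all $n,k$.
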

%We give a simple proof of \prettyref{prop:Rplug-rate} in \prettyref{app:plug}.
% using Pinsker's inequality and by tightening the previous analysis in \cite{AK01,Paninski03}.    
Note that the first and second term in the risk \prettyref{eq:Rplug-rate} again corresponds to the squared bias and variance respectively. 
While it is known that the bias can be as large as $\frac{k}{n}$ \cite{Paninski03}, the variance of the plug-in estimator is at most a constant factor of $\frac{\log^2 n}{n}$, regardless of the alphabet size (see, \eg, \cite[Remark (iv), p. 168]{AK01}). This variance bound can in fact be improved to $\frac{\log^2(k\wedge n)}{n}$ by a more careful application of Steele's inequality \cite{Jiao-email}, and hence the mean-square error (MSE) is upper bounded by $\pth{\frac{k}{n}}^2+\frac{\log^2(k\wedge n)}{n}\asymp \pth{\frac{k }{n}}^2 + \frac{\log^2 k}{n} $, which turns out to be the sharp characterization.

Comparing \prettyref{eq:main} and \prettyref{eq:Rplug-rate}, we reach the following verdict on the plug-in estimator: Empirical entropy is rate-optimal, \ie, achieving a constant factor of the minimax risk, if and only if we are in the ``data-rich'' regime $n = \Omega(\frac{k^2}{\log^2 k})$. In the ``data-starved'' regime of $n = o\big(\frac{k^2}{\log^2 k}\big)$, empirical entropy is strictly rate-suboptimal.

\subsection{Previous results}

Below we give a concise overview of the previous results on entropy estimation.
There also exists a vast amount of literature on estimating (differential) entropy on continuous alphabets which is outside the scope of the present paper
%Universal estimation of information measures . Mutual info, divergence, also continuous alphabet 
(see the survey \cite{WKV09b} and the references therein).

\paragraph{Fixed alphabet}
%\nb{Give a comprehensive review of previous work pre-Valiant-Valiant.}

%\cite{Basharin59} and Miller-Madhow, Harris \cite{Harris75}, Jonathan Victor,  Grassberger, Antos-Kontoyiannis,  etc. 

For fixed distribution $P$ and $n\to \infty$, Antos and Kontoyiannis \cite{AK01} showed that the plug-in estimator is always consistent and the asymptotic variance of the plug-in estimator is obtained in \cite{Basharin59}. However, the convergence rate of the bias can be arbitrarily slow on a possibly infinite alphabet.
The asymptotic expansion of the bias is obtained in, \eg, \cite{Miller55, Harris75}:
\begin{equation}
    \Expect[\Hplug(N)] = H(P) - \frac{S(P)-1}{2n} + \frac{1}{12n^2} \pth{1 - \sum_{i=1}^k\frac{1}{p_i}} + O(n^{-3}),
    \label{eq:Hplug-bias}
\end{equation}
where $S(P) = \sum_i \indc{p_i > 0}$ denote the support size. This inspired various types of bias reduction to the plug-in estimator, such as the Miller-Madow estimator \cite{Miller55}:
\begin{equation}
        \hat H_{\rm MM} = \Hplug + \frac{\hat S-1}{2n}
        \label{eq:miller-madow}
\end{equation}
where $\hat S$ is the number of observed distinct symbols.

%People reliazed the main problem of empirical entropy (plug-in estimator) is the bias, which is provably inconsistent unless the sample size far exceeds the alphabet size. Variance is always small.

\paragraph{Large alphabet}
%Estimation and compression probability distribution on large alphabets: Orlitsky et al., Pattern ML (PML) \cite{OSZ04}. 

%Batu, Rubinfeld, etc.
It is well-known that to estimate the distribution $P$ itself, say, with total variation loss at most a small constant, we need at least $\Theta(k)$ samples (see, \eg, \cite{BFSS02}). However, to estimate the entropy $H(P)$ which is a scalar function, it is unclear from first principles whether $n=\Theta(k)$ is necessary. This intuition and the inadequacy of plug-in estimator have already been noted by Dobrushin \cite{Dobrushin58}, who wrote:
\begin{quotation}
\emph{...This method (empirical entropy) is very laborious if $m$, the number of values of the random
variable is large, since in this case most of the probabilities $p_i$ are small and to determine each of
them we need a large sample of length $N$, 
which leads to a lot of work. However, it is natural to
expect that in principle the problem of calculating the single characteristic $H$ of the distribution
$(p_1,\ldots, p_m)$ is simpler than calculating the $m$-dimensional vector $(p_1, \ldots, p_m)$, and that therefore one ought to seek a solution of the problem by a method which does not require reducing the first and simpler problem to the second and more complicated problem.}
\end{quotation}

Using non-constructive arguments, Paninski first proved that it is possible to consistently estimate the entropy using \emph{sublinear} sample size, \ie, there exists $n_k=o(k)$, such that $R^*(k,n_k)\to 0$ as $k\to \infty$ \cite{Paninski04}. Valiant proved that no consistent estimator exists, \ie, $R^*(k,n_k) \gtrsim 1$ if $n \lesssim \frac{k}{\exp(\sqrt{\log k})}$ \cite{Valiant08}.
The sharp scaling of the minimal sample size of consistent estimation is shown to be $\frac{k}{\log k}$ in the breakthrough results of Valiant and Valiant \cite{VV10,VV11}. 
However, the optimal sample size as a function of alphabet size $k$ and estimation error $\epsilon$ has not been completely resolved. Indeed, an estimator based on linear programming is shown to achieve an additive error of $\epsilon$ using $\frac{k}{\epsilon^2 \log k}$ samples \cite[Theorem 1]{VV13}, while $\frac{k}{\epsilon \log k}$ samples are shown to be necessary \cite[Corollary 10]{VV10}. 
%$ \frac{1}{\sqrt{c}} $ when $ n=c\frac{k}{\log k}. $ 
%That is not tight in view of \cite[Corollary 10]{VV11} where the lower bound there is $ \frac{1}{c} $. 
This gap is partially amended in \cite{VV11-focs} by a different estimator, which requires $ \frac{k}{\epsilon \log k} $ samples but only valid when $ \epsilon>k^{-0.03} $.
\prettyref{thm:main} generalizes their result by characterizing the full minimax rate and the sharp sample complexity is given by \prettyref{eq:sample-complexity}.
%generalizing the results above.

%Wager-Viswanath-Kulkarni: restriction on distribution (within constant factors of $\frac{1}{n}$) -- what is the goal there? 

We briefly discuss the difference between the lower bound strategy of \cite{VV10} and ours. Since the entropy is a permutation-invariant functional of the distribution, a sufficient statistic for entropy estimation is the histogram of the histogram $N$:
%a further summary of the sufficient statistics $N$:
\begin{equation}
 h_i=\sum_{j=1}^k \indc{N_j=i}, \quad i \in [n],
    \label{eq:fp}
\end{equation}
also known as \emph{histogram order statistics} \cite{Paninski03}, \emph{profile} \cite{OSZ04}, or \emph{fingerprint} \cite{VV10}, which is the number of symbols that appear exactly $i$ times in the sample.  A canonical approach to obtain minimax lower bounds for functional estimation is Le Cam's two-point argument \cite[Chapter 2]{Lecam86}, \ie, finding two distributions which have very different entropy but induce almost the same distribution for the sufficient statistics, in this case, the histogram $N_1^k$ or the fingerprints $h_1^n$, both of which have non-product distributions.
 A frequently used technique 
 to reduce dependence is 
 \emph{Poisson sampling} (see \prettyref{sec:poi}), where we relax the fixed sample size to a Poisson random variable with mean $n$. This does not change the statistical nature of the problem due to the exponential concentration of the Poisson distribution near its mean. Under the Poisson sampling model, the sufficient statistics $N_1,\ldots,N_k$ are independent Poissons with mean $np_i$; however, the entries of the fingerprint remain highly dependent.
 To contend with the difficulty of computing statistical distance between high-dimensional distributions with dependent entries, the major tool in \cite{VV10} is a new CLT for approximating the fingerprint distribution by quantized Gaussian distribution, which are parameterized by the mean and covariance matrices and hence more tractable. This turns out to improve the lower bound in \cite{Valiant08} obtained using Poisson approximation. 
 
 In contrast, in this paper we shall not deal with the fingerprint directly, but rather use the original sufficient statistics $N_1^k$ due to their independence endowed by the Poissonized sampling. Our lower bound relies on choosing two random distributions (priors) with almost \iid entries which effectively reduces the problem to one dimension, thus circumventing the hurdle of dealing with high-dimensional non-product distributions. The main intuition is that a random vector with \iid entries drawn from a positive unit-mean distribution is not exactly but \emph{sufficiently close} to a probability vector due to the law of large numbers, so that effectively it can be used as a prior in the minimax lower bound.
%  modulo certain soft arguments.

While the focus of the present paper is estimating the entropy under the additive error criterion, approximating the entropy multiplicatively has been considered in \cite{BDKR05}. It is clear that in general approximating the entropy within a constant factor is impossible with any finite sample size  (consider Bernoulli distributions with parameter $1$ and $1-2^{-n}$, which are not distinguishable with $n$ samples); 
nevertheless, when the entropy is large enough, \ie, $H(P) \gtrsim \gamma/\eta$, 
it is possible to approximate the entropy within a multiplicative factor of $\gamma$ using $n \lesssim k^{(1+\eta)/\gamma^2} \log k$ number of samples (\cite[Theorem 2]{BDKR05}).

%Canonical approaches on minimax lower bounds in statistical decision theory include
%Le Cam's two-point argument \cite[Chapter 2]{Lecam86} and information-theoretic methods involving Kolmogorov's metric entropy and Fano's inequality \cite{IKbook,YB99}. 

%they need to construct two distributions that are close to uniform distributions over all alphabet and half alphabet respectively in term of earth move distance. Close earth move distance implies the close entropy separation, therefore the two distributions have constant entropy separation. They also need to show that the law of fingerprints are indistinguishable. The fingerprints are generalized multinomial distributed, their new CLTs show that multinomial distribution are close to discretized Gaussian distribution. They show that the two fingerprints have close mean, which further implies close covariance. Therefore triangle inequality yields that the two fingerprints are indistinguishable.  

%As discussed as an open question at the end of \cite[Section 8]{Paninski03}, the minimax rate of the MSE of entropy estimation remains unknown.

%%%%%%%%% more related results...

%On a related note, estimating entrpoy within multiplicative factors are investigated in \cite{BDKR05}. Connections to statistical physics approximation are explored in \cite{Vontobel14}.

\subsection{Best polynomial approximation}

The proof of both the upper and the lower bound in \prettyref{thm:main} relies on the apparatus of \emph{best polynomial approximation}.
Our inspiration comes from previous work on functional estimation in Gaussian mean models \cite{LNS99,CL11}.  
Nemirovski (credited in \cite{INK87}) pioneered the use of polynomial approximation in functional estimation and showed that unbiased estimators for the truncated Taylor series of the smooth functionals is asymptotically efficient.
This strategy is generalized to non-smooth functionals in  \cite{LNS99} using best polynomial approximation and in \cite{CL11} for estimating the $\ell_1$-norm in Gaussian mean model. 
%Lower bound: Le Cam's argument with two priors (also known as fuzzy hypothesis testing in \cite{Tsybakov09}). 

On the constructive side, the main idea is to trade bias with variance. Under the \iid sampling model, it is easy to show (see, \eg, \cite[Proposition 8]{Paninski03}) that to estimate a functional $f(P)$ using $n$ samples, an unbiased estimator exists if and only if $f(P)$ is a polynomial in $P$ of degree at most $n$. Similarly, under Poisson sample model, $f(P)$ admits an unbiased estimator if and only if $f$ is real analytic.
Consequently, there exists no unbiased entropy estimator with or without Poissonized sampling.
%In many statistical models, unbiased estimator exists for monomials and hence for any polynomials, but not for entropy involving logarithms (see \cite[Proposition 8, p. 1236]{Paninski03}).
 Therefore, a natural idea is to approximate the entropy functional by polynomials which enjoy unbiased estimation, and reduce the bias to at most the uniform approximation error. The choice of the degree aims to strike a good bias-variance balance. 
In fact, the use of polynomial approximation in entropy estimation is not new. In \cite{Vinck12}, the authors considered a truncated Taylor expansion of $\log x$ at $x=1$ which admits an unbiased estimator, and proposed to estimate the remainder term using Bayesian techniques; however, no risk bound is given for this scheme.
Paninski also studied how to use approximation by Bernstein polynomials to reduce the bias of the plug-in estimators \cite{Paninski03}, which forms the basis for proving the existence of consistent estimators with sublinear sample complexity in \cite{Paninski04}.

Shortly before we posted this paper to arXiv, we learned that Jiao et al. \cite{JVHW15} independently used the idea of best polynomial approximation in the upper bound of estimating Shannon entropy and power sums with a slightly different estimator which also achieves the minimax rate.
    For more recent results on estimating Shannon entropy, support size, R\'enyi entropy and other distributional functionals on large alphabets, see \cite{JVHW14,AOST15,WY2015,HJW15,HJW15adaptive}.
    In particular, \cite{HJW15adaptive} sharpened \prettyref{thm:main} by giving a constant-factor characterization of the minimax risk in the regime of $ n\lesssim \frac{k}{\log k} $ using similar techniques developed in this paper.

While the use of best polynomial approximation on the constructive side is admittedly natural, the fact that it also arises in the optimal lower bound is perhaps surprising.
As carried out in \cite{LNS99,CL11}, the strategy is to choose two priors with matching moments up to a certain degree, which ensures the impossibility to test. The minimax lower bound is then given by the maximal separation in the expected functional values subject to the moment matching condition. This problem is the \emph{dual} of best polynomial approximation in the optimization sense (see \prettyref{app:moments} for a self-contained account). For entropy estimation, this approach yields the optimal minimax lower bound, although the argument is considerably more involved due to the extra constraint imposed by probability vectors.
% on the mean of the prior.

%Best polynomial approximation used in \cite{Paninski03} which takes an approximation-theoretic perspective. Recover classical results on asymptotic bias of the empirical entropy. But in \cite{Paninski03} the attention is still restricted to the scope of bias correction for the empirical entropy estimator, in fact, used to work toword the best possible bias correction.
%However,
%
% He looked at approximation of $p \mapsto \phi(p) \triangleq p \log p$. In constrast, we use $p \mapsto \log p$. In fact, in \cite[Section 6]{Paninski03} Paninski has realized that uniform approximation is too pessimistic, since when $p$ is smaller we can tolerate a larger approximation error. Therefore a weighted apprximation error criterion of $\phi$ is used where the weight function is proportional to $\frac{1}{p}$ when $p$ is small. This already gets us close to uniform approximation of $\log$. 
%In fact, using the normalization of the probability mass itself, we show that uniform approximation of the log function is the right thing to look at.
% 
% 
%  also degree is sample size $n$. this is way too big to trade bias and variance. Note that we give a precise analysis of the best bias-variance tradeoff, while the estimator $\hat{H}_{BUB}$ is motivated by optimizing the best upper bound on bias and variance.

%\subsection{Organization}

\paragraph{Notations} 
Throughout the paper all logarithms are with respect to the natural base and the entropy is measured in nats.
%$X_i \inddistr P_i$ denotes ...
Let $\Poi(\lambda)$ denote the Poisson distribution with mean $\lambda$ whose probability mass function is $\poi(\lambda,j) \triangleq \frac{\lambda^j e^{-\lambda}}{j!}, j \in \integers_+$.
Given a distribution $P$, its $n$-fold product is denoted by $P^{\otimes n}$.
For a parametrized family of distributions $\{P_\theta\}$ and a prior $\pi$, the mixture is denoted by $\expects{P_\theta}{\pi} = \int P_{\theta} \pi(\diff \theta)$. In particular, $\expect{\Poi\pth{U}}$ denotes the Poisson mixture with respect to the distribution of a positive random variable $U$.
The total variation and Kullback-Leibler (KL) divergence between probability measures $P$ and $Q$ are respectively given by $\TV(P,Q) = \frac{1}{2} \int |\diff P-\diff Q|$ and $D(P\|Q) =  \int \diff P \log \frac{\diff P}{\diff Q}$.
Let $\Bern(p)$ denote the Bernoulli distribution with mean $p$.

%Throughout the paper, we use $c,c',c_0,C$ to denote generic absolute positive constants, though the actual value may vary at different occasions.

%\input{setup}
\section{Poisson sampling}
\label{sec:poi}
The multinomial distribution of the sufficient statistic $N=(N_1,\ldots,N_k)$ is difficult to analyze because of the dependency.
A commonly used technique is the so-called \emph{Poisson sampling},
%(see, \eg, \cite{Szpankowski01})
 where we relax the sample size $n$ from being deterministic to a Poisson random variable $n'$ with mean $n$. Under this model, 
 we first draw the sample size $ n'\sim \Poi(n) $, then draw $ n' $ \iid\ samples from the distribution $P$. 
The main benefit is that now the sufficient statistics $ N_i\inddistr \Poi(np_i) $ are independent, which significantly simplifies the analysis.

Analogous to the minimax risk \prettyref{eq:Rkn}, we define its counterpart under the Poisson sampling model:
\begin{equation}
    \tilde{R}^*(k,n) \triangleq \inf_{\hat{H}}\sup_{P \in \calM_k}\Expect( \hat{H}(N)-H(P) )^2,    
    \label{eq:Rknt}
\end{equation}
where $ N_i\inddistr \Poi(np_i) $ for $ i=1, \dots, k $. 
%Note that by definition we have
%\begin{equation}
%\tilde{R}^*(k,n) \leq \Expect_{n'\sim\Poi(n)}[ R^*(k,n')] = \sum_{m \geq 0} R^*(k,m) \poi(n,m).
%    \label{eq:poisson-risk}
%\end{equation}
%where $n'\sim \Poi(n)$.
In view of the exponential tail of Poisson distributions, the Poissonized sample size is concentrated near its mean $ n $ with high probability, which guarantees that the minimax risk under Poisson sampling is provably close to that with fixed sample size.
Indeed, the inequality
\begin{equation}
    \tilde{R}^*(k,2 n) - \exp(-n/4)\log^2 k  \leq R^*(k,n) \leq 2\tilde{R}^*(k,n/2)
    \label{eq:RRt}
\end{equation}
allows us to focus on the risk of the Poisson model (see \prettyref{app:poisson} for a proof).

%\prettyref{thm:poisson} rigorously shows Poisson sampling will only introduce a constant factor on the sample complexity of original problem.  

\section{Minimax lower bound}
\label{sec:lb}

In this section we give converse results for entropy estimation and prove the lower bound part of \prettyref{thm:main}.
It suffices to show that the minimax risk is lower bounded by the two terms in \prettyref{eq:main} separately. This follows from combining Propositions \ref{prop:lb1} and \ref{prop:lb2} below.

%\begin{theorem}
%%Let $k,n\in\naturals$. 
%For any $k \geq 2$ and $n \geq 1$, we have
%\begin{equation}
%   R^*(k,n) \gtrsim \pth{\frac{k }{n \log k}}^2 + \frac{\log^2 k}{n}
%   \label{eq:lb}
%\end{equation} 
%   \label{thm:lb}
%\end{theorem}

\begin{prop}
    For all $k,n\in \naturals$,
    \begin{equation}
        R^*(k,n) \gtrsim  \frac{\log^2 k}{n}.   
        \label{eq:lb1}
    \end{equation}
    \label{prop:lb1}
\end{prop}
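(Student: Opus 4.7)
The plan is to establish \prettyref{eq:lb1} by Le Cam's two-point method: exhibit two distributions $P_0,P_1\in\calM_k$ whose entropies differ by $\Omega(\log k/\sqrt{n})$ while $\TV(P_0^{\otimes n},P_1^{\otimes n})$ remains bounded away from $1$. Via \prettyref{eq:RRt} (or working directly with the multinomial model), this reduces to a one-parameter perturbation problem, so I do not need to track dependencies across coordinates.

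Concretely, I would take $P_0$ to place mass $p$ on the first symbol and spread $1-p$ uniformly over the remaining $k-1$ symbols, and let $P_1$ be the same shape with $p$ replaced by $p+\delta$. A convenient choice is $p=1/2$ when $k\geq 3$: then the function $p\mapsto H(p,\tfrac{1-p}{k-1},\ldots,\tfrac{1-p}{k-1})$ has derivative $\log\frac{1-p}{p(k-1)}=-\log(k-1)$ at $p=1/2$, so a first-order Taylor expansion gives $|H(P_0)-H(P_1)|=\delta\log(k-1)+O(\delta^2)$. (For $k=2$, I instead take $p=1/4$ so the derivative $\log 3$ is a nonzero constant, matching the trivial target $\log^2 2/n\asymp 1/n$.) On the other hand, since $P_0$ and $P_1$ share the same conditional distribution on the last $k-1$ atoms,
\[
D(P_0\Vert P_1)=d(p\Vert p+\delta)\leq \frac{\delta^2}{p(1-p)(1-O(\delta))},
\]
the ordinary binary KL divergence, which is $O(\delta^2)$ uniformly in $k$.

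Setting $\delta=c/\sqrt{n}$ for a small absolute constant $c>0$, tensorization gives $D(P_0^{\otimes n}\Vert P_1^{\otimes n})=nD(P_0\Vert P_1)\leq c'$, and Pinsker's inequality yields $\TV(P_0^{\otimes n},P_1^{\otimes n})\leq \sqrt{c'/2}$, which can be made at most $1/2$. Le Cam's two-point inequality then gives
\[
R^*(k,n)\geq \tfrac{1}{4}\bigl(H(P_0)-H(P_1)\bigr)^2\bigl(1-\TV(P_0^{\otimes n},P_1^{\otimes n})\bigr)\gtrsim \delta^2 \log^2 k\asymp \frac{\log^2 k}{n},
\]
as desired. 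The regime where $\delta=c/\sqrt{n}$ is not small (i.e.\ $n$ bounded) is handled separately: since $H(P)\in[0,\log k]$, the claim $R^*(k,n)\gtrsim \log^2 k/n$ is trivial for $n$ below an absolute constant, so we may assume $n$ is large enough that the Taylor expansion and Pinsker bound both apply with controlled constants.

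The main thing to be careful about is the range of validity of the Taylor expansion for the entropy and the upper bound on $d(p\Vert p+\delta)$; both must hold with absolute constants that do not depend on $k$. The construction above achieves this because $p$ is chosen to be a fixed constant bounded away from $0$ and $1$, so all derivatives of the binary entropy function at $p$ are $O(1)$, and the leading-order term in $\Delta H$ carries the full $\log(k-1)$ factor coming from the logarithm of the uniform-tail probabilities. No further obstacles are anticipated.
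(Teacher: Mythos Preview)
Your proposal is correct and follows essentially the same route as the paper: a two-point Le Cam argument using distributions of the shape ``one large atom, rest uniform,'' so that the KL divergence collapses to the binary divergence $d(p\Vert p+\delta)=O(\delta^2)$ while the entropy gap picks up the $\log(k-1)$ factor from the uniform tail. The paper's version uses the form $R^*(k,n)\geq \tfrac{1}{4}(H(P)-H(Q))^2\exp(-nD(P\Vert Q))$ with explicit non-asymptotic bounds valid for all $n\geq 1$, which sidesteps your case split for small $n$; note that your ``trivial for bounded $n$'' claim is not actually justified by $H(P)\in[0,\log k]$ alone, but your main construction already covers that regime once you observe that the second derivative $h''(p)=-1/(p(1-p))$ is $O(1)$ uniformly in $k$, so the Taylor remainder is controlled for all $\delta\leq c$.
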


\begin{prop}
%    For all $k,n\in \naturals$,
For all $k,n\in \naturals$,
    \begin{equation}
        R^*(k,n) \gtrsim \pth{\frac{k}{n \log k}}^2\vee 1.
        \label{eq:lb2}
    \end{equation}
    \label{prop:lb2}
\end{prop}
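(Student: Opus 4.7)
The plan is to apply Le Cam's two-point method in the Poissonized sampling model, combined with the duality between moment matching and best polynomial approximation referenced in Appendix~\ref{app:moments}. First I would pass to the Poissonized risk $\tilde R^*(k,n)$ using \prettyref{eq:RRt}, so that $N_1,\ldots,N_k$ become independent with $N_i\sim\Poi(np_i)$. Throughout, I would exhibit two priors $\Pi_0,\Pi_1$ on $\calM_k$ whose observation laws are close in total variation but whose expected entropies differ substantially, and then read off the lower bound $R^*\gtrsim \Delta^2$ from the two-point method, with $\Delta\asymp k/(n\log k)$.

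To build these priors, I would construct a pair of scalar measures $\nu_0,\nu_1$ on an interval $[0,M]$ with $M\asymp (\log k)/n$ and choose a cutoff degree $L\asymp \log k$. Using the LP dual I would arrange that $\int x^j\diff\nu_0=\int x^j\diff\nu_1$ for $j=0,1,\ldots,L$ and
\[
\int x\log\tfrac{1}{x}\,\diff(\nu_0-\nu_1) \;=\; 2 E_L\big(x\log\tfrac{1}{x};[0,M]\big),
\]
where $E_L$ is the degree-$L$ uniform best polynomial approximation error. Since $x\log(1/x)$ has modulus of continuity $\omega(\delta)\asymp\delta\log(1/\delta)$ near $0$, standard approximation-theoretic estimates give $E_L\asymp M/L^2\asymp 1/(n\log k)$. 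Rescaling so that the common first moment equals $1/k$, I would lift $\nu_b$ to a prior $\Pi_b$ on $\calM_k$ by drawing $p_1,\ldots,p_k\iiddistr\nu_b$ and then either conditioning on $|\sum_ip_i-1|$ being small or absorbing the discrepancy into a slack coordinate.

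Two estimates then drive the argument. On the statistical side, under $\Pi_b$ the Poissonized marginals are iid copies of the mixed Poisson $F_b\triangleq\Expect_{U\sim\nu_b}[\Poi(nU)]$; expanding the Poisson pmf in powers of $\lambda$ and using the matched moments gives
\[
\TV(F_0,F_1) \;\lesssim\; \frac{(nM)^{L+1}}{(L+1)!},
\]
which, with $nM\asymp\log k$ and $L\asymp\log k$, can be driven below $1/(2k)$ after tuning constants (using $L!\gtrsim (L/e)^L$); tensorization then shows $\TV$ between the joint laws of $(N_1,\ldots,N_k)$ is $o(1)$. On the functional side, $\Expect_{\Pi_b}[H(P)]=k\,\Expect_{\nu_b}[U\log(1/U)]$, so the priors have expected entropies separated by $\Delta\gtrsim k\cdot E_L\asymp k/(n\log k)$, while McDiarmid's inequality applied with per-coordinate Lipschitz bound of order $M\log(1/M)$ ensures $H(P)$ concentrates about its mean on a scale much smaller than $\Delta$. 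The two-point method then yields $R^*(k,n)\gtrsim \Delta^2\asymp (k/(n\log k))^2$; the same construction in the regime $n\lesssim k/\log k$ produces a gap $\Delta\gtrsim 1$ and hence $R^*\gtrsim 1$.

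The main obstacle I anticipate is the normalization: the iid draws do not literally produce probability vectors, and the trickiest part will be to argue that enforcing $\sum_i p_i=1$ (by conditioning or a slack correction) preserves both the TV-closeness of the observations and the entropy gap at the same order of magnitude, because moment matching and the Poisson-mixture tensorization are both sensitive to such modifications. A secondary subtlety is the sharp estimate $E_L(x\log(1/x);[0,M])\asymp M/L^2$: since $x\log(1/x)$ has a logarithmic singularity in its derivative at $0$, classical Jackson-type bounds must be supplemented by a direct analysis near the left endpoint to produce constants strong enough to make the TV bound $o(1/k)$ while keeping the entropy gap of order $k/(n\log k)$.
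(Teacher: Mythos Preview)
Your overall architecture---i.i.d.\ priors under Poisson sampling, moment matching to control total variation, concentration of $H(\sfP)$ around its mean, and duality with best polynomial approximation---is exactly the paper's. The TV bound you state and the concentration step are both fine. But there is a genuine gap precisely at the step you gloss over with ``Rescaling so that the common first moment equals $1/k$.''

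The dual of $E_L(\phi,[0,M])$ hands you probability measures $\nu_0,\nu_1$ on $[0,M]$ with matching moments of orders $1,\ldots,L$ and $\phi$-separation $2E_L\asymp M/L^2$, but it gives you \emph{no control} over the common first moment $\mu\in[0,M]$. If you rescale by $c=1/(k\mu)$ to force mean $1/k$, the $\phi$-separation becomes $c\cdot 2E_L$ (using that $\phi(cx)=c\phi(x)-c(\log c)x$ and that first moments match), so after summing over $k$ coordinates the entropy gap is $\Delta\asymp (1/\mu)\cdot E_L$. For $\Delta\gtrsim k/(n\log k)$ you need $\mu\lesssim 1/k$; but the TV bound after rescaling requires $n\cdot M/(k\mu)\lesssim L$, i.e.\ $\mu\gtrsim 1/k$. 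You need $\mu\asymp 1/k$ on the nose, and the Chebyshev extremal measures simply do not come with that guarantee. In the critical regime $n\asymp k/\log k$, a value $\mu\asymp M$ (which nothing rules out) costs you a factor $\log^2 k$ in $\Delta$ and hence $\log^4 k$ in the risk. The paper flags exactly this: ``we have little control over the value of the common mean \ldots\ normaliz[ing] $U,U'$ by their common mean \ldots\ results in a suboptimal lower bound.''

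The paper's fix is a change-of-measure trick (\prettyref{lmm:UUXY}) that you are missing: rather than approximating $\phi$ on $[0,M]$, approximate $\log$ on $[\eta,1]$ with $\eta=L^{-2}$, obtain extremal $X,X'\in[\eta,1]$, and define $U,U'$ via $\diff P_U(u)=(1-\Expect[\eta/X])\delta_0(\diff u)+(\alpha/u)\,\diff P_{\alpha X/\eta}(u)$. This construction makes $\Expect[U]=\Expect[U']=\alpha$ for \emph{any} prescribed $\alpha\leq 1$, preserves moment matching up to degree $L+1$, and yields $\Expect[\phi(U)]-\Expect[\phi(U')]=\alpha\cdot 2E_L(\log,[\eta,1])\gtrsim \alpha$ by \prettyref{lmm:sep}. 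Setting $\alpha\asymp k/(n\log k)$ then gives the sharp bound. Your identification of normalization as the main obstacle is correct, but the difficulty is not in the conditioning or slack-coordinate step (those are routine, as in \prettyref{lmm:poisson} and \prettyref{lmm:Rknet}); it is in decoupling the mean constraint from the approximation-theoretic dual, which requires this extra idea. Your secondary worry about $E_L(\phi,[0,M])\asymp M/L^2$ is, by contrast, not an issue: the scaling $E_L(\phi,[0,M])=M\,E_L(\phi,[0,1])$ together with the classical $E_L(\phi,[0,1])\asymp L^{-2}$ handles it---though note that the paper's route instead relies on the harder estimate $E_{cL}(\log,[L^{-2},1])\gtrsim 1$ (\prettyref{lmm:sep}).
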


 \prettyref{prop:lb1}, proved in Appendix \ref{sec:pf-lb1}, follows from a simple application of Le Cam's \emph{two-point method}:
%, which involves a binary hypothesis testing argument:
 If two input distributions $P$ and $Q$ are sufficiently close such that it is impossible to reliably distinguish between them using $n$ samples with error probability less than, say, $\frac{1}{2}$, 
% for which a sufficient condition is that $nD(P\|Q)$ is bounded, 
 then any estimator suffers a quadratic risk proportional to the separation of the functional values $|H(P)-H(Q)|^2$.

The remainder of this section is devoted to outlining the broad strokes for proving \prettyref{prop:lb2}. 
The proofs as well as the intermediate results are elaborated in Appendix \ref{sec:pf-lb}. 
Since it can be shown that the best lower bound provided by the two-point method is $\frac{\log^2 k}{n}$ (see \prettyref{rmk:HD}), 
proving \prettyref{eq:lb2} requires more powerful techniques.
%one needs to go beyond the two-point method. 
To this end, 
we use a generalized version of Le Cam's method involving two \emph{composite} hypotheses (also known as fuzzy hypothesis testing in \cite{Tsybakov09}):
\begin{equation}
    H_0: H(P) \leq t \quad \text{versus} \quad H_1: H(P) \geq t+ d,
    \label{eq:compHT}
\end{equation}
which is more general than the two-point argument using only simple hypothesis testing. Similarly, if we can establish that no test can distinguish \prettyref{eq:compHT} reliably, then we obtain a lower bound for the quadratic risk on the order of $d^2$.
By the minimax theorem, the optimal probability of error for the composite hypotheses test is given by the Bayesian version with respect to the least favorable priors. For  \prettyref{eq:compHT} we need to choose a pair of priors, 
%or equivalently, a pair of random distributions, 
%supported on the entropy sublevel and superlevel sets in
which, in this case, are distributions on the probability simplex $\calM_k$, to ensure that the entropy values are separated. 
%Morever, the averaging with respect to the priors reduces the total variation distance and makes the joint distribution of the samples indistinguishable.

\subsection{Construction of the priors}
\label{sec:prior}
The main idea for constructing the priors is as follows: First of all, the symmetry of the entropy functional implies that the least favorable prior must be permutation-invariant. 
This inspires us to use the following \emph{\iid construction}. For conciseness, we focus on the case of $n \asymp \frac{k}{\log k}$ for now and our goal is to obtain an $\Omega(1)$ lower bound.
 Let $U$ be a $\reals_+$-valued random variable with unit mean. 
%We can think of $U$ as a ``mother of distribution''.
Consider the random vector 
 \[
\sfP=  \frac{1}{k} (U_1,\ldots,U_k),
\] 
consisting of \iid copies of $U$. Note that $\sfP$ itself is \emph{not} a probability distribution; however, the key observation is that, since $\Expect[U]=1$, as long as the variance of $ U $ is not too large, the weak law of large numbers ensures that $\sfP$ is \emph{approximately} a probability vector. 
%Indeed, $|\frac{1}{k} \sum U_i - 1|$ does not exceed $\sqrt{\frac{1}{k} \var U}$ with constant probability. 
 Using a conditioning arguments we can show that the distribution of $\sfP$ can effectively serve as a prior. 
 To gain more intuitions, note that, for example, a deterministic $U = 1$ generates a uniform distribution over $[k]$, while a binary $U \sim \frac{1}{2} (\delta_0+\delta_2)$ generates 
a uniform distribution over roughly half the alphabet with the support set uniformly chosen at random.
From this viewpoint, the CDF of the random variable $\frac{U}{k}$ plays the role of the \emph{histogram of the distribution} $\sfP$, which is the central object in the Valiant-Valiant lower bound construction (see \cite[Definition 3]{VV10}).

Next we outline the main ingredients in implementing Le Cam's method:
\begin{enumerate}
    \item \emph{Functional value separation}: Define $\phi(x) \triangleq x \log \frac{1}{x}$. Note that
    \[
H(\sfP)= \sum_{i=1}^k\phi\pth{\frac{U_i}{k}} =\frac{1}{k}\sum_{i=1}^k\phi(U_i)+\frac{\log k}{k}\sum_{i=1}^kU_i,
\]
which concentrates near its mean $ \expect{H(\sfP)}=\expect{\phi(U)}+\expect{U}\log k$ by law of large numbers. Therefore, given another random variable $U'$ with unit mean, we can obtain $\sfP'$ similarly using \iid copies of $U'$. Then with high probability, $H(\sfP)$ and $H(\sfP')$ are separated by the difference of their mean values, namely,
%\begin{equation}
%|\expect{H(\sfP)}-\expect{H(\sfP')}| = |\expect{\phi(U)} - \expect{\phi(U')}|, 
%   \label{eq:hu1}
%\end{equation}
\[
\expect{H(\sfP)}-\expect{H(\sfP')} = \expect{\phi(U)} - \expect{\phi(U')},
\]
which we aim to maximize.

\item \emph{Indistinguishably}: 
Note that given $P$, the sufficient statistics satisfy $N_i\inddistr \Poi(n p_i)$. Therefore, if $P$ is drawn from the distribution of $\sfP$, then $N=(N_1,\ldots,N_k)$ are \iid distributed according the \emph{Poisson mixture} $\Expect[\Poi(\frac{n}{k} U)]$. Similarly, if $P$ is drawn from the prior of $\sfP'$, then $N$ is distributed according to $(\Expect[\Poi(\frac{n}{k} U')])^{\otimes k}$. 
To establish the impossibility of testing, we need the total variation distance between the two $k$-fold product distributions to be strictly bounded away from one, for which a sufficient condition is
\begin{equation}
    \TV(\Expect[\Poi(n U/k)],\Expect[\Poi(n U'/k)]) \leq c/k
    \label{eq:hu2}
\end{equation}
for some $c<1$.

%In view of the fact that $\TV(Q^{\times})$, 
\end{enumerate}
To conclude, we see that the \iid construction fully exploits the independence blessed by the Poisson sampling, thereby reducing the problem to \emph{one dimension}. This allows us to sidestep the difficulty encountered in \cite{VV10} when dealing with fingerprints which are high-dimensional random vectors with dependent entries.
%: introduce  effectively single-letterize the problem boils down to one dimension.

What remains is the following scalar problem: choose $U,U'$ to maximize $|\expect{\phi(U)} - \expect{\phi(U')}|$ subject to the constraint \prettyref{eq:hu2}.
A commonly used proxy for bounding the total variation distance is \emph{moment matching}, \ie, $\expect{U^j} = \expect{U'^j}$ for all $j = 1,\ldots,L$. Together with $L_\infty$-norm constraints, a sufficient large degree $L$ ensures the total variation bound \prettyref{eq:hu2}.
Combining the above steps, our lower bound is proportional to the value of the following convex optimization problem (in fact, infinite-dimensional linear programming over probability measures):
\begin{equation}
    \begin{aligned}
     \calF_L(\lambda) \triangleq  \sup & ~ \expect{\phi(U)} - \expect{\phi(U')}  \\
        \text{s.t.}     
        & ~ \expect{U} = \expect{U'}=1 \\
        & ~ \expect{U^j} = \expect{U'^j}, \quad j = 1,\ldots,L, \\
        & ~ U,U' \in 
%        \nb{?\sth{0}\cup[1,\lambda]} 
[0, \lambda]
    \end{aligned}
    \label{eq:FL}
\end{equation}
%where $L \asymp \log k$ and $\lambda \asymp \frac{k L}{n}$.
for some appropriately chosen $L \in \naturals$ and $\lambda > 1$ depending on $n$ and $k$. 
%This problem is actually equivalent to another problem without restriction on the mean: (see \prettyref{app:opt} for a short proof.) 
%\begin{equation}
%    \begin{aligned}
%     \calF_L(\lambda) \triangleq  \sup & ~ \expect{\log\frac{1}{X}} - \expect{\frac{1}{X'}}  \\
%        \text{s.t.}     
%        & ~ \expect{X^j} = \expect{X'^j}, \quad j = 1,\ldots,L+1, \\
%        & ~ X,X' \in [\eta,1]
%    \end{aligned}
%\end{equation}
%where $ \eta=\lambda^{-1} $.

Finally, we connect the optimization problem \prettyref{eq:FL} to the machinery of \emph{best polynomial approximation}: Denote by $\mathcal{P}_L$ the set of polynomials of degree $L$ and
\begin{equation}
E_L(f,I)    \triangleq \inf_{p\in\mathcal{P}_L}\sup_{x\in I }|f(x)-p(x)|,
    \label{eq:EL}
\end{equation}
which is the best uniform approximation error of a function $f$ over a finite interval $I$ by polynomials of degree $L$. We prove that 
\begin{equation}
    \calF_L(\lambda) \geq 2 E_L(\log, [1/\lambda,1]).
    \label{eq:FLEL}
\end{equation}
Due to the singularity of the logarithm at zero, the approximation error can be made bounded away from zero if $\lambda$ grows \emph{quadratically} with the degree $L$ (see \prettyref{app:error}). 
Choosing $L \asymp \log k$ and $\lambda \asymp \log^2 k$ leads to the impossibility of consistent estimation for $n \asymp \frac{k}{\log k}$.
For $n \gg \frac{k}{\log k}$, the lower bound for the quadratic risk follows from relaxing the unit-mean constraint in \prettyref{eq:FL} to $\expect{U} = \expect{U'} \leq 1$ and a simple scaling argument. We refer to the proofs in Appendix \ref{sec:pf-lb} for details.
Analogous construction of priors and proof techniques have subsequently been used in \cite{JVHW15} to obtain sharp minimax lower bound for estimating the power sum in which case the $\log p$ function is replaced by $p^\alpha$.

%It should be remarked that it is well-known that moment matching problem is closed related to best polynomial approximation (see \prettyref{app:moments} for a self-contained account). However, \prettyref{eq:FL} requires unit mean while in general we have no control over the mean of the distributions. In order to achieve a prescribed mean, we turn to the approximation of $\log x$ then relate back to $x \log x$.

%Main steps
%\begin{enumerate}
%   \item Moment matching and $L_\infty$ norm constraint implies the corresponding Poisson mixture.
%   \item It is well-known that moment matching is closed related to best polynomial approximation (see \prettyref{app:moments} for a self-contained account).
%   However, we have no control over the mean of the distributions. In order to achieve unit mean, we turn to the approximation of $\log x$ then relate back to $x \log x$.
%\end{enumerate}

%\newpage
\section{Optimal estimator via best polynomial approximation}
    \label{sec:upper}
    
%    \nb{Validate Poisson Sampling} 
As observed in various previous results as well as suggested by the minimax lower bound in \prettyref{sec:lb}, the major difficulty of entropy estimation 
%when the sample size does not far exceed the alphabet size. 
lies in the bias due to insufficient samples.
Recall that the entropy is given by $H(P)=\sum \phi(p_i)$, where $\phi(x) = x \log \frac{1}{x}$. 
It is easy to see that the expectation of any estimator $T: [k]^n \to \reals_+$ is a polynomial of the underlying distribution $P$ and, consequently, no unbiased estimator for the entropy exists (see, \eg, \cite[Proposition 8]{Paninski03}). This observation inspired us to approximate $\phi$ by a polynomial of degree $L$, say $g_L$, for which we pay a price in bias as the approximation error but yield the benefit of zero bias. While the approximation error clearly decreases with the degree $L$, it is not unexpected that the variance of the unbiased estimator for $g_L(p_i)$ increases with $L$ as well as the corresponding mass $p_i$. Therefore we only apply the polynomial approximation scheme to small $p_i$ and directly use the plug-in estimator for large $p_i$, since the signal-to-noise ratio is sufficiently large.

    Next we describe the estimator in detail. In view of the relationship \prettyref{eq:RRt} between the risks with fixed and Poisson sample size, we shall assume the Poisson sampling model to simplify the analysis, where we first draw $ n'\sim \Poi(2n) $ and then draw $ n' $ \iid samples $ X=(X_1,\dots,X_{n'}) $ from $P$.
    We split the samples equally and use the first half for selecting to use either the polynomial estimator or the plug-in estimator and the second half for estimation.
    Specifically, for each sample $ X_i $ we draw an independent fair coin $ B_i\iiddistr \Bern\pth{\frac{1}{2}} $. We split the samples $ X $ according to the value of $ B $ into two sets and count the samples in each set separately. That is, we define $ N=(N_1,\dots,N_k) $ and $ N'=(N_1',\dots,N_k') $ by
    \begin{align*}
%        & N_i=\sum_{j=1}^{n'}\indc{X_j=i}\indc{B_j=0},\\
%        & N_i'=\sum_{j=1}^{n'}\indc{X_j=i}\indc{B_j=1}.
  N_i=\sum_{j=1}^{n'}\indc{X_j=i}\indc{B_j=0},\quad  N_i'=\sum_{j=1}^{n'}\indc{X_j=i}\indc{B_j=1}.
    \end{align*}
    Then $ N$ and $N'$ are independent, where $ N_i,N_i'\iiddistr\Poi\pth{np_i} $.

    Let $c_0,c_1,c_2>0$ be constants to be specified. Let $L= \floor{c_0\log k}$.
    Denote the best polynomial of degree $ L $ to uniformly approximate $ x\log\frac{1}{x} $ on $ \qth{0,1} $ by 
    \begin{equation}
    p_L(x)=\sum_{m=0}^{L}a_mx^m .
        \label{eq:pL}
\end{equation} Through a change of variables, we see that the best polynomial of degree $ L $ to approximate $ x\log\frac{1}{x} $ on $ [0, \frac{c_1\log k}{n}]$ is 
    \[
%    P_L(x)\triangleq % \frac{c_1\log k}{n}p_L\pth{\frac{xn}{c_1\log k}}+x\log\frac{n}{c_1\log k}=
%    \sum_{m\neq 1}\frac{a_mn^{m-1}}{\pth{c_1\log k}^{m-1}}x^m+\pth{a_1+\log\frac{n}{c_1\log k}}x.
    P_L(x)\triangleq     \sum_{m=0}^L \frac{a_mn^{m-1}}{\pth{c_1\log k}^{m-1}}x^m+\pth{ \log\frac{n}{c_1\log k}} x.
    \]
    Define the factorial moment by $ (x)_m\triangleq\frac{x!}{(x-m)!} $, which gives an unbiased estimator for the monomials of the Poisson mean: $ \Expect[(X)_m]=\lambda^m $ where $ X\sim\Poi(\lambda) $. Consequently, the following polynomial of degree $L$
\begin{equation}
        g_L(N_i)
    \triangleq \frac{1}{n}\pth{\sum_{m=0}^L \frac{a_m}{\pth{c_1\log k}^{m-1}} (N_i)_{m}
    + \pth{ \log\frac{n}{c_1\log k}} N_i}
        \label{eq:gL}
\end{equation}    
    is an unbiased estimator for $ P_L(p_i) $. 

    Define a preliminary estimator of entropy $ H(P)=\sum_{i=1}^{k}\phi(p_i) $ by
    \begin{equation}
        \tilde{H}\triangleq\sum_{i=1}^{k}\pth{g_L(N_i)\indc{N_i'\le c_2\log k}
            +\pth{\phi\pth{\frac{N_i}{n}}+\frac{1}{2n}}\indc{N_i'>c_2\log k}},
            \label{eq:tH}
    \end{equation}
    where we apply the estimator from polynomial approximation if $ N_i'\le c_2\log k $ or the bias-corrected plug-in estimator otherwise (c.f. the asymptotic expansion \prettyref{eq:Hplug-bias} of the bias under the original sampling model). 
    In view of the fact that $0 \leq H(P) \leq \log k $ for any distribution $ P $ with alphabet size $ k $, we define our final estimator by:
    \begin{equation*}
        \hat{H}
        =(\tilde{H}\vee 0)  \wedge \log k,
    \end{equation*}
    Since \prettyref{eq:tH} can be expressed in terms of a linear combination of the fingerprints \prettyref{eq:fp} of the second sample and the coefficients can be pre-computed using fast best polynomial approximation algorithms (\eg, the Remez algorithm), it is clear that the estimator $\hat{H}$ can be computed in linear time in $n$.

    The next result, proved in Appendix \ref{sec:pf-ub} gives an upper bound on the above estimator under the Poisson sampling model, which, in view of the right inequality in \prettyref{eq:RRt} and \prettyref{prop:Rplug-rate}, implies the upper bound on the minimax risk $R^*(n,k)$ in \prettyref{thm:main}. 
    \begin{prop}
        \label{prop:err-rate}
        Assume that $\log n \leq  C \log k$ for some constant $C>0$. Then there exists $c_0,c_1,c_2$ depending on $C$ only, such that
        \[
        \sup_{P \in \calM_k} \Expect[(H(P)-\hat{H}(N))^2]
        \lesssim \pth{\frac{k}{n\log k}}^2+\frac{\log^2 k}{n},
        \]
        where $N=(N_1,\ldots,N_k) \inddistr \Poi(n p_i)$.
        % Moreover, if $\log n \lesssim \log k$, then
        % \[
        % R^*(k,n) \leq 
        % \]
    \end{prop}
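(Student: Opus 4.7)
The plan is to work entirely under the Poisson sampling model and transfer the conclusion back via the right inequality in \prettyref{eq:RRt}. Since $H(P)\in[0,\log k]$, the clipping $\tilde H\mapsto\hat H$ can only decrease the squared error, so it suffices to bound $\Expect[(H(P)-\tilde H)^2]$. Thanks to Poissonization, the pairs $\{(N_i,N_i')\}_{i\in[k]}$ are mutually independent with $N_i\indep N_i'$ for each $i$; writing $\tilde H=\sum_i T_i$ with $T_i=g_L(N_i)\indc{N_i'\le c_2\log k}+(\phi(N_i/n)+\frac{1}{2n})\indc{N_i'>c_2\log k}$, where $\phi(x)=x\log\frac{1}{x}$, the MSE decomposes as
\[
\Expect[(H(P)-\tilde H)^2]=\Bigl(\sum_i(\Expect[T_i]-\phi(p_i))\Bigr)^2+\sum_i\Var(T_i),
\]
so the analysis reduces to a per-coordinate study.

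The core idea is a case split based on the magnitude of $p_i$ relative to the critical scale $\log k/n$. Let $\calS=\{i:p_i\le c_1\log k/n\}$ and $\calL=\{i:p_i>c_1\log k/n\}$, and choose $c_2<c_1$ so that Chernoff bounds for $N_i'\sim\Poi(np_i)$ guarantee $\Prob[N_i'>c_2\log k]\le k^{-\alpha}$ on $\calS$ and $\Prob[N_i'\le c_2\log k]\le k^{-\alpha}$ on $\calL$ for some $\alpha>0$. The threshold $c_1\log k$ matches the interval on which $P_L$ is designed to approximate $\phi$, and the gap $c_1>c_2$ ensures that on either region the \emph{wrong} branch fires with only polynomially small probability; combined with the independence $N_i\indep N_i'$, any crude deterministic bound on $|g_L(N_i)|$ and $|\phi(N_i/n)|$ that is at most polynomial in $k,n$ makes the resulting cross terms negligible.

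For the bias on $\calS$, the polynomial branch contributes $|P_L(p_i)-\phi(p_i)|\le E_L(\phi,[0,c_1\log k/n])=\frac{c_1\log k}{n}E_L(\phi,[0,1])\asymp\frac{1}{n\log k}$, via the scaling identity $\phi(x)=\Delta\phi(x/\Delta)-x\log\Delta$ and the classical approximation rate $E_L(x\log\frac{1}{x},[0,1])\asymp L^{-2}$ with $L\asymp\log k$. Summing over $|\calS|\le k$ gives $\sum_{i\in\calS}|\Expect[T_i]-\phi(p_i)|\lesssim k/(n\log k)$, yielding the $(k/(n\log k))^2$ term in \prettyref{eq:main}. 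On $\calL$, a Taylor expansion of $\phi$ at $p_i$ together with the $\frac{1}{2n}$ correction cancels the leading $-1/(2n)$ bias of the plug-in (cf.\ \prettyref{eq:Hplug-bias}) and leaves $O(1/(n^2p_i))$; since $|\calL|\le n/(c_1\log k)$, the total bias on $\calL$ is $O(1/\log^2 k)$, dominated when squared. For the variance on $\calL$, a standard Poisson calculation for $\phi(N_i/n)$ yields $\Var(T_i)\lesssim p_i\log^2 k/n$ (using $\log(1/p_i)\lesssim\log k$, which follows from $p_i>c_1\log k/n$ and $\log n\le C\log k$), summing to $O(\log^2 k/n)$.

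The main obstacle is bounding $\Var(g_L(N_i))$ on $\calS$. The estimator $g_L$ is a linear combination of the factorial moments $(N_i)_m$ with coefficients $a_m/(n(c_1\log k)^{m-1})$, and while the $a_m$ can individually be as large as $2^{O(L)}$ (by Markov brothers' inequality applied to the best polynomial approximant of $x\log\frac{1}{x}$ on $[0,1]$), the ratio $\lambda/(c_1\log k)\le 1$ on $\calS$ (where $\lambda=np_i$) keeps the Poisson second moments $\Expect[(N_i)_m(N_i)_\ell]=\sum_{j\ge 0}\binom{m}{j}\binom{\ell}{j}j!\lambda^{m+\ell-j}$ controlled. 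With $c_0$ chosen small enough that $2^{O(L)}=k^{O(c_0)}$ is beaten by a fixed positive power of $k$, and $c_1$ large enough, the cross terms in the expansion of $\Var(g_L(N_i))$ combine to a per-coordinate bound of the form $O(p_i\log^2 k/n)+O(k^{-\delta}/n^2)$ for some $\delta>0$. Summing over $\calS$ yields the required $O(\log^2 k/n)$, and the joint tuning of $c_0,c_1,c_2$ to simultaneously secure (a) small approximation error, (b) coefficient blow-up dominated by factorial-moment decay, and (c) exponential control on wrong-branch events is the delicate bookkeeping that completes the proof.
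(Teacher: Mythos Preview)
Your overall architecture---bias--variance split, polynomial approximation on small masses, bias-corrected plug-in on large masses---is exactly the paper's, but two concrete steps do not close.

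\textbf{Wrong-branch control.} With $\calS=\{i:p_i\le c_1\log k/n\}$ and $c_2<c_1$, the claim $\Prob[N_i'>c_2\log k]\le k^{-\alpha}$ on $\calS$ is false: at the boundary $np_i=c_1\log k$ the threshold $c_2\log k$ lies \emph{below} the Poisson mean, so $\Prob[\Poi(c_1\log k)>c_2\log k]\ge 1/2$. A single cutoff for $p_i$ cannot make both wrong-branch probabilities small simultaneously. The paper resolves this with a third constant $c_3<c_2<c_1$ and conditions on the event
\[
E=\bigcap_i\{N_i'\le c_2\log k\Rightarrow p_i\le \tfrac{c_1\log k}{n}\}\ \cap\ \bigcap_i\{N_i'>c_2\log k\Rightarrow p_i>\tfrac{c_3\log k}{n}\}.
\]
On the overlap region $c_3\log k/n<p_i\le c_1\log k/n$ either branch may fire and \emph{both} analyses are valid; on $E^c$ the clipping bound $|\hat H-H|\le\log k$ together with $\Prob[E^c]\le k^{-\alpha}$ (Chernoff, now with the correct inequalities) suffices.

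\textbf{Bias on $\calL$.} You bound the plug-in bias by $|\calL|\cdot O(1/(n\cdot c_1\log k))$ and then use $|\calL|\le n/(c_1\log k)$ to get $O(1/\log^2 k)$. But $(1/\log^2 k)^2$ is \emph{not} dominated by $(k/(n\log k))^2+\log^2 k/n$ once $n\gg k\log k$ (try $n=k^2$, allowed under $\log n\le C\log k$). The paper instead keeps the cruder count $|I_2|\le k$, giving $\sum_{i\in I_2}1/(n^2 p_i)\lesssim k/(n\log k)$, which squares to exactly the first term of the target. A related mis-calibration appears in your variance claim on $\calS$: with $|a_m|\lesssim 2^{3L}$ the achievable per-coordinate bound for the polynomial part is $O(k^{2\epsilon}/n^2)$ (not $O(k^{-\delta}/n^2)$), where $\epsilon=c_0\log 8+\sqrt{c_0c_1}\log(2e)$ is made small by tuning $c_0$; the sum $O(k^{1+2\epsilon}/n^2)$ is then absorbed by $(k/(n\log k))^2$, not by $\log^2 k/n$.
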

    
    \begin{remark}
    \label{rmk:linear}
    The benefit of sample splitting is that we can first condition on the realization of $N'$ and treat the indicators in \prettyref{eq:tH} as deterministic, which has also been used in the entropy estimator in \cite{JVHW15}.
Although not ideal operationally or aesthetically, this is a frequently-used idea in statistics and learning to simplify the analysis (also known as sample cloning in the Gaussian model \cite{Nemirovski03,CL11})
%    (see, \eg, the aggregation estimator in \cite{CMW12}) 
    at the price of losing half of the sample thereby inflating the risk by a constant factor.
    It remains to be shown whether the optimality result in \prettyref{prop:err-rate} continues to hold if we can use the same sample in \prettyref{eq:tH} for both selection and estimation.

%    \nb{Does \cite[Appendix A]{Paninski03} say anything about the optimality of estimators linear in histogram of histograms?}
%     The question whether optimal estimator is linear is raised in \cite{VV13}. So our construction gives an optimal estimator which is linear. 

    Note that the estimator \prettyref{eq:tH} is \emph{linear} in the fingerprint of the second half of the sample.              
We also note that for estimating other distribution functionals, \eg, support size \cite{WY2015}, it is possible to circumvent sample splitting by directly using a linear estimator obtained from best polynomial approximation. Similar ideas can be used to construct entropy estimators which are linear in the fingerprints and minimax rate-optimal \cite{yang-msthesis}.     
        
\end{remark}
  
\begin{remark}
The estimator \prettyref{eq:tH} uses the polynomial approximation of $x\mapsto x\log \frac{1}{x}$ for those masses below $\frac{\log k}{n}$ and the bias-corrected plug-in estimator otherwise. In view of the fact that the lower bound in \prettyref{prop:lb2} is based on a pair of randomized distributions whose masses are below $\frac{\log k}{n}$ (except for possibly a fixed large mass at the last element), this suggests that the main difficulty of entropy estimation lies in those probabilities in the interval $[0,\frac{\log k}{n}]$, which are individually small but collectively contribute significantly to the entropy.     See \prettyref{rmk:prior} and the proof of \prettyref{prop:lb2} for details.
    \label{rmk:cutoff}
\end{remark}

% \begin{remark}
%     In \cite[Theorem 1]{VV13}, their estimator via linear programming provides quadratic risk of $ \frac{1}{c} $ when $ n=c\frac{k}{\log k}. $ That is not tight in view of \cite[Corollary 10]{VV11} where the lower bound on minimax quadratic risk is $ \frac{1}{c^2} $. While our estimator shows the optimal quadratic risk. 
%     \label{rmk:tight}
% \end{remark}
    
%    \input{numerical}
    \begin{remark}
    % \nbr{
    %     Adaptivity to the alphabet size and setting $g(0)=0$. Include a plot to explain that setting $g(0)=0$, although having no impact on minimax rate optimality, leads to underbias.}
%    The entropy estimator in \cite{JVHW15} is constructed using similar ideas of best polynomial approxiamtion of the $x \mapsto x \log x$ function and sample splitting, with the following distinctions. First, the indicator function in \prettyref{eq:tH} is replaced by a smooth cutoff function; this seems to neither improve the error rate nor help the empirical performance (see \prettyref{sec:num}).     
    The estimator in \prettyref{eq:tH} depends on the alphabet size $k$ only through its logarithm; therefore the dependence on the alphabet size is rather insensitive. In many applications such as neuroscience the discrete data are obtained from quantizing an analog source and $k$ is naturally determined by the quantization level \cite{SLSKB97}. Nevertheless, it is also desirable to obtain an optimal estimator that is adaptive to $k$. To this end, 
    we can replace all $\log k$ by $\log n$ and define the final estimator by $ \tilde{H}\vee 0 $. Moreover, we need to set $ g_L(0)=0 $ since the number of unseen symbols is unknown. 
    Following \cite{JVHW15}, we can simply let the constant term $ a_0$ of the approximating polynomial \prettyref{eq:pL} to zero and obtained the corresponding unbiased estimator \prettyref{eq:gL} through factorial moments, which satisfies $ g_L(0)=0 $ by construction.\footnote{Alternatively, we can directly set $ g_L(0)=0 $ 
   and use the original $g_L(j)$ in \prettyref{eq:gL} when $j \geq 1$. Then the bias becomes $ \sum_i(P_L(p_i)-\phi(p_i)-\prob{N_i=0}P_L(0)) $. In sublinear regime that $ n=o(k) $, we have $ \sum_i\prob{N_i=0}=\Theta(k) $ therefore this modified estimator also achieves the minimax rate.}
    The bias upper bound becomes $ \sum_i(P_L(p_i)-\phi(p_i)-P_L(0)) $ which is at most twice of original upper bound since $ P_L(0)\le \linf{P_L-\phi} $.
    \begin{figure}[ht]
        \centering
\includegraphics[width=\linewidth]{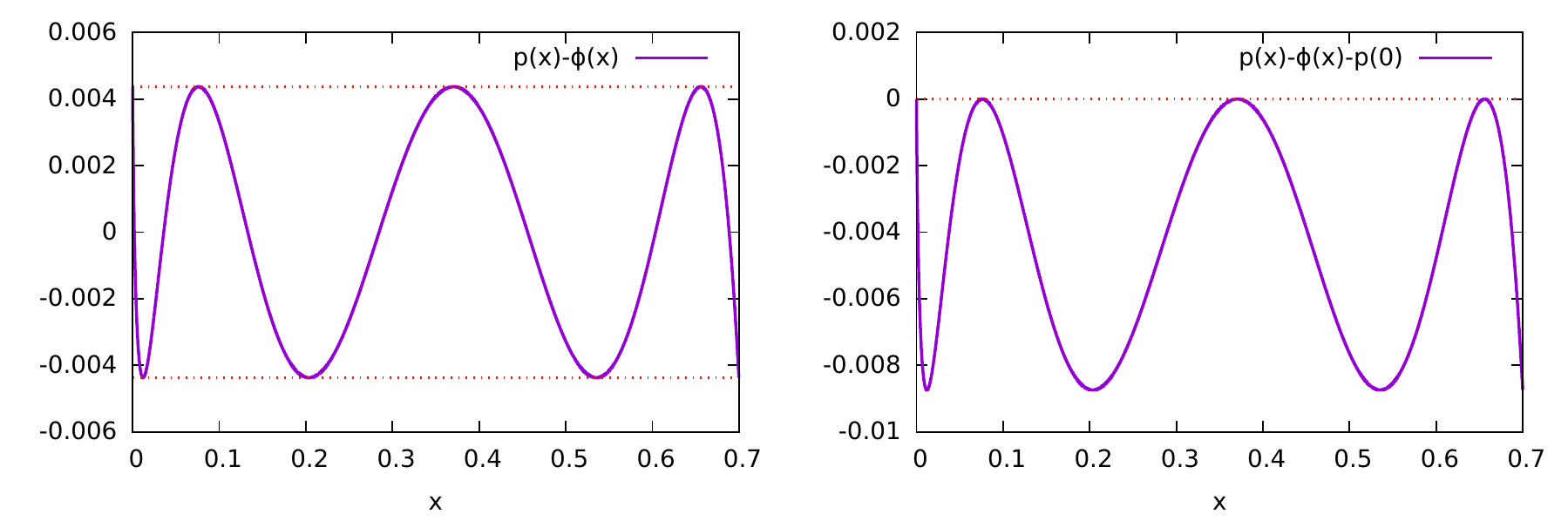}
        \caption{Bias of the degree-$ 6 $ polynomial estimator with and without the constant term.\label{fig:underbiased} }
    \end{figure}
    The minimax rate in \prettyref{prop:err-rate} continues to hold in the regime of $\frac{k}{\log k} \lesssim n \lesssim \frac{k^2}{\log^2 k}$, where the plug-in estimator fails to attain the minimax rate.
    In fact, $ P_L(0) $ is always strictly positive and coincides with the uniform approximation error (see \prettyref{app:end} for a short proof).
    Therefore removing the constant term leads to $ g_L(N_i) $ which is always underbiased as shown in \prettyref{fig:underbiased}.
    A better choice for adaptive estimation is to find the best polynomial 
    satisfying $ p_L(0)=0 $ that uniformly approximates $ \phi $.
    \label{rmk:adaptive}
\end{remark}

\section{Numerical experiments}
\label{sec:num}

In this section we compare the performance of our estimator described in \prettyref{sec:upper} to other estimators using synthetic data.\footnote{The C++ implementation of our estimator is available at \url{https://github.com/Albuso0/entropy}.}
Note that the coefficients of best polynomial to approximate $ \phi $ on $ [0,1] $ are independent of data so they can be pre-computed and tabulated to facilitate the computation in our estimation.
It is very efficient to apply Remez algorithm to obtain those coefficients which provably has linear convergence for all continuous functions % and quadratic convergence if the function to approximate satisfy some smoothness conditions
(see, \eg, \cite[Theorem 1.10]{petrushev2011rational}).
 % \nbr{So, the real question is: does $\phi$ satisfies this condition or not?} \nb{No.} \nbr{Why haven't you edited yet?}
Considering that the choice of the polynomial degree is logarithmic in the alphabet size, we pre-compute the coefficients up to degree $ 400 $ which suffices for practically all purposes.
In the implementation of our estimator we replace $ N_i' $ by $ N_i $ in \prettyref{eq:tH} without conducting sample splitting.
Though in the proof of theorems we are conservative about the constant parameters $ c_0,c_1,c_2 $, in experiments we observe that the performance of our estimator is in fact not sensitive to their value within the reasonable range.
In the subsequent experiments the parameters are fixed to be $ c_0=c_2=1.6, c_1=3.5 $.

\begin{figure}[!h]
    \centering
    \includegraphics[width=0.9\linewidth]{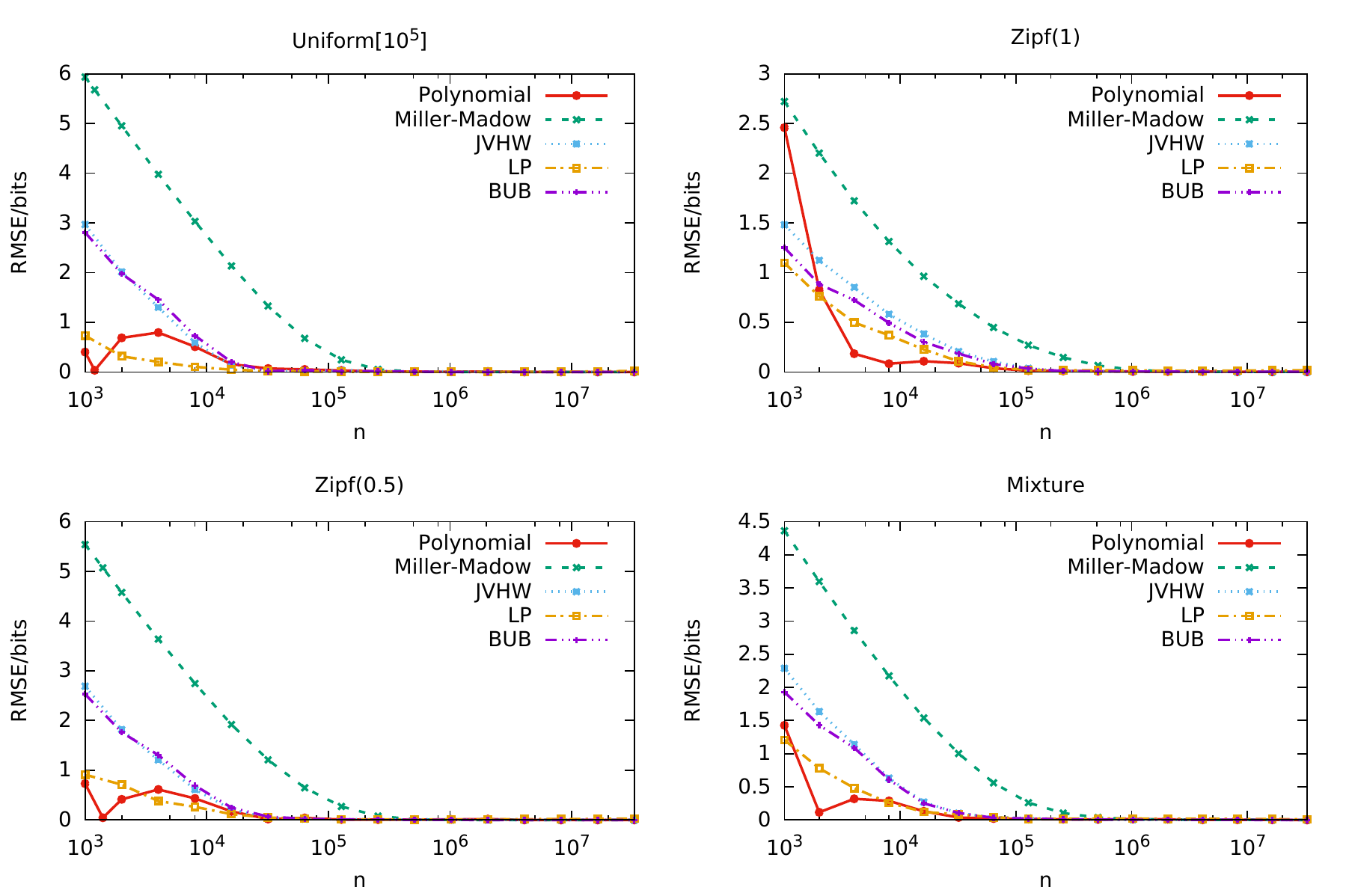}
    \caption{Performance comparison with sample size $n$ ranging from $ 10^3 $ to $3 \times 10^7 $. \label{fig:full} }
\end{figure}

\begin{figure}[!h]
    \centering
\includegraphics[width=0.9\linewidth]{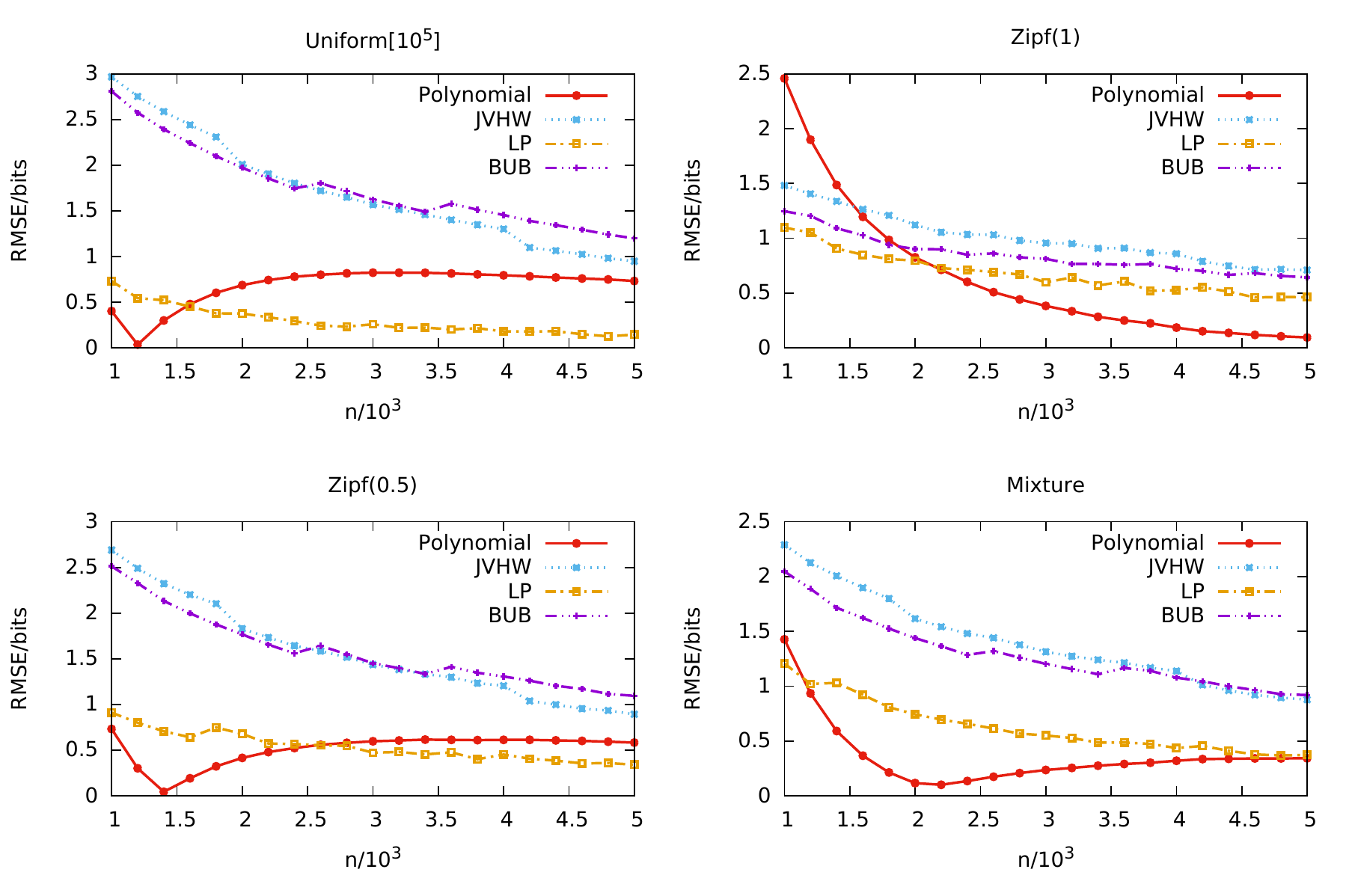}
    \caption{Performance comparison when sample size $n$ ranges from $ 1000 $ to $ 5000 $. \label{fig:scarce} }
\end{figure}

We generate data from four types of distributions over an alphabet of $ k=10^5 $ elements, namely, 
the uniform distribution with $ p_i=\frac{1}{k} $, Zipf distributions with $ p_i\propto i^{-\alpha} $ and $ \alpha$ being either $1$ or $0.5 $, and an ``even mixture'' of geometric distribution and Zipf distribution where  for the first half of the alphabet $ p_i \propto 1/i$ and  for the second half $ p_{i+k/2} \propto (1-\frac{2}{k})^{i-1} $, $ 1\le i\le \frac{k}{2} $.
%$ p_i=\frac{1/i}{2\sum_{j=1}^{k/2}1/j} $ and $ p_{k/2+i}=\frac{\frac{2}{k}(1-\frac{2}{k})^{i-1}}{2\sum_{j=1}^{k/2}\frac{2}{k}(1-\frac{2}{k})^{j-1}} $ for $ 1\le i\le \frac{k}{2} $.
Using parameters mentioned above, the approximating polynomial has degree $ 18 $, the parameter determining the approximation interval is $ c_1\log k=40 $, and the threshold to decide which estimator to use in \prettyref{eq:tH} is $ 18 $, namely, we apply the polynomial estimator $ g_L $ if a symbol appeared at most 18 times and the  bias-corrected plug-in estimator otherwise.
After obtaining the preliminary estimate $ \tilde{H} $ in \prettyref{eq:tH}, our final output is $ \tilde{H}\vee 0 $.\footnote{We can, as in \prettyref{prop:err-rate}, output $ (\tilde{H}\vee 0)\wedge \log k $, which yields a better performance. We elect not to do so for a stricter comparison.} 
%In order to demonstrate the overestimate of $ \tilde{H} $ is unobservable in the uniform distribution case.} 
Since the plug-in estimator suffers from severe bias when samples are scarce,
% (see, \eg, \cite{VV13,JVHW15}).
we forgo the comparison with it to save space in the figures and instead compare with its bias-corrected version, \ie, the Miller-Madow estimator \prettyref{eq:miller-madow}.
We also compare the performance with the linear programming estimator in \cite{VV13}, the best upper bound (BUB) estimator \cite{Paninski03}, and the estimator based on similar polynomial approximation techniques\footnote{The estimator in \cite{JVHW15} uses a smooth cutoff function in lieu of the indicator function in \prettyref{eq:tH}; this seems to improve neither the theoretical error bound nor the empirical performance.} proposed by \cite{JVHW15} 
 using their implementations with default parameters.
Our estimator is implemented in C++ which is much faster than those from \cite{VV13,JVHW15,Paninski03} implemented in MATLAB so the running time comparison is ignored.
We notice that the linear programming in \cite{VV13} is much slower than the polynomial estimator in \cite{JVHW15}, especially when the sample size becomes larger.

We compute the root mean squared error (RMSE) for each estimator over $ 50 $ trials.
The full performance comparison is shown in \prettyref{fig:full} where the sample size ranges from one percent to $ 300 $ folds of the alphabet size.
In \prettyref{fig:scarce} we further zoom into the more interesting regime of fewer samples with the sample size ranging from one to five percent of the alphabet size. In this regime our estimator as well as those from \cite{VV13,JVHW15,Paninski03} outperform the classical Miller-Madow estimator significantly; furthermore, our estimator performs better than those in \cite{JVHW15,Paninski03} in most cases tested and comparably with that in \cite{VV13}.
When the samples are abundant all estimators achieve very small error; however, it has been empirically observed in \cite{JVHW15} that the performance of linear programming starts to deteriorate when the sample size is very large, which is also observed in our experiments (see \cite{yang-msthesis}). The specific figures of that regime are ignored since the absolute errors are very small and the even the plug-in estimator without bias correction is accurate.
By \prettyref{eq:tH}, for large sample size our estimator tends to the Miller-Madow estimator when every symbol is observed many times.

\appendix
\section{A risk bound for the Poisson Sampling model}
\label{app:poisson}

%Since $0 \leq R^*(k,m) \leq \log^2 k$, in view of the fact that $m \mapsto R^*(k,m)$ is decreasing and applying Markov's inequality and 
%the Chernoff bound (see, \eg, \cite[Theorem 5.4]{MU06}, we have

%Here we prove the inequality \prettyref{eq:RRt} relating the minimax quadratic risk of the entropy estimation under the usual iid sampling model \prettyref{eq:Rkn} to that under the Poisson sampling model \prettyref{eq:Rknt}. Recall that $0 \leq R^*(k,m) \leq \log^2 k$ and $m \mapsto R^*(k,m)$ is decreasing.
%Therefore
%\[
%\tilde{R}^*(k,2n)=\sum_{m > n}R^*(k,m)\poi(n,m)+\sum_{0\le m \le n}R^*(k,m)\poi(n,m)\le R^*(k,n)+\prob{\Poi(2n)\le n}\log^2k.
%\]
%Then Chernoff bound (see, \eg, \cite[Theorem 5.4]{MU06}) yields  $ \prob{\Poi(2n)\le n}\le \exp(-(1-\log 2)n) $, which implies the left inequality of \prettyref{eq:RRt}.
%
%The right inequality of \prettyref{eq:RRt} follows from Markov's inequality:
%\[
%\tilde{R}^*(k,n/2)=\sum_{m > n}R^*(k,m)\poi(n/2,m)+\sum_{0\le m \le n}R^*(k,m)\poi(n/2,m)\ge R^*(k,n)/2,
%\]

Here we prove the inequality \prettyref{eq:RRt} relating the minimax risk of the entropy estimation under the usual \iid sampling model \prettyref{eq:Rkn} to that under the Poisson sampling model \prettyref{eq:Rknt}. To this end, it is convenient to express the estimator as a function of the original samples instead of the sufficient statistic (histogram). Let $n' \sim \Poi(n)$ and $\{X_1,\ldots\}$ be an \iid\ sequence drawn from $P$ independently of $n'$. Then
\begin{align*}
R^*(k,n) = & ~ \inf_{\hat{H}_n}\sup_{P \in \calM_k}\Expect[( \hat{H}_n(X_1,\ldots,X_n)-H(P) )^2]  \\
\tR^*(k,n) = & ~ \inf_{\{\hat{H}_m\}}\sup_{P \in \calM_k}\Expect[( \hat{H}_{n'}(X_1,\ldots,X_{n'})-H(P) )^2]  
\end{align*}
where  $\hat{H}_m: [k]^m \to \reals_+$. Recall that $0 \leq R^*(k,m) \leq \log^2 k$ and $m \mapsto R^*(k,m)$ is decreasing.
Therefore
\[
\tilde{R}^*(k,2n) \leq \sum_{m > n}R^*(k,m)\poi(2n,m)+\sum_{0\le m \le n}R^*(k,m)\poi(2n,m)\le R^*(k,n)+\prob{\Poi(2n)\le n}\log^2k.
\]
Then Chernoff bound (see, \eg, \cite[Theorem 5.4]{MU06}) yields  $ \prob{\Poi(2n)\le n}\le \exp(-(1-\log 2)n) $, which implies the left inequality of \prettyref{eq:RRt}.

The right inequality of \prettyref{eq:RRt} is slightly more involved. First, by the minimax theorem (cf.~\eg \cite[Theorem 46.5]{Strasser85}), 
\begin{equation}
        R^*(k,n) = \sup_{\pi} \inf_{\hat{H}_n} \Expect[( \hat{H}_n(X_1,\ldots,X_n)-H(P) )^2]  \\
        \label{eq:Rkn-minimax}
\end{equation}
where $\pi$ ranges over all probability distributions (priors) on the simplex $\calM_k$ and the expectation is over $P \sim \pi$ and $X_1,\ldots \iiddistr P$ conditioned on $P$.

Fix a prior $\pi$ and an arbitrary sequence of estimators $\{\hat{H}_m\}$ indexed by the sample size $m$. It is a priori unclear whether the sequence of Bayes risks $\alpha_m \triangleq \Expect[( \hat{H}_m(X_1,\ldots,X_m)-H(P) )^2]$ need be decreasing in $m$. Nevertheless, we can define another sequence of estimators $\{\tilde H_m\}$ which enjoy the desired monotonicity. 
Define $\{\tilde \alpha_m\}$ by $\tilde \alpha_0=\alpha_0$ and $\tilde \alpha_m \triangleq \min_{i\in[m]} \alpha_i = \tilde \alpha_{m-1} \wedge \alpha_m$.
Iteratively define
\[
\tilde H_{m}(x_1,\ldots,x_m) \triangleq
\begin{cases}
\tilde H_{m-1}(x_1,\ldots,x_{m-1}) & \alpha_m \geq \tilde \alpha_{m-1} \\
H_{m}(x_1,\ldots,x_m) & \alpha_m < \tilde \alpha_{m-1} 
\end{cases}, \quad x_1,\ldots,x_m \in [k],
\]
whose Bayes risk is no worse than that of $\hat H_m$.
Then for $n'\sim \Poi(n/2)$ and $P\sim \pi$, 
\begin{align*}
 &~\Expect[( \hat{H}_{n'}(X_1,\ldots,X_{n'})-H(P) )^2]  \\ 
= & ~ \sum_{m \geq 0} \Expect[( \hat{H}_{m}(X_1,\ldots,X_{n'})-H(P) )^2] \poi(n/2,m)    \geq \sum_{m \geq 0} \Expect[( \tilde{H}_{m}(X_1,\ldots,X_{m})-H(P) )^2] \poi(n/2,m)    \\
\geq & ~ \sum_{m \geq 0}^n \Expect[( \tilde{H}_{m}(X_1,\ldots,X_{m})-H(P) )^2] \poi(n/2,m) \geq \frac{1}{2} \Expect[( \tilde{H}_{n}(X_1,\ldots,X_{n})-H(P) )^2] \\
\geq &~\frac{1}{2} \inf_{\tilde{H}_{n}} \Expect[( \tilde{H}_{n}(X_1,\ldots,X_{n})-H(P) )^2],
\end{align*}
where we have used Markov's inequality to conclude $\prob{\Poi(n/2) \geq n} \leq \frac{1}{2}$.
Infimizing the left-hand side over $\{\hat{H}_m\}$, we have
\begin{align}
 \inf_{\{\hat{H}_m\}} \Expect[( \hat{H}_{n'}(X_1,\ldots,X_{n'})-H(P) )^2] \geq \frac{1}{2} \inf_{\tilde{H}_{n}} \Expect[( \tilde{H}_{n}(X_1,\ldots,X_{n})-H(P) )^2].
 \label{eq:ttr}
\end{align}
In view of \prettyref{eq:Rkn-minimax}, supremizing both sides of \prettyref{eq:ttr} over $\pi$ and using the Bayes risk as a lower found for the minimax risk, we conclude that
\begin{align*}
\tilde{R}^*(k,n/2) \geq R^*(k,n)/2.
\end{align*}

\section{Proof of the lower bound}
\label{sec:pf-lb}
We present the proof of the minimax lower bound in \prettyref{sec:pf-lb1} and \prettyref{sec:pf-lb2}; proofs of all auxiliary lemmas are given in \prettyref{sec:pf-lb3}.

\subsection{Proof of \prettyref{prop:lb1}}
\label{sec:pf-lb1}
\begin{proof}
    For any pair of distributions $P$ and $Q$, 
    Le Cam's two-point method (see, \eg, \cite[Section 2.4.2]{Tsybakov09}) yields
    \begin{equation}
        R^*(k,n) \geq \frac{1}{4} (H(P)-H(Q))^2 \exp(-nD(P\|Q)).
        \label{eq:lb-KL}
    \end{equation}
    Therefore it boils down to solving the optimization problem: 
    \begin{equation}
    \sup\{H(P)-H(Q): D(P\|Q) \leq 1/n\}.
    \label{eq:HD}
\end{equation}
    Without loss of generality, assume that $k\geq 2$. 
    % Let 
    % $P=(\frac{1}{2(k-1)},\ldots,\frac{1}{2(k-1)},\frac{1}{2})$ and
    % $Q=(\frac{1+\epsilon}{2(k-1)},\ldots,\frac{1+\epsilon}{2(k-1)},\frac{1-\epsilon}{2})$, where $\epsilon \in (0,\frac{1}{2})$ is to be specified. Direct computation yields
    % $D(P\|Q) = \frac{1}{2} \log \frac{1}{1-\epsilon^2} \leq 2 \epsilon^2$ 
    % and 
    % $|H(P)-H(Q)| = \epsilon \log(k-1) - (1-\epsilon) \log(1-\epsilon)-(1+\epsilon) \log(1+\epsilon) \geq \epsilon \log(k-1) - 2 \epsilon^2$.
    Fix an $\epsilon \in (0,1)$ to be specified. 
    Let 
    \begin{equation}
    P=\pth{\frac{1}{3(k-1)},\ldots,\frac{1}{3(k-1)},\frac{2}{3}}, \quad
    Q=\pth{\frac{1+\epsilon}{3(k-1)},\ldots,\frac{1+\epsilon}{3(k-1)},\frac{2-\epsilon}{3}}.    
    \label{eq:PQ}
\end{equation}
    Direct computation yields
    $D(P\|Q) = \frac{2}{3} \log \frac{2}{2-\epsilon} + \frac{1}{3} \log \frac{1}{\epsilon+1}  \leq \epsilon^2$
    and 
    $H(Q)-H(P) = \frac{1}{3} (\epsilon \log(k-1) + \log 4 + (2-\epsilon) \log \frac{1}{2-\epsilon} + (1+\epsilon) \log \frac{1}{\epsilon+1})
    \geq \frac{1}{3} \log(2(k-1)) \epsilon- \epsilon^2$.
    Choosing $\epsilon = \frac{1}{\sqrt{n}}$ and applying \prettyref{eq:lb-KL}, we obtain the desired \prettyref{eq:lb1}.
\end{proof} 

\begin{remark}
In view of the Pinsker inequality $D(P\|Q) \geq 2 \TV^2(P,Q)$ \cite[p. 58]{ckbook} as well as the continuity property of entropy with respect to the total variation distance: $|H(P)-H(Q)| \leq \TV(P,Q) \log \frac{k}{\TV(P,Q)}$ for $\TV(P,Q) \leq \frac{1}{4}$ 
    \cite[Lemma 2.7]{ckbook}, we conclude that the best lower bound given by the two-point method, \ie, the supremum in \prettyref{eq:HD}, is on the order of $\frac{\log k}{\sqrt{n}}$. Therefore the choice of the pair \prettyref{eq:PQ} is optimal.
%    Interestingly, the Cram\'er-Rao inequality for unbiased estimators gives the same lower bound as \prettyref{eq:lb1}, which, of course, does not directly constitute a minimax lower bound.

    \label{rmk:HD}
\end{remark}

\subsection{Proof of \prettyref{prop:lb2}}
\label{sec:pf-lb2}
For $0<\epsilon<1$, define the set of \emph{approximate} probability vectors by
\begin{equation}
    \calM_k(\epsilon) \triangleq \sth{P\in \reals_+^k: \abs{\sum_{i=1}^k p_i-1}\le \epsilon }.    
    \label{eq:Mkeps}
\end{equation}
which reduces to the probability simplex $\calM_k$ if $\epsilon=0$.

Generalizing the minimax quadratic risk \prettyref{eq:Rknt} for Poisson sampling, we define
\begin{equation}
    \tilde{R}^*(k,n,\epsilon) \triangleq \inf_{\hat{H}'}\sup_{P \in \calM_k(\epsilon)}\Expect( \hat{H}'(N)-H(P) )^2,    
    \label{eq:Rkn1}
\end{equation}
where $ N=(N_1,\dots,N_k) $ and $ N_i\inddistr \Poi(np_i) $ for $ i=1, \dots, k $.
%Here we loosen the constraint on input distribution for the \iid ensemble in \prettyref{sec:iid},
Since $ P $ is not necessarily normalized, $ H(P) $ may not carry the meaning of entropy. Nevertheless, $ H $ is still valid  a functional.
The risk defined above is connected to the risk \prettyref{eq:Rkn} for multinomial sampling by the following lemma:
\begin{lemma}
    \label{lmm:poisson}
    For any $k,n \in \naturals$ and $ \epsilon< 1/3$, 
    \[
    R^*(k,n/2) \ge \frac{1}{3} \tilde{R}^*(k,n,\epsilon) -  (\log k )^2\exp(-n/50) -  \pth {\epsilon\log k}^2 
    -  \pth {(1+\epsilon)\log (1+\epsilon)}^2.
    \]
\end{lemma}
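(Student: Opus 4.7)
The plan is to upper-bound $\tilde R^*(k,n,\epsilon)$ in terms of $R^*(k,n/2)$ by converting any near-optimal multinomial estimator into a Poisson-sampling estimator and tracking the loss. Fix $P \in \calM_k(\epsilon)$ and write $s := \|P\|_1 \in [1-\epsilon,1+\epsilon]$. Under the Poisson model $N_i \inddistr \Poi(np_i)$, the total $M := \sum_i N_i \sim \Poi(ns)$ and, by the standard Poisson--multinomial identity, $N \mid M \sim \Multinom(M, P/s)$. On the event $\calA := \{M \geq n/2\}$, draw a uniform size-$n/2$ subsample from the $M$ observations implicit in $N$ and let $\tilde N$ denote the resulting type counts; marginally on $\calA$ one has $\tilde N \sim \Multinom(n/2, P/s)$. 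Define $\hat H'(N) := \hat H_{n/2}(\tilde N)\indc{\calA}$, where $\hat H_{n/2}$ is any estimator for the multinomial problem with risk at most $R^*(k,n/2)+\eta$ uniformly over $\calM_k$, WLOG truncated to $[0,\log k]$.

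The key algebraic identity $H(P) = s\,H(P/s) + s\log(1/s)$ (from $\log p_i = \log(p_i/s)+\log s$) allows me to decompose, on $\calA$,
\[
\hat H'(N) - H(P) = \bigl(\hat H_{n/2}(\tilde N) - H(P/s)\bigr) + (1-s)H(P/s) + s\log s.
\]
Applying $(a+b+c)^2 \leq 3(a^2+b^2+c^2)$ and taking expectations, the first squared term is bounded by $3R^*(k,n/2)$ using that $P/s \in \calM_k$; the second by $3(\epsilon\log k)^2$ since $|1-s|\leq\epsilon$ and $H(P/s)\leq\log k$; and the third by $3\bigl((1+\epsilon)\log(1+\epsilon)\bigr)^2$. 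For the last bound, a short calculus argument shows that $s \mapsto s|\log s|$ is maximized on $[1-\epsilon,1+\epsilon]$ at the right endpoint for $\epsilon < 1/3$, i.e., $(1+\epsilon)\log(1+\epsilon) \geq (1-\epsilon)\log\frac{1}{1-\epsilon}$, which one verifies by differentiating their difference.

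On $\calA^c = \{M < n/2\}$, note that $ns \geq (1-\epsilon)n \geq 2n/3$, so a Poisson Chernoff bound yields $\prob{\calA^c} \leq \exp(-n/50)$ after using $\delta = 1/4$ in the large-deviation rate. Since $|H(P)| \lesssim \log k$ for $P \in \calM_k(\epsilon)$ with $\epsilon<1/3$ and $\hat H'$ is bounded, the contribution on $\calA^c$ is at most $(\log k)^2 \exp(-n/50)$, possibly after a slight adjustment of the Chernoff constant. Combining these bounds and sending $\eta \to 0$ gives $\tilde R^*(k,n,\epsilon) \leq 3R^*(k,n/2) + 3(\log k)^2\exp(-n/50) + 3(\epsilon\log k)^2 + 3\bigl((1+\epsilon)\log(1+\epsilon)\bigr)^2$, which rearranges to the lemma.

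The main technical content is the Poisson--multinomial subsampling step (standard once one recalls that Poisson counts conditioned on their sum are multinomial, and that the uniform size-$n/2$ subsample of an i.i.d.\ stream is again i.i.d.) and the appearance of the constant $1/3$, which is exactly what the three-way decomposition of the error delivers. The rest is bookkeeping: choosing the Chernoff constant cleanly, bounding $s|\log s|$ over the relevant interval, and truncating the base estimator so that the $\calA^c$ contribution is controlled. No single step presents a genuine difficulty; the main risk is losing constants and ending up with a weaker exponent or an extra factor that would have to be absorbed into the slack terms.
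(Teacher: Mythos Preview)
Your proposal is correct and follows essentially the same route as the paper: build a Poisson-model estimator from a near-minimax multinomial estimator via the Poisson--multinomial identity, decompose the error through $H(P)=sH(P/s)-s\log s$ into three pieces (whence the factor $1/3$), and control the small-count event by a Poisson Chernoff bound. The one implementational difference is that the paper, rather than subsampling to a fixed size $n/2$, applies a near-minimax estimator $\hat H(\cdot,m)$ at the \emph{random} sample size $m=n'=\sum_i N_i$ and then uses monotonicity of $m\mapsto R^*(k,m)$ together with $R^*(k,m)\le(\log k)^2$; this cleanly yields the $(\log k)^2\exp(-n/50)$ term without having to bound $H(P)^2$ for unnormalized $P\in\calM_k(\epsilon)$ on $\calA^c$ (which, as you note, can slightly exceed $(\log k)^2$ and requires absorbing a constant into the Chernoff exponent).
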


To establish a lower bound of $ \tilde{R}^*(k,n,\epsilon) $, we apply generalized Le Cam's method involving two composite hypothesis as in \prettyref{eq:compHT}, which entails choosing two priors such that the entropy values are separated with probability one. It turns out that this can be relaxed to separation \emph{on average}, if we can show that the entropy values are concentrated at their respective means.
%. Indeed, given two priors, if we can show that the functional values are concentrated at their respective means, 
% due to law of large numbers. 
%we then obtain a lower bound given by the separation in the \emph{mean} functional values. 
This step is made precise in the next lemma:
\begin{lemma}
    \label{lmm:Rknet}
    Let $ U $ and $ U' $ be random variables such that $ U,U'\in[0,\lambda] $ and $ \expect{U}=\expect{U'}\le 1 $
    and $ \abs{\expect{\phi(U)}-\expect{\phi(U')}}\ge d $, where $ \lambda< k/e$. Let $ \epsilon=\frac{4\lambda}{\sqrt{k}} $.
    Then 
    \begin{equation}
        \tilde{R}^*(k,n,\epsilon)
        \ge \frac{d^2}{16}\pth{\frac{7}{8}-k\TV(\expect{\Poi\pth{nU/k}}, \expect{\Poi\pth{nU'/k}})
            - \frac{32\lambda^2\log^2\frac{k}{\lambda}}{kd^2} }.
        \label{eq:Rknet}
    \end{equation}
\end{lemma}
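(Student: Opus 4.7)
The plan is to instantiate the composite-hypothesis Le Cam argument sketched in \prettyref{sec:prior} with the iid-prior construction. Let $\pi_0$ and $\pi_1$ be the laws on $\reals_+^k$ of the random vectors $\sfP=(U_1,\ldots,U_k)/k$ and $\sfP'=(U'_1,\ldots,U'_k)/k$, whose entries are iid copies of $U$ and $U'$ respectively. The proof hinges on three probabilistic controls: (i) $\sfP,\sfP'\in\calM_k(\epsilon)$ so that the priors are admissible, (ii) $H(\sfP)$ and $H(\sfP')$ concentrate near their means, and (iii) the joint data laws under $\pi_0$ and $\pi_1$ are close in total variation.

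For (i), since $U\in[0,\lambda]$ with $\Expect U\le 1$ yields $\var(U)\le\Expect U^2\le\lambda\Expect U\le\lambda^2$, we have $\var(\sum_i U_i/k)\le\lambda^2/k$; with $\epsilon=4\lambda/\sqrt{k}$, Chebyshev's inequality gives $\Prob[\sfP\notin\calM_k(\epsilon)]\le 1/16$ and the same for $\sfP'$. For (ii), the assumption $\lambda/k<1/e$ makes $\phi$ monotone on $[0,\lambda/k]$, so each summand $\phi(U_i/k)\in[0,(\lambda/k)\log(k/\lambda)]$ and $\var(H(\sfP))\le\lambda^2\log^2(k/\lambda)/k$. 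A second application of Chebyshev gives
\[
\Prob\qth{|H(\sfP)-\Expect H(\sfP)|>d/4} \le \frac{16\lambda^2\log^2(k/\lambda)}{kd^2},
\]
with the same bound for $H(\sfP')$. The key observation is that because $\Expect U=\Expect U'$, the $\log k$ contributions in $\Expect H(\sfP)$ and $\Expect H(\sfP')$ cancel, so $|\Expect H(\sfP)-\Expect H(\sfP')|=|\Expect\phi(U)-\Expect\phi(U')|\ge d$; on the good event the entropy values are therefore separated by at least $d/2$.

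For (iii), Poisson sampling is crucial: conditionally on the prior, the coordinates $N_1,\ldots,N_k$ are iid with common marginal $\Expect[\Poi(nU/k)]$ or $\Expect[\Poi(nU'/k)]$, respectively. Subadditivity of $\TV$ under product measures therefore bounds the TV between the two joint laws by $k\TV(\Expect[\Poi(nU/k)],\Expect[\Poi(nU'/k)])$. To combine (i)--(iii) I would use the midpoint test $\hat\Psi=\indc{\hat H>(\Expect H(\sfP)+\Expect H(\sfP'))/2}$: on the good event of (i)--(ii), an estimation error below $d/4$ forces $\hat\Psi$ to be correct. A Markov step together with the two-point inequality $\Prob_{\pi_0}[\hat\Psi=1]+\Prob_{\pi_1}[\hat\Psi=0]\ge 1-\TV$ then yields \prettyref{eq:Rknet}: the $7/8$ comes from $1-2\cdot\tfrac{1}{16}$ of slack from (i), the $k\TV$ term from (iii), and the $32\lambda^2\log^2(k/\lambda)/(kd^2)$ from $2\cdot\tfrac{16\lambda^2\log^2(k/\lambda)}{kd^2}$ of slack from (ii).

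The main obstacle is the tension between (i) and (iii): $\pi_i$ places some mass outside $\calM_k(\epsilon)$, yet $\tilde R^*(k,n,\epsilon)$ is defined as a supremum only over $\calM_k(\epsilon)$. One must therefore condition (or restrict) the priors to $\calM_k(\epsilon)$ when invoking the minimax--Bayes comparison, without destroying the product structure of the Poissonized marginals that drives the $\TV$ bound. The relaxation $\epsilon=4\lambda/\sqrt{k}$ is tuned precisely so that Chebyshev guarantees at least $15/16$ mass inside $\calM_k(\epsilon)$ while being small enough not to spoil the entropy functional analysis.
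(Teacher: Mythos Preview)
Your proposal has a genuine gap in step (i). You claim Chebyshev gives $\Prob[\sfP\notin\calM_k(\epsilon)]\le 1/16$, but recall that $\calM_k(\epsilon)=\{P:|\sum_i p_i-1|\le\epsilon\}$, whereas Chebyshev only centers $\sum_i U_i/k$ at its mean $\alpha\triangleq\Expect[U]$, which the lemma allows to be any value in $(0,1]$. When $\alpha<1-\epsilon$ --- precisely the regime of interest, since in the application $\alpha=c_1k/(n\log k)$ can be arbitrarily small --- the random vector $\sfP=(U_1,\ldots,U_k)/k$ concentrates near a point summing to $\alpha$, hence lies \emph{outside} $\calM_k(\epsilon)$ with probability close to one. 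Your prior is then essentially inadmissible and the argument collapses.

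The paper repairs this by appending a deterministic ``anchor'' coordinate $1-\alpha$, taking $\sfP=(U_1/k,\ldots,U_k/k,\,1-\alpha)$ so that $\Expect[\sum_i p_i]=1$ and Chebyshev genuinely places $\sfP$ in $\calM_{k+1}(\epsilon)$ with probability at least $15/16$. This extra mass contributes the same additive term $\phi(1-\alpha)$ to both $\Expect[H(\sfP)]$ and $\Expect[H(\sfP')]$, so the separation $|\Expect[H(\sfP)]-\Expect[H(\sfP')]|=|\Expect[\phi(U)]-\Expect[\phi(U')]|\ge d$ is preserved; and since the $(k{+}1)\Th$ coordinate of $N$ has the identical law $\Poi(n(1-\alpha))$ under both priors, it contributes nothing to the total variation in step (iii). (The paper tacitly replaces $k$ by $k+1$ here; this is harmless for the rates.)

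As for the ``main obstacle'' you correctly flag, the paper's resolution is clean and worth knowing: condition the priors on the good events $E,E'$ so they are supported on $\calM_k(\epsilon)$, then bound
\[
\TV(P_{N|E},P_{N'|E'})\le \TV(P_{N|E},P_N)+\TV(P_N,P_{N'})+\TV(P_{N'},P_{N'|E'}).
\]
The outer terms are at most $\Prob[E^c]$ and $\Prob[E'^c]$ respectively (a general fact: conditioning on an event of probability $1-\delta$ moves a distribution by at most $\delta$ in total variation), while the middle term retains the full product structure needed for your subadditivity bound. A direct application of Le Cam's two-prior lemma then yields \prettyref{eq:Rknet} without the midpoint-test/Markov detour.
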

% The following result makes rigorous the \iid\ construction of priors described in \prettyref{sec:prior}.
% \begin{theorem}
%     % Let $ d\asymp \frac{k}{n\log k} $ and $ \lambda\asymp \frac{k\log k}{n} $.
%     Let $ U $ and $ U' $ be random variables such that $ U,U'\in[0,\lambda] $ and $ \expect{U}=\expect{U'}\le 1 $
%     and $ \abs{\expect{\phi(U)}-\expect{\phi(U')}}\ge d $, where $ \lambda< c'k $ for some small constant $ c' $.
%     Then 
%     \begin{equation}
%         \begin{aligned}
%             {R}^*(k,n/2)
%             \geq
%             &\frac{d^2}{96}\pth{\frac{7}{8}-k\TV(\expect{\Poi\pth{nU/k}}, \expect{\Poi\pth{nU'/k}})
%                 - \frac{32\lambda^2\log^2\frac{k}{\lambda}}{kd^2} }\\
%             &-\log^2k\exp\pth{-\frac{n}{50}}-\frac{16\lambda^2 \log^2k}{k}-\pth{1+\frac{\lambda}{\sqrt {k}}}\log^2\pth{1+\frac{\lambda}{\sqrt {k}}}.
%         \end{aligned}
%         \label{eq:iid}
%     \end{equation}
%     \label{thm:iid}
% \end{theorem}

The following result gives a sufficient condition for Poisson mixtures to be indistinguishable in terms of moment matching.
Analogous results for Gaussian mixtures have been obtained 
in \cite[Section 4.3]{LNS99} using Taylor expansion of the KL divergence and orthogonal basis expansion of $\chi^2$-divergence
in \cite[Proof of Theorem 3]{CL11}. For Poisson mixtures we directly deal with the total variation as the $\ell_1$-distance between the mixture probability mass functions.

% \begin{lemma}
%     Let $ \lambda=\frac{c_1k\log k}{n}, L=c_0\log k $, where $ c_1< c_0/2 $.
%     Let $U$ and $U'$ be random variables on $ \qth{0,\lambda} $. 
%     If $ \expect{U^j} = \expect{U'^j}, ~ j = 1,\ldots,L $, 
%     then
%     \begin{equation}
%         \TV(\expect{\Poi\pth{nU/k}}, \expect{\Poi\pth{nU'/k}})
%         \le 2\exp\pth{-\log k\pth{\frac{c_0}{2}\log\frac{c_0}{2ec_1}-c_1}},
%         \label{eq:tv-bound}
%     \end{equation}
%     \label{lmm:tv-bound}
% \end{lemma}

\begin{lemma}
    \label{lmm:tv-bound}
    Let $ V $ and $ V' $ be random variables on $ [0,M] $.
    If $ \Expect[V^j]=\Expect[V'^j],~j=1,\dots,L $ and $ L> 2e M $,
    then
    \begin{equation}
        \TV(\Expect[\Poi(V)],\Expect[\Poi(V')]) \le \pth{\frac{2eM}{L}}^L.
        % \TV(\expect{\Poi(V)},\expect{\Poi(V')}) \le 2\exp\pth{-\pth{\frac{L}{2}\log\frac{L}{2e\lambda}-\lambda}}\wedge 1.
        \label{eq:tv-bound}
    \end{equation}
\end{lemma}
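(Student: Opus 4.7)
The plan is to exploit the moment-matching hypothesis via a Taylor expansion of each Poisson probability mass function in its mean, and then reduce the remaining tail to a standard Poisson deviation estimate. First, I would write the total variation as the $\ell_1$ distance between the two mixture pmfs,
\[
2\TV(\Expect[\Poi(V)],\Expect[\Poi(V')]) = \sum_{j\ge 0} \left| \Expect\!\left[\tfrac{V^j e^{-V}}{j!}\right] - \Expect\!\left[\tfrac{V'^j e^{-V'}}{j!}\right] \right|,
\]
and expand $e^{-v} = \sum_{k\ge 0}(-v)^k/k!$. Since $V,V'\in[0,M]$, dominated convergence lets me swap expectation and summation, yielding
\[
\Expect\!\left[\tfrac{V^j e^{-V}}{j!}\right] - \Expect\!\left[\tfrac{V'^j e^{-V'}}{j!}\right] = \sum_{k\ge 0}\frac{(-1)^k}{j!\,k!}\bigl(\Expect[V^{j+k}]-\Expect[V'^{j+k}]\bigr).
\]

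By the moment-matching hypothesis, every term with $j+k\le L$ vanishes; for $j+k\ge L+1$ I would bound each moment difference crudely by $|\Expect[V^m]-\Expect[V'^m]|\le 2M^m$. Inserting this into the sum and swapping the order of summation to the new variable $m=j+k$, the binomial identity $\sum_{j+k=m}\frac{1}{j!\,k!}=\frac{2^m}{m!}$ collapses the double sum into
\[
2\TV \;\le\; 2\sum_{m\ge L+1}\frac{(2M)^m}{m!}.
\]

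The main step is then to show that this tail is at most $(2eM/L)^L$. Recognizing $\sum_{m\ge L+1}(2M)^m/m! = e^{2M}\Prob[\Poi(2M)\ge L+1]$, I would apply the standard Chernoff bound for a Poisson variable with mean $2M$ deviating above $L+1$, optimized at $t=\log((L+1)/(2M))$, which yields
\[
\sum_{m\ge L+1}\frac{(2M)^m}{m!} \;\le\; \pth{\tfrac{2eM}{L+1}}^{L+1}.
\]
An elementary monotonicity check (straightforward once $L>2eM$, since then $2eM/L<1$ and $(L+1)/L\cdot(1+1/L)^L>1$) shows $(2eM/(L+1))^{L+1}\le (2eM/L)^L$, completing the bound.

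The only non-cosmetic obstacle is the tail estimate in the third step: one has to choose a bound that is simultaneously clean and sharp enough to give the stated $(2eM/L)^L$, and the Chernoff route above is the natural choice. Everything upstream is mechanical once one commits to expanding the Poisson pmf as a power series in its mean and using the $[0,M]$ bound to justify the interchange and to majorize the surviving moment differences.
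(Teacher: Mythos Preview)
Your proposal is correct and follows essentially the same route as the paper: expand the Poisson pmf as a power series in the mean, use moment matching to kill the low-order terms, bound the surviving moments by $M^m$, and recognize the resulting double sum as $e^{2M}\Prob[\Poi(2M)>L]$ before applying the Chernoff bound. The only cosmetic difference is that the paper keeps the double sum as $\sum_{j}\frac{M^j}{j!}\sum_{m>L-j}\frac{M^m}{m!}$ and reads it as $e^{2M}\Prob[N_1+N_2>L]$ with $N_1,N_2\iiddistr\Poi(M)$, then applies Chernoff directly at threshold $L$ (using $\Prob[X>L]\le\Prob[X\ge L]$), which lands on $(2eM/L)^L$ without your final monotonicity step.
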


\begin{remark}
        \label{rmk:tv-bound}
        In an earlier version of the paper,\footnote{See Lemma 3 in \url{http://arxiv.org/pdf/1407.0381v2.pdf}.} the following weaker total variation bound
       \begin{equation}
        \TV(\expect{\Poi(V)},\expect{\Poi(V')}) \le 2\exp\pth{-\pth{\frac{L}{2}\log\frac{L}{2eM}-M}}\wedge 1,
       \label{eq:tv-bound-old}
\end{equation}
        was proved by truncating the summation in the total variation.
        This bound suffices for our purpose; in fact, the same proof techniques have been subsequently used in \cite[Lemma 11]{JVHW15} for minimax lower bound of estimating other functionals. Nevertheless, \prettyref{eq:tv-bound} provides a strict improvement over \prettyref{eq:tv-bound-old}, whose proof is even simpler and involves no truncation argument.
        What remains open is the optimal number of matching moments to ensure indistinguishability of the Poisson mixtures. \prettyref{lmm:tv-bound} implies that as soon as $L/M$ exceeds $2e$ the total variation decays exponentially; it is unclear whether $L$ needs to grow linearly with $M$ in order to drive the total variation to zero.
\end{remark}

To apply \prettyref{lmm:Rknet} and \prettyref{lmm:tv-bound} we need to construct two random variables, namely $ U $ and $ U' $, that have matching moments of order $ 1,\dots,L $,
and large discrepancy in the mean functional value $ \abs{\expect{\phi(U)}-\expect{\phi(U')}}$, as described in \prettyref{sec:prior} and formulated in \prettyref{eq:FL}.
    As shown in \prettyref{app:moments}, we can obtain $U,U'$ with matching moments from the dual of the best polynomial approximation of $ \phi $, namely \prettyref{eq:EL}; however, we have little control over the value of the common mean $\Expect[U]=\Expect[U']$ and it is unclear whether it is less than one as required by \prettyref{lmm:tv-bound}.
Of course we can normalize $U,U'$ by their common mean which preserves moments matching; however, the mean value separation $ \abs{\expect{\phi(U)}-\expect{\phi(U')}}$ also shrinks by the same factor, which results in a suboptimal lower bound. 
    
    To circumvent this issue, we first consider 
    auxiliary random variables $X,X'$ supported on a interval bounded away from $ 0 $; 
    leveraging the property that their ``zeroth moments'' are one, we then construct the desired random variables $U,U'$ via a change of measure.
    To be precise, given $\eta \in (0,1)$ and any random variables $ X,X'\in[\eta,1] $ that have matching moments up to the $ L\Th $ order, we can construct $U,U'$ from $X,X'$ with the following distributions
    \begin{equation}
        \begin{aligned}
            & P_U(\diff u)
            =\pth{1-\expect{ \frac{\eta}{X} } }\delta_0(\diff u)+\frac{\alpha}{u}P_{\alpha X/\eta}(\diff u),\\
            & P_{U'}(\diff u)
            =\pth{1-\expect{ \frac{\eta}{X'} } }\delta_0(\diff u)+\frac{\alpha}{u}P_{\alpha X'/\eta}(\diff u),
            % P_U(du)= 0.5(1-E[\eta/X])\delta_0(du)+\alpha/2u P_{\alpha X/\eta}(du)+0.5\delta_{2-\alpha}(du)
        \end{aligned}
        \label{eq:UU}
    \end{equation}
    for some fixed $ \alpha\in(0,1) $.
    Since $ X,X'\in \qth{\eta,1} $ and thus $ \expect{\frac{\eta}{X}}, \expect{\frac{\eta}{X'}}\le 1 $, these distributions are well-defined and supported on $\qth{0,\alpha \eta^{-1}} $. Furthermore, 
    \begin{lemma}
    % $ \expect{\phi(V)}=\log\eta+\expect{\log \frac{1}{X}}, \expect{\phi(V')}=\log\eta+\expect{\log \frac{1}{X'}}$ 
    $ \expect{\phi(U)} - \expect{\phi(U')} = \alpha(\Expect[\log\frac{1}{X}]-\Expect[\log\frac{1}{X'}])$ 
    and 
    $ \expect{U^j} = \expect{U'^j}, ~ j = 1,\ldots,L+1. $ In particular, $ \expect{U}=\expect{U'}=\alpha $. 
    \label{lmm:UUXY}
\end{lemma}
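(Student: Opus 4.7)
The plan is to read both claims directly off the representation (\ref{eq:UU}) of $P_U$ (and analogously $P_{U'}$). The key observation is that the weight $\alpha/u$ in the absolutely continuous part is engineered to shift moments by one: unfolding the integral via the change of variable $u = \alpha X/\eta$, for any bounded measurable $g$ (and, with the convention $\phi(0) = 0$, also for $g = \phi$), one obtains the master identity
\begin{equation*}
\Expect[g(U)] = g(0)\pth{1 - \Expect[\eta/X]} + \Expect\qth{g(\alpha X/\eta)\,\frac{\eta}{X}}.
\end{equation*}
Everything else is bookkeeping built on top of this.

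First I would verify the moment-matching claim. Taking $g(u) = u^j$ with $j \geq 1$, the atom at $0$ drops out and
\begin{equation*}
\Expect[U^j] = \Expect\qth{(\alpha X/\eta)^j \cdot \eta/X} = \alpha^j \eta^{1-j}\,\Expect[X^{j-1}].
\end{equation*}
The analogous identity holds for $U'$, so $\Expect[U^j] = \Expect[U'^j]$ whenever $\Expect[X^{j-1}] = \Expect[X'^{j-1}]$. The zeroth moment matches automatically since $X, X'$ are probability distributions on $[\eta,1]$, and by hypothesis moments $1, \ldots, L$ of $X$ and $X'$ coincide; together this covers $j = 1, \ldots, L+1$. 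Specializing to $j = 1$ gives $\Expect[U] = \Expect[U'] = \alpha$.

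Second, for the functional separation, plug $g = \phi$ into the master identity. Since $\phi(0) = 0$ and $\phi(\alpha X/\eta) = (\alpha X/\eta)\log\pth{\eta/(\alpha X)}$, the product $(\alpha X/\eta)\cdot(\eta/X) = \alpha$ simplifies the integrand, yielding
\begin{equation*}
\Expect[\phi(U)] = \alpha\,\Expect\qth{\log\frac{\eta}{\alpha X}} = \alpha\,\Expect\qth{\log\frac{1}{X}} + \alpha\log\frac{\eta}{\alpha}.
\end{equation*}
The additive term $\alpha\log(\eta/\alpha)$ depends only on the fixed parameters $\eta,\alpha$ and thus cancels upon subtracting the corresponding expression for $U'$, delivering $\Expect[\phi(U)] - \Expect[\phi(U')] = \alpha\pth{\Expect[\log(1/X)] - \Expect[\log(1/X')]}$.

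There is no substantive obstacle here: the lemma is essentially a direct computation from the definition, and the only point warranting a moment of care is the treatment of the atom at the origin, which is harmless for moment computations ($j \geq 1$) and for $\phi$ thanks to $\phi(0) = 0$. The reason the construction works is transparent from the derivation: reweighting by $\alpha/u$ cancels one power of $u$ in every integrand, shifting moment-matching conditions on $X$ by one index and producing the clean factor of $\alpha$ in the $\phi$-expectation after splitting the logarithm.
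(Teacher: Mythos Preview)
Your proposal is correct and follows essentially the same approach as the paper: both compute $\Expect[U^j]$ and $\Expect[\phi(U)]$ directly from the definition \prettyref{eq:UU} via the change of variable $u=\alpha X/\eta$, arriving at $\Expect[U^j]=\alpha^j\eta^{1-j}\Expect[X^{j-1}]$ and $\Expect[\phi(U)]=\alpha\,\Expect[\log(\eta/(\alpha X))]$, after which the moment matching and the cancellation of the common additive term are immediate. Your version is slightly more explicit in isolating the atom at zero and in noting that the $j=1$ case relies on the trivial zeroth-moment match, but there is no substantive difference.
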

%    Now $ U,U' $ still have match moments and have common mean $ \alpha $.
%    Note that $ \Expect[\phi(U)]-\Expect[\phi(U')]=\alpha(\Expect[\log\frac{1}{X}]-\Expect[\log\frac{1}{X'}]) $, 
To choose the best $X,X'$, we consider the following auxiliary optimization problem over random variables $X$ and $X'$ (or equivalently, the distributions thereof).
\begin{equation}
    \begin{aligned}
        \calE^* = \max & ~ \expect{\log \frac{1}{X}} - \expect{\log \frac{1}{X'}}  \\
        \text{s.t.}     & ~ \expect{X^j} = \expect{X'^j}, \quad j = 1,\ldots,L, \\
        & ~ X,X' \in [\eta,1],
    \end{aligned}
    \label{eq:Rstar}
\end{equation}
where $ 0<\eta<1 $. 
Note that \prettyref{eq:Rstar} is an infinite-dimensional linear programming problem with finitely many constraints. Therefore it is natural to turn to its dual. In \prettyref{app:moments} we show that the maximum $\calE^*$ exists and coincides with twice the best $L_\infty$ approximation error of the $\log$ over the interval $[\eta,1]$ by polynomials of degree $L$:
\begin{equation}
        \calE^*=2E_{L}(\log, [\eta,1]).
        \label{eq:RE}
    \end{equation}
  By definition, this approximation error 
%  on the right-hand side of \prettyref{eq:RE} 
  is decreasing in the degree $L$ when $\eta$ is fixed; on the other hand, since the logarithm function blows up near zero, for fixed degree $L$ the approximation error also diverges as $\eta$ vanishes. As shown in \prettyref{app:error}, in order for the error to be bounded away from zero which is needed in the lower bound, it turns out  that the necessary and sufficient condition 
   is when $\eta$ decays according to $L^{-2}$:
% We need the following lemma to prove our lower bound:
\begin{lemma}%[Separation in functional value]
    \label{lmm:sep}
    There exist universal positive constants $c , c', L_0$ such that for any $L\geq L_0$,
    \begin{equation}
        E_{\floor{c L}}(\log, [L^{-2},1]) \geq c'.
        \label{eq:sep}
    \end{equation}
\end{lemma}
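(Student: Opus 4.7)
My plan is to prove the lemma via the standard duality between best uniform polynomial approximation and moment matching, combined with an explicit Chebyshev series computation for the logarithm.

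First, via the affine change of variable $x = (1+\eta)/2 + (1-\eta)t/2$ with $\eta = L^{-2}$, which absorbs a constant into the approximating polynomial, the problem is equivalent to showing $E_N(\log(\alpha+t), [-1,1]) \geq c'$, where $\alpha = (1+\eta)/(1-\eta)$ and $N = \lfloor cL \rfloor$. I would then derive the explicit Chebyshev expansion
\[
\log(\alpha+\cos\theta) = \psi - \log 2 + 2\sum_{k \geq 1} \frac{(-1)^{k-1}}{k}\,\rho^{-k} \cos(k\theta),
\]
with $\alpha = \cosh\psi$ and $\rho := e^\psi = (L+1)/(L-1)$, using the factorization $\cosh\psi + \cos\theta = 2\cosh((\psi+i\theta)/2)\cosh((\psi-i\theta)/2)$ together with the series $\log\cosh z = z - \log 2 + \log(1+e^{-2z})$ and $\log(1+w) = \sum (-1)^{k-1}w^k/k$. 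The crucial quantitative feature in our regime is that $\log\rho \asymp 1/L$, so $\rho^{cL}$ converges to the \textbf{universal constant} $e^{2c}$ as $L\to\infty$---the singularity at $t = -\alpha$ sits exactly on a Bernstein ellipse of $O(1)$ parameter.

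Second, I would appeal to the duality
\[
E_N(f,[-1,1]) = \sup\Big\{\,\Big|{\textstyle\int} f\,d\mu\Big| : \mu \text{ signed},\; |\mu|([-1,1])\leq 1,\; {\textstyle\int} t^j\,d\mu = 0 \text{ for } 0\leq j\leq N\,\Big\}
\]
and construct a candidate annihilating measure of the form $d\mu(t) = w(t)/\sqrt{1-t^2}\,dt$, where $w$ is a linear combination of Chebyshev polynomials $T_k$ with $k > N$ (so that orthogonality of $\{T_k\}$ against polynomials of degree $\leq N$ guarantees annihilation). The key observation driving the lower bound is that at $t = -1$ the Chebyshev signs align: since $T_k(-1) = (-1)^k$ cancels the $(-1)^{k-1}$ in $c_k$, the tail $\sum_{k>N} c_k\,T_k(-1) = -2\sum_{k>N}\rho^{-k}/k$ is of constant order $\asymp E_1(2c)$, \emph{independent of $L$}. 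Choosing the coefficients in $w$ so that the corresponding pairing against $\log(\alpha+t)$ adds coherently, while estimating $|\mu|([-1,1]) = \int_0^\pi|w(\cos\theta)|\,d\theta$ by Dirichlet-kernel bounds on partial sums $\sum_{k=N+1}^M\cos(k\phi)$, one arrives at an explicit ratio bounded below by a universal constant.

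The main obstacle---and the step requiring the most care---is precisely this construction in the third paragraph. The naive one-mode choice $d\mu = T_{N+1}(t)/\sqrt{1-t^2}\,dt$ extracts only the single coefficient $c_{N+1}\asymp \rho^{-N}/N$, producing the lower bound $E_N \gtrsim 1/(cL e^{2c})$ that vanishes with $L$. Aggregating $M-N$ consecutive modes improves the pairing to $\pi\sum_{k=N+1}^M \rho^{-k}/k$, which stabilizes at $\pi E_1(2c)$ as $M\to\infty$, but at the cost of a total variation growing logarithmically in $M-N$; if done carelessly, the improvement is lost. Thus the right balance of signs and cutoff $M$ (together with sharper Dirichlet-kernel estimates exploiting the concentration of $w$ near $\theta = \pi$) is needed to produce a truly $\Omega(1)$ bound. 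Once this construction succeeds for one specific pair of constants $c$ and $M/L$, one takes $c'$ to be any smaller positive universal constant (e.g.\ $c_0 E_1(2c)$ for an explicit $c_0$), yielding the claim for all $L$ beyond some $L_0$.
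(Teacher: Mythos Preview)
Your approach is genuinely different from the paper's, and the outline is sound up to the point you yourself flag as the main obstacle---but that step is a real gap, not a detail. The Chebyshev expansion you write down is correct, and the key scaling $\rho=(L+1)/(L-1)$, $\rho^{cL}\to e^{2c}$ is exactly the mechanism behind the result. However, neither construction you actually describe yields an $\Omega(1)$ lower bound. With the single mode $T_{N+1}$ you get $E_N\gtrsim |c_{N+1}|\asymp 1/L$, as you note. With $w=\sum_{k=N+1}^{M}(-1)^{k-1}T_k$ (the natural coherent choice), after the substitution $\phi=\pi-\theta$ the kernel becomes $-\sum_{k=N+1}^{M}\cos(k\phi)$, whose $L^1$ norm is the Lebesgue constant of a partial Dirichlet sum, of order $\log(M-N)$. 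Optimizing over $M$ gives at best
\[
E_N(f)\;\gtrsim\;\frac{\displaystyle\int_{2c}^{2M/L}\frac{e^{-u}}{u}\,du}{1+\log(M-N)}\;\lesssim\;\frac{1}{\log L},
\]
since the numerator saturates at $E_1(2c)$ while the denominator diverges. Invoking ``sharper Dirichlet-kernel estimates exploiting concentration near $\theta=\pi$'' does not repair this: the $\log$-growth of the Lebesgue constant is genuine and comes from the slow $1/\phi$ decay away from $\phi=0$, not from any failure of concentration. You would need a fundamentally different dual measure (for instance one built from the actual alternation points, or via a Cauchy-integral argument capturing the strength $1/\sqrt{\alpha^2-1}\asymp L$ of the logarithmic singularity) to recover the missing factor of $L$, and you have not supplied one.

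By contrast, the paper avoids the dual construction entirely. It works with the Ditzian--Totik-type modulus $\tau_1(f_L,\Delta_m)$ and the two-sided estimates $E_m(f_L)\le 100\,\tau_1(f_L,\Delta_m)$ and $\tau_1(f_L,\Delta_L)\le \frac{100}{L}\sum_{m=0}^{L}E_m(f_L)$ from Petrushev--Popov. A direct calculation gives $\tau_1(f_L,\Delta_m)\le \log(2L^2/m^2)$ for $m\le 0.1L$ and $\tau_1(f_L,\Delta_L)\ge 1$, so $\sum_{m\le L}E_m(f_L)\gtrsim L$ while $\sum_{m\le cL}E_m(f_L)$ is a small fraction of $L$; monotonicity of $m\mapsto E_m$ then forces $E_{cL}(f_L)\gtrsim 1$. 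This averaging trick sidesteps precisely the difficulty you encountered: it never needs a single dual witness of constant quality, only control of the full sequence $(E_m)_{m\le L}$ in aggregate.
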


\begin{proof}[Proof of \prettyref{prop:lb2}]
% In order to apply the iid construction in \prettyref{thm:iid}, the random variable needs to have unit mean. However, we have no control over the means of the maximizer of \prettyref{eq:Rstar}. 
% Fix $\alpha \in (0,1)$ to be specified later.
 Let $X$ and $X'$ be the maximizer of \prettyref{eq:Rstar}.
 Now we construct $U$ and $U'$ from $X$ and $X'$ according to the recipe \prettyref{eq:UU}.
%  with the following distributions
% \begin{equation}
%     \begin{aligned}
%         & P_U(\diff u)
%         =\pth{1-\expect{ \frac{\eta}{X} } }\delta_0(\diff u)+\frac{\alpha}{u}P_{\alpha X/\eta}(\diff u),\\
%         & P_{U'}(\diff u)
%         =\pth{1-\expect{ \frac{\eta}{X'} } }\delta_0(\diff u)+\frac{\alpha}{u}P_{\alpha X'/\eta}(\diff u).
%         % P_U(du)= 0.5(1-E[\eta/X])\delta_0(du)+\alpha/2u P_{\alpha X/\eta}(du)+0.5\delta_{2-\alpha}(du)
%     \end{aligned}
%     \label{eq:UU}
% \end{equation}
% Since $ X,X'\in \qth{\eta,1} $ and thus $ \expect{\frac{\eta}{X}}, \expect{\frac{\eta}{X'}}\le 1 $, these distributions are well-defined and $ U,U'\in \qth{0,\alpha \eta^{-1}} $.
By \prettyref{lmm:UUXY}, the first $L+1$ moments of $ U $ and $ U' $ are matched with means equal to $ \alpha $ which is less than one; moreover,
\begin{equation}        
\expect{\phi(U)} - \expect{\phi(U')} = \alpha \calE^*.
        \label{eq:UUsep}
\end{equation} 
% Finally, we set
% \begin{align}
%     P_U = \frac{1}{2} (P_{V} + \delta_{2-\alpha})
%     , \quad P_{U'} = \frac{1}{2} (P_{V'} + \delta_{2-\alpha} ).
% \end{align}

Recall the universal constants $ c $ and $c' $ defined in \prettyref{lmm:sep}. If $n \geq \frac{2k }{\log k}$,
let $ c_1 \leq 2$ be a constant satisfying 
$ \frac{c}{2}\log\frac{c}{4ec_1}>2 $ and thus $ c>4ec_1 $. % $ \frac{c}{4}\log\frac{c}{4ec_1}-c_1>2 $ and $ c_1< \frac{c}{4e} $.
Let $ \eta=\log^{-2}k $, $ L=\floor{c \log k}\le\frac{c\log k}{2}$, $ \alpha=\frac{c_1k}{n\log k} $ and $ \lambda=\alpha\eta^{-1}=\frac{c_1k\log k}{n} $. Therefore $\alpha \leq 1$.
Using \prettyref{eq:UU} and \prettyref{eq:UUsep}, we can construct two random variables $ U,U' \in[0,\lambda]  $ such that $\Expect[U]=\Expect[U']=\alpha$, $\Expect[U^j]=\Expect[U'^j]$, for all  $j \in [L]$, and $ \expect{\phi(U)} - \expect{\phi(U')} = \alpha \calE^*$. It follows from \prettyref{eq:RE} and \prettyref{lmm:sep} that $ \calE^*\geq 2c'$ and thus $ \abs{\expect{\phi(U)} - \expect{\phi(U')}}\ge 2c'\alpha $.
By the choice of $ c_1 $, applying \prettyref{lmm:tv-bound} yields $ \TV(\expect{\Poi\pth{nU/k}}, \expect{\Poi\pth{nU'/k}})\le 2k^{-2} $. Finally, applying \prettyref{lmm:poisson} and \prettyref{lmm:Rknet} with $ d=2c'\alpha $ yields the desired lower bound $ R^*(k,n/2)\gtrsim \alpha^2\asymp(\frac{k}{n \log k})^2 $. Consequently, $R^*(k,n)\gtrsim (\frac{k}{n \log k})^2 $ when $ n \ge \frac{k }{\log k} $.
If $ n\le \frac{k}{\log k}$ by monotonicity, $ R^*(k,n) \ge R^*(k,\frac{k}{\log k}) \gtrsim 1 $.
\end{proof}

\begin{remark}[Structure of the least favorable priors]
    From the proof of \prettyref{eq:RE} in \prettyref{app:moments}, we conclude that $ X,X'$ are in fact discrete random variables with disjoint support each of which has $L+2 \asymp \log k$ atoms. Therefore $ U,U' $ are also finitely-valued; however, our proof does not rely on this fact. Nevertheless,  it is instructive to discuss the structure of the prior. 
%: $ \expect{H(\sfP)}=\expect{\phi(U)}+\expect{U}\log k+\phi(1-\alpha) $. 
Except for possibly a fixed large mass, the masses of random distributions $\sfP$ and $\sfP'$ are drawn from the distribution $U$ and $U'$ respectively, which 
lie in the interval $[0,\frac{\log k}{n}]$. Therefore, although $\sfP$ and $\sfP'$ are distributions over $k$ elements, they only have $\log k$ distinct masses and the locations are randomly permuted. 
Moreover, the entropy of $\sfP$ and $\sfP'$ constructed based on $U$ and $U'$ (see \prettyref{eq:PP}) are concentrated near the respective mean values, both of which are close to $\log k$ but differ by a constant factor of $\frac{k}{n \log k}$.

   \label{rmk:prior}
\end{remark}

\subsection{Proof of Lemmas}
\label{sec:pf-lb3}
\begin{proof}[Proof of \prettyref{lmm:poisson}]
    Fix $\delta>0$.
    Let $\hat{H}(\cdot,n)$ be a near-minimax entropy estimator for fixed sample size $n$, \ie,
    \begin{equation}
        \sup_{P \in \calM_k} \Expect[(\hat{H}(N,n)-H(P))^2] \leq \delta + R^*(k,n).    
        \label{eq:delta-minimax}
    \end{equation}
    Using these estimators we construct a estimator for the Poisson model in \prettyref{eq:Rknet}. 
    Fix an arbitrary $P=(p_1,\ldots,p_k)\in \calM_k(\epsilon)$. Let $N=(N_1,\ldots,N_k)$ with $N_i \inddistr \Poi(n p_i)$ and let $n' = \sum N_i$.
    We construct an estimator for the Poisson sampling model by 
    % and minimax risk $ \tilde{R}^*(k,n,\epsilon) $
    % with minimax entropy estimator $ \hat{H}^ $ for $ R\pth {k,\sum_{i=1}^{k}N_i} $ by
    \[
    \tilde{H}(N)= \hat{H}(N,n').
    \]
    % \nb{existence of $ \hat{H}^ $?} \nbr{I am not sure. Let's circumvent this question by taking the inf?}
    The functional $ H $ is related to entropy of the normalized $ P $ by
    \begin{equation}
        H(P)=\sum_{i=1}^{k}p_i\log \frac{1}{p_i}= \pth{\sum_ip_i} \log \frac{1}{ \sum_ip_i } + \pth{\sum_ip_i}H\pth{ \frac{P}{ \sum_ip_i } }.
        \label{eq:H-unnormalized}
    \end{equation}
    Then triangle inequality and \prettyref{eq:H-unnormalized} give us
    \begin{align}
        & ~ \frac{1}{3}( \tilde{H}(N) - H(P) )^2\nonumber\\
        \le & ~  \pth{ \tilde{H}(N)-H\pth{ \frac{P}{\sum_ip_i} } }^2 
        +  \pth{ \pth{1-\sum_ip_i} H\pth{ \frac{P}{ \sum_ip_i } } }^2 
        +  \pth{ \pth{\sum_ip_i} \log \frac{1}{ \sum_ip_i }  }^2\nonumber\\
        \le & ~  \pth{ \tilde{H}(N)-H\pth{ \frac{P}{\sum_ip_i} } }^2 
        +  \pth {\epsilon\log k}^2 
        +  \pth {(1+\epsilon)\log (1+\epsilon)}^2.\label{eq:H-triangle}
    \end{align}
    % for all $ \epsilon<1 $. 
    For the first term of \prettyref{eq:H-triangle}, we observe that
    conditioned on $ n'=m $, $ N\sim \Multinom\pth{ m, \frac{P}{\sum_i p_i} } $.
    % Also note that $ R(k,m)=\sup_{p\in \reals_+^k: \sum_i^kp_i=1}\expect{ \hat{H}(N)-H(P) }^2 $ for $ N\sim \Multinom\pth{ m, p } $,
    Therefore in view of the performance guarantee in \prettyref{eq:delta-minimax}, we obtain that
    \begin{align*}
        \Expect\pth{ \tilde{H}(N)-H\pth{ \frac{P}{\sum_ip_i} } }^2 
        = & \sum_{m=0}^{\infty} \Expect \qth{ \pth{ \hat{H}(N,m)-H\pth{ \frac{P}{\sum_ip_i} } }^2 \Bigg\vert n'=m } \prob{n'=m} \\ 
        \le & \sum_{m=0}^{\infty} R^*(k,m) \prob{n'=m} + \delta. 
    \end{align*}
    Now note that for fixed $ k $, the minimax risk $n \mapsto R^*(k,n) $ is decreasing  and $0 \leq R^*(k,n)\le \pth { \log k }^2 $.
    Since $n' = \sum_{i=1}^{k}N_i\sim \Poi\pth{ n\sum_ip_i } $ and $ \abs{\sum_i^kp_i-1}\le \epsilon \leq 1/3$, we have
    \begin{align}
        \Expect\pth{ \hat{H}(N)-H\pth{ \frac{P}{\sum_ip_i} } }^2 
        \le & \sum_{m\ge n/2} R^*(k,m)\prob{n'=m} + \pth { \log k }^2 \prob{n'\le \frac{n}{2}}+\delta\nonumber\\
        \le & R^*(k,n/2) + (\log k )^2\exp(-n/50)+\delta,\label{eq:H-multinom}
    \end{align}
    where in the last inequality we used the Chernoff bound (see, \eg, \cite[Theorem 5.4]{MU06}).
    Plugging \prettyref{eq:H-multinom} into \prettyref{eq:H-triangle} and by the arbitrariness of $ \delta $, the lemma follows.
    % yields $ \prob{\sum_{i=1}^{k}N_i\le \frac{n}{2}}<\exp(-n/50) $. 
\end{proof}

\begin{proof}[Proof of \prettyref{lmm:Rknet}]
    Denote the common mean by $ \alpha\triangleq\expect{U}=\expect{U'}\le 1 $.
    Define two random vectors 
    \begin{equation}
      \sfP=\pth{ \frac{U_1}{k},\dots,\frac{U_k}{k},1-\alpha } , \quad \sfP'=\pth{ \frac{U_1'}{k},\dots,\frac{U_k'}{k},1-\alpha },
    \label{eq:PP}
\end{equation}
    where $ U_i, U_i'$ are \iid copies of $ U, U'$, respectively. Note that $ \epsilon= \frac{4\lambda}{\sqrt {k}}\ge 4\sqrt {\frac{\var [U] \vee \var [U']}{k}} $.
    Define the following events indicating that $ U_i $ and $ H(\sfP) $ are concentrated near their respective mean values:
    $$ E\triangleq\sth{\abs{\sum_i\frac{U_i}{k}-\alpha}\le \epsilon, \abs{H(\sfP)-\expect{H(\sfP)}}\le\frac{d}{4} } ,
    E'\triangleq\sth{\abs{\sum_i\frac{U_i'}{k}-\alpha}\le \epsilon, \abs{H(\sfP')-\expect{H(\sfP')}}\le\frac{d}{4} } .$$
    Using the independence of $ U_i $, Chebyshev's inequality and union bound yield that
    \begin{align}
        \prob{E^c}
        \le&~ \prob{\abs{\sum_i\frac{U_i}{k}-\alpha}> \epsilon}+\prob{\abs{H(\sfP)-\expect{H(\sfP)}}>\frac{d}{4}}\nonumber\\
        \le&~ \frac{\var[U]}{k\epsilon^2}+\frac{16\sum_i\var[\phi(U_i/k)]}{d^2}
        \le \frac{1}{16}+\frac{16\lambda^2\log^2\frac{k}{\lambda}}{kd^2},\label{eq:Ec}
    \end{align}
    where the last inequality follows from the fact that
    $ \var\qth{\phi\pth{\frac{U_i}{k}}}\le \expect{\phi\pth{\frac{U_i}{k}}}^2\le \pth{\phi\pth{\frac{\lambda}{k}}}^2 $ when $\lambda/k<e^{-1} $ by assumption. By the same reasoning,
    \begin{equation}
        \prob{E'^c}
        \le \frac{1}{16}+\frac{16\lambda^2\log^2\frac{k}{\lambda}}{kd^2}.
        \label{eq:Epc}
    \end{equation}
    Note that conditioning on $ E $ and $ E' $ the random vectors in \prettyref{eq:PP} belong to $\calM_k(\epsilon)$.
    Now we define two priors on the set $\calM_k(\epsilon)$ using \prettyref{eq:PP} with the following conditional distributions:
    \[
    \pi = P_{\sfP|E},\quad\pi'=P_{\sfP'|E'}.
    \]

    % First we consider the separation of the functional values under $ \pi,\pi' $.
    It follows from $ H(\sfP)=\frac{1}{k}\sum_i\phi(U_i)+\frac{\log k}{k}\sum_iU_i+\phi(1-\alpha) $ that
    $ \expect{H(\sfP)}=\expect{\phi(U)}+\expect{U}\log k+\phi(1-\alpha) $. Similarly,
    $ \expect{H(\sfP')}=\expect{\phi(U')}+\expect{U'}\log k+\phi(1-\alpha) $.
    By assumption $ |\expect{H(\sfP)}-\expect{H(\sfP')}|=|\expect{\phi(U)}-\expect{\phi(U')}|\ge d $.
    By the definition of events $ E,E' $ and triangle inequality, we obtain that under $ \pi,\pi' $
    \begin{equation}
        \abs{H(\sfP)-H(\sfP')}\ge \frac{d}{2}.
        \label{eq:H-sep}
    \end{equation}

    Now we consider the total variation of the sufficient statistics $N=(N_i)$ under two priors. Note that conditioned on $p_i$, we have $ N_i\sim \Poi(np_i) $.
    The triangle inequality of total variation then yields
    \begin{align}
        \TV \pth{ P_{N|E}, P_{N'|E'} }
        \le & \TV \pth{ P_{N|E}, P_N } + \TV \pth{ P_{N}, P_{N'} } + \TV \pth{  P_{N'} , P_{N'|E'}}\nonumber\\
        = & \Prob\qth{ E ^c} + \TV \pth{ P_{N}, P_{N'} } + \Prob\qth{ E'^c}\nonumber\\
        \le &  \TV \pth{ P_{N}, P_{N'} } + \frac{1}{8}+\frac{32\lambda^2\log^2\frac{k}{\lambda}}{kd^2},\label{eq:tvne}         
    \end{align}
    where in the last inequality we have applied \prettyref{eq:Ec}--\prettyref{eq:Epc}.
    Note that $ P_{N},P_{N'} $ are marginal distributions under priors $ P_{\sfP},P_{\sfP'} $ respectively.
    In view of the fact that the total variation between product distributions is at most the sum of total variations of pair of marginals, we obtain 
    \begin{align}
        \TV \pth{ P_{N}, P_{N'} }
        \le& \sum_{i=1}^{k}\TV \pth{ P_{N_i}, P_{N_i'} }+\TV(\Poi(n(1-\alpha)),\Poi(n(1-\alpha)))\nonumber\\
        =& k\TV(\expect{\Poi\pth{nU/k}}, \expect{\Poi\pth{nU'/k}}).\label{eq:tvnn}
    \end{align}
Then it follows from \prettyref{eq:H-sep}--\prettyref{eq:tvnn} and Le Cam's lemma \cite{Lecam86} that
    \begin{equation}
        \tilde{R}^*(k,n,\epsilon)
        \ge \frac{d^2}{16}\pth{\frac{7}{8}-k\TV(\expect{\Poi\pth{nU/k}}, \expect{\Poi\pth{nU'/k}})
            - \frac{32\lambda^2\log^2\frac{k}{\lambda}}{kd^2} }.
        \label{eq:Rnke}
    \end{equation}
\end{proof}

% Hence for all $ \epsilon=o(1/\log k) $, and $ n> 100 \log\log k $
% \[
% \tilde{R}^*(k,n,\epsilon)
% \le \sup_{p\in \reals_+^k: \abs{\sum_i^kp_i-1}< \epsilon }\Expect (\hat{H}(N)-H(P))^2
% \le R(k,n/2) + o(1).
% \]
% In particular, for $ n\asymp k/\log k $, if no consistent estimator exists for estimating $ R'\pth{k,2n,k^{-1/3}} $,
% then it is impossible to find a consistent estimator for $ R(k,n) $.

\begin{proof}[Proof of \prettyref{lmm:tv-bound}]
    By the assumption that $ \Expect [V^j]=\Expect[V'^j] $ when $ j\le L $, we obtain that
    \begin{align*}
      \TV(\Expect[\Poi(V)],\Expect[\Poi(V')])
      =&\frac{1}{2}\sum_{j\ge 0}\abs {\Expect [\poi(V,j)] - \Expect [\poi(V',j)]}\\
      =& \frac{1}{2}\sum_{j \ge 0}\abs{\expect{\sum_{m\ge 0}\frac{(-V)^m}{m!}\frac{V^j}{j!}} - \expect{\sum_{m\ge 0}\frac{(-V')^m}{m!}\frac{V'^j}{j!}} }\\
      =& \frac{1}{2}\sum_{j \ge 0} \frac{1}{j!}\abs{ \sum_{m> L-j}\frac{(-1)^m}{m!} ( \Expect [V ^{m+j}]-\Expect  [V'^{m+j}]) }.
    \end{align*}
    By triangle inequality and the assumption that $ V,V'\in[0,M] $, we have that
    \begin{align*}
      \TV(\Expect[\Poi(V)],\Expect[\Poi(V')])
      \le  \sum_{j \ge 0}\frac{M^j}{j!}\sum_{m> L-j}\frac{M^m}{m!}
      =&e^{2M}\sum_{j \ge 0}\sum_{m> L-j}\Prob[N_1=j,N_2=m]\\
      =&e^{2M}\Prob[N_1+N_2>L],
    \end{align*}
    where $ N_1,N_2\iiddistr \Poi(M) $ and thus $ N_1+N_2\sim \Poi(2M) $.
    Applying Chernoff bound when $ L>2M $ yields that
    \begin{equation*}
        \TV(\Expect[\Poi(V)],\Expect[\Poi(V')])
        \le e^{2M} e^{-2M}\pth{\frac{2eM}{L}}^L
        = \pth{\frac{2eM}{L}}^L.
    \end{equation*}

\end{proof}

\begin{proof}[Proof of \prettyref{lmm:UUXY}]
    Note that
    \begin{equation*}
        \expect{\phi(U)}
        =\int \pth{u\log \frac{1}{u}}\frac{\alpha}{u}P_{\alpha X/\eta}(\diff u)
        =\alpha \expect{\log\frac{\eta}{ \alpha X}}
    \end{equation*}
    and, analogously, $ \expect{\phi(U')}= \alpha \expect{\log\frac{\eta}{\alpha X'}} $.
    Therefore, $ \expect{\phi(U)} - \expect{\phi(U')} = \alpha (\expect{\log\frac{1}{ X}}-\expect{\log\frac{1}{ X'}})$. Moreover, for any $j\in[L+1]$,
    \begin{equation*}
        \expect{U^j} 
        = \int u^j\frac{\alpha}{u}P_{\alpha X/\eta}(\diff u)
        = \expect{(\alpha X/\eta)^{j-1}\alpha}
    \end{equation*}
    which coincides with $ \expect{U'^j}= \Expect \qth{(\alpha X'/\eta)^{j-1}\alpha} $, in view of the moment matching condition of $X$ and $X'$ in \prettyref{eq:Rstar}. 
    In particular, 
    $ \expect{U}=\expect{U'}=\alpha $ follows immediately. 
\end{proof}

\section{Proof of the upper bound}
\label{sec:pf-ub}
    \begin{proof}[Proof of \prettyref{prop:err-rate}]
        Given that $ N_i' $ is above (resp. below) the threshold $ c_2\log k $, we can conclude with high confidence that $ p_i $ is above (resp. below) a constant factor of $ \frac{\log k}{n} $.
        Define two events by $ E_1\triangleq\bigcap_{i=1}^k\sth{N_i' \le c_2\log k \Rightarrow p_i\le \frac{c_1\log k}{n}} $
        and $ E_2\triangleq\bigcap_{i=1}^k\sth{N_i' > c_2\log k \Rightarrow p_i> \frac{c_3\log k}{n}} $, where $ c_1>c_2>c_3 $.
        Applying the union bound and the Chernoff bound for Poissons (\cite[Theorem 5.4]{MU06}) yields that
        \begin{align}
            \prob{E_1^c}
            =& \Prob\qth{\bigcup_{i=1}^k\sth{N_i' \le c_2\log k,  p_i> \frac{c_1\log k}{n}}}\nonumber\\
            \le & k \, \Prob\qth{\Poi(c_1\log k)\le c_2\log k}\nonumber\\
            \le & \frac{1}{k^{c_1-c_2\log\frac{ec_1}{c_2}-1}},
            \label{eq:E1c}
        \end{align}
        and, entirely analogously,
        \begin{equation}
            \Prob[E_2^c]
            \le \frac{1}{k^{c_3+c_2\log\frac{ec_2}{c_3}-1}}.
            \label{eq:E2c}
        \end{equation}
        Define an event $ E\triangleq E_1\cap E_2 $. Again union bound gives us $ \prob{E^c}\le \prob{E_1^c}+\prob{E_2^c} $.

        By construction $ \hat{H}=(\tilde{H}\vee 0) \wedge \log k $, the fact $ H(P)\in[0,\log k] $ yields that $ |H(P)-\hat{H}|\le |H(P)-\tilde{H}| $ and $ |H(P)-\hat{H}|\le \log k $. So the MSE can be decomposed and upper bounded by
        % \begin{equation}
        \begin{align}
            \Expect(H(P)-\hat{H})^2
            =& \Expect[(H(P)-\hat{H})^2\Indc_E]+\Expect[(H(P)-\hat{H})^2\Indc_{E^c}]\nonumber\\
            \le & \Expect[(H(P)-\tilde{H})^2\Indc_E]+(\log k)^2(\prob{E_1^c}+\prob{E_2^c}).
            \label{eq:MSE0}
        \end{align}
        % \end{equation}

        % Now we need to take care of the second order moment of $ (H(P)-\tilde{H})\Indc_E $.

        Define $$ \calE_1\triangleq\sum_{i\in I_1}\phi(p_i)-g_L(N_i) , \quad
        \calE_2\triangleq\sum_{i\in I_2}\pth{\phi(p_i)-\phi\pth{\frac{N_i}{n}}-\frac{1}{2n}} ,$$
        where the (random) index sets defined by
        $$ I_1\triangleq\sth{i:N_i'\le c_2\log k,p_i\le \frac{c_1\log k}{n}} , \quad
         I_2\triangleq\sth{i:N_i'>c_2\log k,p_i> \frac{c_3\log k}{n}} $$
        are independent of $ N $ due to the independence of $N$ and $N'$.
        The implications in the event $ E $ yields 
        \begin{equation}
            (H(P)-\tilde{H})\Indc_E
            =\calE_1\Indc_E+\calE_2\Indc_E.
            \label{eq:E1E2}
        \end{equation}
        Combining \prettyref{eq:MSE0}--\prettyref{eq:E1E2} and applying triangle inequality we obtain that
        \begin{align}
            \Expect(H(P)-\hat{H})^2
            \le 2\Expect[\calE_1^2] + 2\Expect[\calE_2^2] + (\log k)^2(\prob{E_1^c}+\prob{E_2^c}).
            \label{eq:MSE}
        \end{align}
Next we proceed to consider the error terms $ \calE_1 $ and $ \calE_2 $ separately. % and then combine them together in the end.

        \paragraph{Case 1: Polynomial estimator}
        It is known that (see, \eg, \cite[Section 7.5.4]{timan63}) the optimal uniform approximation error of $ \phi $ by degree-$ L $ polynomials on $ [0,1] $ satisfies 
        $ L^2 E_L\pth{\phi,[0,1]} \to c>0$ as $L\to \infty$. Therefore $E_L\pth{\phi,[0,1]} \lesssim L^{-2}.$
        By a change of variables, it is easy to show that 
        $$ E_{L}\pth{\phi,\qth{0,\frac{c_1\log k}{n}}} = \frac{c_1\log k}{n} E_{L}\pth{\phi,\qth{0,1}} \lesssim \frac{1}{n\log k}  .$$ 
        By definition, $ I_1\subseteq \{i:p_i\le \frac{c_1\log k}{n}\} $. 
        Since $ g_L(N_i) $ is an unbiased estimator of $ P_L(p_i) $, the bias can be bounded by the uniform approximation error almost surely as
        \begin{equation}
            |\Expect[\calE_1|I_1]|
            =\abs{\sum_{i\in I_1}p_i\log\frac{1}{p_i}-P_L(p_i)} \leq k E_{L}\pth{\phi,\qth{0,\frac{c_1\log k}{n}}}
            \lesssim \frac{k}{n \log k}.
            \label{eq:biasE1}
        \end{equation}

        Next we consider the conditional variance of $ \calE_1 $. In view of the fact that 
%        $ I_1\subseteq\sth{i:p_i\le \frac{c_1\log k}{n}} $ and 
        the standard deviation of sum of random variables is at most the sum of individual standard deviations, we obtain that
        % \nbr{PK: fix align!}
        \begin{align*}
            \var\qth{\calE_1|I_1}
            =& \sum_{i\in I_1}\var\qth{\phi(p_i)-g_L(N_i)}
            \le \sum_{i:p_i\le \frac{c_1\log k}{n}}\var\qth{g_L(N_i)}\\
            =& \sum_{i:p_i\le \frac{c_1\log k}{n}}\var\qth{\sum_{m\ne 1}\frac{a_m}{\pth{c_1\log k}^{m-1}}\frac{(N_i)_{m}}{n}
                +\pth{a_1+\log\frac{n}{c_1\log k}}\frac{N_i}{n}}\\
            \le&\frac{1}{n^2}\sum_{i:p_i\le \frac{c_1\log k}{n}}\pth{\sum_{m\ne 1}\frac{\abs{a_m}}{\pth{c_1\log k}^{m-1}}\sqrt {\var(N_i)_{m}}
                +\abs{a_1+\log\frac{n}{c_1\log k}}\sqrt {\var(N_i)}}^2.
        \end{align*}
        Since $ 0 \leq \phi(x) \leq e^{-1}$ on $ [0,1] $ then $\sup_{0\leq x\leq 1}|p_L(x)-\phi(x)| = E_L(\phi,[0,1]) \leq e^{-1}$.
        Therefore $\sup_{0\leq x\leq 1} |p_L(x)| \leq 2e^{-1}$.
        From the proof of \cite[Lemma 2, p. 1035]{CL11} we know that the polynomial coefficients can by upper bounded by $ |a_m|\le 2e^{-1} 2^{3L} $. 
%        In the regime of interest that $ n $ is smaller than exponential order of $ k $, 
Since $\log n \leq C \log k$, we have
        $ \abs{a_1+\log\frac{n}{c_1\log k}}\lesssim 2^{3L} $.
        Therefore all polynomial coefficients can be upper bounded by a constant factor of $ 2^{3L} $.
        We also need the following lemma to upper bound the variance of $ (N_i)_{m} $:
        \begin{lemma}
            Suppose $ X\sim\Poi(\lambda) $ and $ (x)_m=\frac{x!}{(x-m)!} $.
            Then $ \var(X)_m $ is increasing in $ \lambda $ and
            \[
                \var (X)_m
                =\lambda^mm!\sum_{k=0}^{m-1} \binom{m}{k} \frac{\lambda^k}{k!}
%            \le \lambda^mm!m\pth{ \frac{\pth{2e}^{2\sqrt{\lambda m}}}{\pi\sqrt{\lambda m}} \vee 1}.
                \le (\lambda m)^m \pth{ \frac{\pth{2e}^{2\sqrt{\lambda m}}}{\pi\sqrt{\lambda m}} \vee 1}.
            \]
            \label{lmm:varfm-upper}
        \end{lemma}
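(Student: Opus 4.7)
The plan is to first derive the exact combinatorial formula for $\var(X)_m$ (from which the monotonicity claim is immediate), and then to prove the upper bound by combining a termwise comparison against a modified Bessel series with Stirling's inequality.

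For the equality, I would invoke the falling-factorial product identity $(X)_m^2=\sum_{k=0}^m \binom{m}{k}^2 k!\,(X)_{2m-k}$, a standard umbral identity provable by counting pairs of injections $[m]\hookrightarrow[X]$. Taking expectations under $X\sim \Poi(\lambda)$ and using the Poisson factorial moment $\Expect[(X)_j]=\lambda^j$ gives $\Expect[(X)_m^2]=\sum_k \binom{m}{k}^2 k!\,\lambda^{2m-k}$; the $k=0$ term exactly cancels $(\Expect[(X)_m])^2=\lambda^{2m}$. Re-indexing via $j=m-k$ and using the elementary identity $\binom{m}{j}^2(m-j)!=m!\binom{m}{j}/j!$ yields the claimed closed form. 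Monotonicity in $\lambda$ is then immediate since $\var(X)_m$ is a polynomial in $\lambda$ with nonnegative coefficients.

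For the upper bound, the plan is to bound $\binom{m}{j}/j!=m!/((m-j)!(j!)^2)\le m^j/(j!)^2$ (from $m!/(m-j)!\le m^j$), so that
\[
\sum_{j=0}^{m-1}\binom{m}{j}\lambda^j/j!\;\le\;\sum_{j=0}^{\infty}(m\lambda)^j/(j!)^2\;=\;I_0\pth{2\sqrt{m\lambda}},
\]
the modified Bessel function of the first kind. Next I would use the elementary bound $I_0(x)\le\cosh(x)\le e^x$, obtained by comparing the series $I_0(x)=\sum_k (x/2)^{2k}/(k!)^2$ with $\cosh(x)=\sum_k x^{2k}/(2k)!$ termwise via $\binom{2k}{k}\le 4^k$. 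This delivers $I_0(2\sqrt{m\lambda})\le e^{2\sqrt{m\lambda}}$. Combining with Stirling's inequality $m!/m^m\le \sqrt{2\pi m}\,e^{-m+1/12}$ for $m\ge 1$ then gives
\[
\var(X)_m/(\lambda m)^m\;\le\;\sqrt{2\pi m}\,e^{-m+1/12}\,e^{2\sqrt{m\lambda}}.
\]

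The final step is a numerical verification that this expression is dominated by $(2e)^{2\sqrt{m\lambda}}/(\pi\sqrt{m\lambda})=4^{\sqrt{m\lambda}}e^{2\sqrt{m\lambda}}/(\pi\sqrt{m\lambda})$. Writing $z=\sqrt{m\lambda}$, this reduces to $\pi^{3/2}\sqrt{2m}\,e^{-m+1/12}\,z\le 4^z$. The two variables decouple via $\max_{m\ge 1}\sqrt{m}\,e^{-m}=e^{-1}$ (attained at $m=1$) and $\max_{z>0}z\cdot 4^{-z}=1/(e\ln 4)$, reducing the problem to the numerical check $\pi^{3/2}\sqrt{2}\,e^{-11/12}/(e\ln 4)<1$. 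A calculus check shows that $(2e)^{2z}/(\pi z)$ is uniformly bounded below by a constant greater than $1$ for $z>0$, so the $\vee 1$ in the stated bound matters only in the trivial edge case $\lambda=0$ where both sides vanish. The main obstacle is precisely this tight tuning of constants: the $4^z$ slack built into the target bound is exactly what the $e^{-m}$ from Stirling, paired with the crude $I_0(x)\le e^x$, can afford, so neither a sharper $I_0$ bound nor a looser treatment of $m!$ would streamline the argument.
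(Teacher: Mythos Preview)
Your argument is correct. Both the exact-formula step and the upper bound go through as you describe; the falling-factorial product identity $(X)_m^2=\sum_k\binom{m}{k}^2k!\,(X)_{2m-k}$ is indeed standard, your re-indexing is right, and the numerical check $\pi^{3/2}\sqrt{2}\,e^{-23/12}/\ln 4\approx 0.84<1$ closes the constant verification.

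The route differs from the paper's in both halves. For the equality, the paper computes $\Expect[(X)_m^2]$ by summing the Poisson pmf directly, shifting the index to get $\lambda^m m!\,\Expect\binom{X+m}{X}$, and then applying the Vandermonde identity $\binom{X+m}{X}=\sum_k\binom{m}{k}\binom{X}{X-k}$; you instead expand $(X)_m^2$ polynomially in falling factorials before taking the expectation. Both are one-line derivations once the relevant identity is named. For the upper bound the difference is more substantive: after the common step $\binom{m}{k}\lambda^k/k!\le(\lambda m)^k/(k!)^2$, the paper bounds the finite sum by $m$ times its largest term at $k^*=\lfloor\sqrt{\lambda m}\rfloor$, then applies Stirling to $k^*!$; this naturally produces the case split $\lambda m<1$ (whence $k^*=0$ and the extra factor is $1$) versus $\lambda m\ge 1$, which is where the $\vee 1$ in the statement comes from. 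Your approach replaces the max-term estimate by the closed form $\sum_k(\lambda m)^k/(k!)^2=I_0(2\sqrt{\lambda m})\le e^{2\sqrt{\lambda m}}$ and applies Stirling to $m!$ instead of $k^*!$. Your path avoids the case distinction entirely (since $(2e)^{2z}/(\pi z)>1$ for all $z>0$) and is arguably cleaner, at the cost of the explicit numerical verification at the end; the paper's max-term argument is self-contained and makes the origin of the $\vee 1$ transparent.
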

        Recall that $ L=c_0\log k $. Let $c_0 \leq c_1$.
        The monotonicity in \prettyref{lmm:varfm-upper} yields that $ \var(N_i)_{m}\le \var(\tilde{N})_{m} $ where $ \tilde{N}\sim\Poi(c_1\log k) $ whenever $ p_i\le \frac{c_1\log k}{n} $.
        Applying the upper bound in \prettyref{lmm:varfm-upper} and in view of the relation that $ m\le c_0\log k \le c_1\log k $, the conditional variance can be further upper bounded by the following 
        \begin{align}
            \var\qth{\calE_1|I_1}
            \lesssim & \frac{k}{n^2}\pth{\sum_{m=0}^{L}\frac{2^{3L}}{\pth{c_1\log k}^{m-1}}\sqrt {((c_1\log k)(c_1\log k))^m(2e)^{2\sqrt{(c_0\log k) (c_1\log k)}}}}^2\nonumber\\
            = & \frac{k}{n^2}\pth{\sum_{m=0}^{L}k^{(c_0\log 8+\sqrt{c_0c_1}\log(2e))}c_1\log k}^2\nonumber \\
            \lesssim & \frac{(\log k)^4}{n^2}k^{1+2(c_0\log 8+\sqrt{c_0c_1}\log(2e))}.\label{eq:varE1}
        \end{align}
        From \prettyref{eq:biasE1}--\prettyref{eq:varE1} we conclude that
        \begin{equation}
            \Expect[\calE_1^2] = \expect{\Expect[\calE_1|I_1]^2 + \var(\calE_1|I_1)} \lesssim \pth{\frac{k}{n\log k}}^2
            \label{eq:MSE_E1}
        \end{equation}
%        for sufficiently large $ k $ and sufficiently small $ c_0 $ such that 
as long as
        \begin{equation}
    c_0\log 8+\sqrt{c_0c_1}\log(2e)< \frac{1}{4}.
    \label{eq:c0}
\end{equation}

        % Let $ \calE_1(i)=\phi(p_i)-g_L(N_i) $, then $ \calE_1=\sum_{i\in I_1}\calE_1(i) $. Some basic algebra shows that
        % \[
        % \var(\calE_1)
        % =\sum_{i=1}^{k}\var\pth{\calE_1(i)}\Prob(i\in I_1)+(\Expect\calE_1(i))^2\Prob(i\in I_1)\Prob(i\notin I_1).
        % \]
        
        \paragraph{Case 2: Bias-corrected plug-in estimator}
        First note that $\calE_2$ can be written as
        \begin{equation}
            \calE_2=\sum_{i\in I_2}\pth{(p_i-\hat{p}_i)\log\frac{1}{p_i}+\hat{p}_i\log\frac{\hat{p}_i}{p_i}-\frac{1}{2n}},
            \label{eq:E2}
        \end{equation}
        where $ \hat{p}_i=\frac{N_i}{n} $ is an unbiased estimator of $ p_i $ since  $ N_i\sim\Poi(np_i) $. The first term is thus unbiased conditioned on $I_2$. Note the following elementary bounds on the function $x \log x$: 
        \begin{lemma}
    For any $x>0$,
        % We need the following bounds inspired from Taylor's expansion of $ x\log x $ at $ x=1 $:
        \begin{align*}
            0 \leq x\log x - (x-1) - \frac{1}{2}(x-1)^2 + \frac{1}{6}(x-1)^3 \leq \frac{(x-1)^4}{3}.
        \end{align*}
    \label{lmm:xlogx}
\end{lemma}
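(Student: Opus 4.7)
The quantity $g(x) := x\log x - (x-1) - \frac{1}{2}(x-1)^2 + \frac{1}{6}(x-1)^3$ is precisely the fourth-order Taylor remainder of $f(x)=x\log x$ about $x=1$, which matches the easy values $f(1)=0$, $f'(1)=1$, $f''(1)=1$, $f'''(1)=-1$ and $f^{(4)}(x)=2/x^3$. My plan is to exploit two different forms of this remainder, one for each inequality.

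For the \emph{lower bound}, I will invoke Taylor's theorem in Lagrange form to write $g(x)=\frac{f^{(4)}(\xi)}{24}(x-1)^4=\frac{(x-1)^4}{12\,\xi^3}$ for some $\xi$ strictly between $1$ and $x$; since $\xi>0$, non-negativity is immediate. (Equivalently, a quick calculation gives $g''(x)=\frac{1}{x}+x-2\geq 0$ by AM--GM, so $g$ is convex with minimum $g(1)=0$.)

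For the \emph{upper bound}, the Lagrange form is not strong enough, because when $x$ is near $0$ the point $\xi$ can be arbitrarily small and $1/\xi^3$ blows up. I will therefore switch to the integral form of the remainder:
\[
g(x) \;=\; \int_1^x \frac{(x-t)^3}{6}\cdot\frac{2}{t^3}\,\diff t \;=\; \int_1^x \frac{(x-t)^3}{3\,t^3}\,\diff t.
\]
When $x\geq 1$, the variable $t$ stays in $[1,x]$, so $1/t^3\leq 1$, which gives the very comfortable bound $g(x)\leq (x-1)^4/12\leq (x-1)^4/3$. The genuine difficulty is the regime $0<x<1$, where I rewrite $g(x)=\int_x^1 \frac{(t-x)^3}{3\,t^3}\,\diff t$; here the naive bound $1/t^3\leq 1/x^3$ is useless because it diverges. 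The key observation is to bound the ratio as a whole: for $t\in[x,1]$,
\[
\frac{t-x}{t} \;=\; 1-\frac{x}{t} \;\leq\; 1-x,
\]
the inequality holding because $t\leq 1$. Cubing, integrating in $t$ over $[x,1]$, and simplifying then yields $g(x)\leq (1-x)^4/3=(x-1)^4/3$, which is moreover sharp at $x=0$ (both sides equal $1/3$).

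The main technical hurdle is thus the case $0<x<1$ with $x\to 0$; once the ratio bound $\frac{t-x}{t}\leq 1-x$ is observed, the remaining work is just substitution and a trivial integration.
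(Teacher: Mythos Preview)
Your proof is correct and essentially the same as the paper's. The paper also writes the remainder in integral form, as $g(x)=\tfrac{1}{3}\int_1^x (x/t-1)^3\,\diff t$ (identical to your $\int_1^x \tfrac{(x-t)^3}{3t^3}\,\diff t$), and then bounds the integrand by $(x-1)^3$ for $x>1$ and by $(1-x)^3$ for $x<1$ via exactly your observation $1-\tfrac{x}{t}\le 1-x$; the only cosmetic differences are that the paper reads off both inequalities from the integral (positivity of the integrand gives the lower bound directly), and for $x>1$ it bounds the whole ratio $(x/t-1)^3\le (x-1)^3$ rather than $1/t^3\le 1$, yielding the target $\tfrac{(x-1)^4}{3}$ instead of your sharper $\tfrac{(x-1)^4}{12}$.
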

    Applying the above facts to $ x=\frac{\hat{p}_i}{p_i} $, we obtain that
    \begin{align*}
        \sum_{i\in I_2}p_i\frac{\hat{p}_i}{p_i}\log\frac{\hat{p}_i}{p_i}
        \ge&\sum_{i\in I_2}(\hat{p}_i-p_i)+\frac{(\hat{p}_i-p_i)^2}{2p_i}-\frac{(\hat{p}_i-p_i)^3}{6p_i^2}, \\ %\label{eq:pk1}\\
        \sum_{i\in I_2}p_i\frac{\hat{p}_i}{p_i}\log\frac{\hat{p}_i}{p_i}
        \le&\sum_{i\in I_2}(\hat{p}_i-p_i)+\frac{(\hat{p}_i-p_i)^2}{2p_i}
        -\frac{(\hat{p}_i-p_i)^3}{6p_i^2}+\frac{(\hat{p}_i-p_i)^4}{3p_i^3}. % \label{eq:pk2}
        % 0 \leq \sum_{i\in I_2}p_i\frac{\hat{p}_i}{p_i}\log\frac{\hat{p}_i}{p_i}- \sum_{i\in I_2}(\hat{p}_i-p_i)+\frac{(\hat{p}_i-p_i)^2}{2p_i}-\frac{(\hat{p}_i-p_i)^3}{6p_i^2} \le&\sum_{i\in I_2}  \frac{(\hat{p}_i-p_i)^4}{3 p_i^3}. % \label{eq:pk2}
    \end{align*}
    Plugging the inequalities above into \prettyref{eq:E2} and taking expectation on both sides conditioned on $ I_2 $, using the central moments of Poisson distribution that $ \Expect(X-\Expect[X])^2=\lambda, \Expect(X-\Expect[X])^3=\lambda, \Expect(X-\Expect[X])^4=\lambda(1+3\lambda) $ when $ X\sim \Poi(\lambda) $, we obtain that
        \[
        -\sum_{i\in I_2}\frac{1}{6n^2p_i}
        \le \Expect\qth{\calE_2|I_2}
        \le \sum_{i\in I_2}\frac{1+3np_i}{3n^3p_i^2}-\frac{1}{6n^2p_i}.
        \]
        By definition, $ I_2\subseteq \{i:p_i>\frac{c_3\log k}{n}\} $ and $|I_2| \leq k$.
%        $I_2 $ has cardinality at most $ \frac{n}{c_3 \log k}\wedge k $.
        Hence, almost surely,
        \begin{equation}
            \abs{\Expect\qth{\calE_2|I_2}}
            \lesssim \sum_{i\in I_2}\frac{1}{n^2p_i} + \sum_{i\in I_2}\frac{1}{n^3 p_i^2}
            \lesssim 
%            \frac{1}{(\log k)^2}\wedge\frac{k}{n\log k} \leq 
            \frac{k}{n\log k}.
            \label{eq:biasE2}
        \end{equation}
%        Note that $ p_i>\frac{c_3\log k}{n} $ and $ 
        
        It remains to bound the variance of the plug-in estimator. Note that
        \begin{align}
            \var\qth{\calE_2|I_2}
            \le \sum_{i:p_i>\frac{c_3\log k}{n}}\var\qth{\phi(p_i)-\phi(\hat{p}_i)}
            \le \sum_{i:p_i>\frac{c_3\log k}{n}}\Expect\pth{\phi(p_i)-\phi(\hat{p}_i)}^2.
            \label{eq:varE2-1}
        \end{align}
        % We will show in the following to conclude that $ \Expect\qth{\phi(p_i)-\phi(\hat{p}_i)}^2\le ... $ for any $ p_i>\frac{c_3\log k}{n} $:
%        
%        % 1. $ \hat{p}_i<p_i/2 $:
%        % \[
%        % \Expect[\phi(\hat{p_i})-\phi(p_i)]^2\indc{\hat{p}_i<p_i/2}
%        % \lesssim \prob{N_i<np_i/2}
%        % \le \frac{1}{k^{c_3\pth{1-\frac{1}{2}\log 2e}}},
%        % \]
%        % since $ \phi $ is bounded on $ [0,1] $ and $ np_i>c_3\log k $.
%
%        % 2. $ \hat{p}_i\in [p_i/2,2p_i] $:
%        % \[
%        % \Expect[\phi(\hat{p}_i)-\phi(p_i)]^2\indc{\hat{p}_i\in [p_i/2,2p_i]}
%        % \le \Expect[(\hat{p}_i-p_i)\sup_{x\in [p_i/2,2p_i] }\phi'(x)]^2
%        % \lesssim \pth{\log\frac{1}{p_i}}^2\frac{p_i}{n}.
%        % \]
%
%        % 3. $ \hat{p}_i>2p_i $:
%        % \[
%        % \Expect[\phi(\hat{p}_i)-\phi(p_i)]^2\indc{\hat{p}_i>2p_i}
%        % \le ....\nb{?}
%        % \]
%
%        % \nb{Just thought of another approach. TODO: verify if correct}  
%
        In view of the fact that $ \log x\le x-1 $ and $ x\log x\ge x-1 $ for any $ x>0 $, we have
        $$ \hat{p}_i-p_i=p_i\pth{\frac{\hat{p}_i}{p_i}-1}\le p_i\frac{\hat{p}_i}{p_i}\log\frac{\hat{p}_i}{p_i}= \hat{p}_i\log\frac{\hat{p}_i}{p_i}\le \hat{p}_i\pth{\frac{\hat{p}_i}{p_i}-1}=\hat{p}_i-p_i+\frac{(\hat{p}_i-p_i)^2}{p_i}  .$$
        Recall that $ \phi(p_i)-\phi(\hat{p}_i)=(p_i-\hat{p}_i)\log\frac{1}{p_i}+\hat{p}_i\log\frac{\hat{p}_i}{p_i} $.
        Then, by triangle inequality,
        % Within a constant factor we can obtain the following upper bound:
        \begin{align*}
            (\phi(p_i)-\phi(\hat{p}_i))^2
            \leq ~ & 2 (p_i-\hat{p}_i)^2 \log^2\frac{1}{p_i} +2 \pth{\hat{p}_i\log\frac{\hat{p}_i}{p_i}}^2\\
            \leq ~ & 2 (p_i-\hat{p}_i)^2 \log^2\frac{1}{p_i} + 4 (\hat{p}_i-p_i)^2+ \frac{4(\hat{p}_i-p_i)^4}{p_i^2}.
        \end{align*}
        Taking expectation on both sides yields that
        \[
        \Expect (\phi(p_i)-\phi(\hat{p}_i))^2
        \le \frac{2 p_i}{n}\pth{\log\frac{1}{p_i}}^2+\frac{4 p_i}{n}+\frac{12}{n^2} + \frac{4}{n^3 p_i}.
        \]
        Plugging the above into \prettyref{eq:varE2-1} and summing over $i$ such that $p_i \geq \frac{c_3\log k}{n}$, we have
        \begin{equation}
            \var[\calE_2|I_2]
            \lesssim \frac{(\log k)^2}{n}+ \frac{k}{n^2}
            \label{eq:varE2}
        \end{equation}
        where we used the fact that $\sup_{P \in \calM_k} \sum_{i=1}^k p_i \log^2 \frac{1}{p_i} \lesssim \log^2 k$.
        Assembling \prettyref{eq:biasE2}--\prettyref{eq:varE2} yields that
        \begin{equation}
            \Expect \calE_2^2
            % \lesssim \pth{\frac{1}{\log^4 k}\wedge \pth{\frac{k}{n\log k}}^2}+\frac{\log^2k}{n}.
            \lesssim \pth{\frac{k}{n\log k}}^2+\frac{\log^2k}{n}.
            \label{eq:MSE_E2}
        \end{equation}

        % Combining \prettyref{eq:MSE_E1} and \prettyref{eq:MSE_E2} in two cases into \prettyref{eq:MSE}, we obtain
        % \begin{align}
        %     \Expect(H(P)-\tilde{H})^2\Indc_E
        %     \le 2\Expect\calE_1^2
        %     +2\Expect\calE_2^2
        %     \lesssim \pth{\frac{k}{n\log k}}^2+\frac{(\log k)^2}{n}.
        %     \label{eq:MSE_E3}
        % \end{align}
        By assumption, $\log n \leq C \log k$ for some constant $C$.
        Choose $c_1>c_2>c_3>0$ such that $ c_1-c_2\log\frac{ec_1}{c_2}-1>C $ and $ c_3+c_2\log\frac{ec_2}{c_3}-1>C $ hold simultaneously, \eg, $ c_1=4(C+1), c_2=e^{-1}c_1 $, $ c_3=e^{-2}c_1 $, 
%        Plugging the MSE \prettyref{eq:MSE_E1} and \prettyref{eq:MSE_E2} in two cases and \prettyref{eq:E1c} -- \prettyref{eq:E2c} into \prettyref{eq:MSE}, 
and $c_0 \leq c_1$ satisfying the condition \prettyref{eq:c0}, \eg, $c_0=\frac{1}{300c_1}\wedge c_1\wedge 0.01$.
Plugging \prettyref{eq:MSE_E1}, \prettyref{eq:MSE_E2}, \prettyref{eq:E1c} and \prettyref{eq:E2c} into \prettyref{eq:MSE},
        we complete the proof.
    \end{proof}

    \begin{proof}[Proof of \prettyref{lmm:varfm-upper}]
        First we compute $ \Expect{(X)_m^2} $:
        \begin{align}
            \Expect{(X)_m^2}
            =&\sum_{x=0}^{\infty}\frac{e^{-\lambda}\lambda^x}{x!}\frac{x!^2}{(x-m)!^2}
            =\sum_{j=0}^{\infty}\frac{e^{-\lambda}\lambda^{j+m}}{j!}\frac{(j+m)!}{j!}
            =\lambda^mm!\Expect{\binom{X+m}{X}}\nonumber\\
            =&\lambda^mm!\expect{\sum_{k=0}^{m}\binom{m}{k}\binom{X}{X-k}}
            =\lambda^mm!\sum_{k=0}^{m}\binom{m}{k}\frac{\Expect{(X)_k}}{k!}
            =\lambda^mm!\sum_{k=0}^{m}\binom{m}{k}\frac{\lambda^k}{k!},
            \label{eq:Lag-m}
        \end{align}
%         since we know that a random variable  has factorial moment $ \Expect{(X)_k}=\lambda^k $. 
where we have used $ \Expect{(X)_k}=\lambda^k $. 
         Therefore the variance of $(X)_m$ is
        \[
        \var (X)_m
        =\lambda^mm!\sum_{k=0}^m \binom{m}{k} \frac{\lambda^k}{k!}-\lambda^{2m}
        =\lambda^mm!\sum_{k=0}^{m-1} \binom{m}{k} \frac{\lambda^k}{k!}
        \le \lambda^mm!\sum_{k=0}^{m-1}\frac{(\lambda m)^k}{(k!)^2}.
        \]
        The monotonicity of $\lambda \mapsto \var (X)_m$ follows from the equality part immediately. Since the maximal term in the summation is attained at $ k^*=\Floor{\sqrt{\lambda m}} $, we have
        \[
        \var (X)_m
        \le \lambda^mm!m\frac{(\lambda m)^{k^*}}{(k^*!)^2} \leq (\lambda m)^m \frac{(\lambda m)^{k^*}}{(k^*!)^2}
        \]

        If $ \lambda m<1 $ then $ k^*=0 $ and $ \frac{(\lambda m)^{k^*}}{(k^*!)^2}=1 $;
        otherwise $ \lambda m\ge 1 $ and hence $ \frac{\sqrt{\lambda m}}{2}<k^* \le \sqrt{\lambda m} $.
        Applying $ k^*!>\sqrt{2\pi k^*}\pth{\frac{k^*}{e}}^{k^*} $ yields 
        \[
        \frac{(\lambda m)^{k^*}}{(k^*!)^2}
        \le \frac{(\lambda m)^{k^*}}{2\pi\frac{\sqrt{\lambda m}}{2}\pth{\frac{\lambda m}{4e^2}}^{k^*}}
        =\frac{\pth{2e}^{2\sqrt{\lambda m}}}{\pi\sqrt{\lambda m}}. \qedhere
        \]
    \end{proof}
    
    \begin{remark}
    Note that the right-hand side of \prettyref{eq:Lag-m} coincides with $\lambda^mm!L_m(-\lambda)$, where $ L_m$ denotes the Laguerre polynomial of degree $ m $.
        The term $e^{\sqrt{\lambda m}}$ agrees with the sharp asymptotics of the Laguerre polynomial on the negative axis \cite[Theorem  8.22.3]{orthogonal.poly}.
%   \label{rmk:}
\end{remark}

\begin{proof}[Proof of \prettyref{lmm:xlogx}]
    It follows from Taylor's expansion of $ x\mapsto x\log x $ at $ x=1 $ that
    \[
    x\log x=(x-1)+\frac{1}{2}(x-1)^2-\frac{1}{6}(x-1)^3+\frac{1}{3}\int_1^x  \pth{\frac{x}{t}-1}^3\diff t. 
    \]
    Hence it suffices to show $ 0\le \int_1^x  \pth{\frac{x}{t}-1}^3\diff t \le (x-1)^4 $ for all $ x>0 $.
    If $ x>1 $, the conclusion is obvious since the integrand is always positive and no greater than $ (x-1)^3 $. 
    If $ x<1 $, we rewrite the integral as $ \int_x^1  \pth{1-\frac{x}{t}}^3\diff t $. Then the conclusion follows from the same reason that
    the integrand is always positive and at most $ (1-x)^3 $.
\end{proof}

    % \begin{proof}[Proof of \prettyref{lmm:GKL-bd}]
    %     The desired inequality obviously holds if $ q=0 $ for some $ i $. 
    %     The proof is based on the Taylor's expansion of $ x\log x $ at $ x=1 $.
    %     For one direction, we first observe that
    %     \[
    %     x\log x\ge (x-1)+\frac{1}{2}(x-1)^2-\frac{1}{6}(x-1)^3.
    %     \]
    %     It follows from above inequality that
    %     \[
    %     \sum_i q_i\frac{p_i}{q_i}\log \frac{p_i}{q_i}
    %     \ge \sum_i(p_i-q_i)+\frac{(p_i-q_i)^2}{2q_i}-\frac{(p_i-q_i)^3}{6q_i^2}.
    %     \]
        
    %     For the other direction, we observe that
    %     \[
    %     x\log x\le (x-1)+\frac{1}{2}(x-1)^2-\frac{1}{6}(x-1)^3+(x-1)^4.
    %     \]
    %     Similarly we obtain that
    %     \[
    %     \sum_i q_i\frac{p_i}{q_i}\log \frac{p_i}{q_i}
    %     \le \sum_i(p_i-q_i)+\frac{(p_i-q_i)^2}{2q_i}-\frac{(p_i-q_i)^3}{6q_i^2}+\frac{(p_i-q_i)^4}{q_i^3}.
    %     \]
    % \end{proof}
    
%   \begin{lemma}
%     \nb{PY please finish this one}
%     \[
%     L_n(-\lambda) \leq \exp(2 \sqrt{\lambda n}) poly(n) term...
%     \]
%     \label{lmm:lag}
% \end{lemma}
% \begin{proof}
% $L_n(-\lambda) = \sum_{k=0}^n \binom{n}{k} \frac{\lambda^k}{k!} \leq \sum_{k=0}^n a_k$, where 
% $a_k \triangleq \frac{(\lambda n)^k}{(k!)^2}$. 
%  Since $\frac{a_k}{a_{k-1}} = \frac{\lambda n}{k^2}$. Therefore the maximal $a_k$ is attained at $k=\lfloor \sqrt{\lambda n} \rfloor$. Hence
%  \[
%  L_n(-\lambda) \leq (n+1) \frac{(\lambda n)^{\lfloor \sqrt{\lambda n} \rfloor}}{(\lfloor \sqrt{\lambda n} \rfloor!)^2} \leq \nb{use non-asym Stirling approximation}
%  \]
% \end{proof}

\section{Non-asymptotic risk bounds for the plug-in estimator}
\label{app:plug}
\begin{proof}[Proof of \prettyref{prop:Rplug-rate}]
Recall the worst-case quadratic risk of the plug-in estimator $\Rplug(k,n)$ defined in \prettyref{eq:Rplug}.
We show that for any $k \geq 2$ and $n\geq 2$,
\begin{equation}
    \pth{\frac{k}{n} \wedge 1}^2 + \frac{\log^2 k}{n} \lesssim   \Rplug(k,n) \lesssim \pth{\frac{k}{n}}^2  + \frac{\log^2 (k\wedge n)}{n}
    \label{eq:Rplug-bd}
\end{equation}

The second term of the lower bound follows from the minimax lower bound \prettyref{prop:lb1} which applies to all $k$ and $n$. 
To prove the first term of lower bound, we take $ P $ as uniform distribution. We consider its bias here since squared bias is a lower bound for MSE.
We denote the empirical distribution as $ \hat{P}=\frac{N}{n} $. Applying Pinsker's inequality and Cauchy-Schwarz inequality, we obtain
\begin{align*}
    \Expect(\Hplug(N) - H)
    =&-\Expect[D(\hat{P}||P)]
    \le -2\Expect[(\TV(\hat{P},P))^2]\\
    \le& -2(\Expect[\TV(\hat{P},P)])^2
    =-2\pth{\frac{k}{2n}\Expect\abs{N_1-\frac{n}{k}}}^2,
\end{align*}
where $ N_1\sim \Binom\pth{n,\frac{1}{k}} $. 
From \cite[Theorem 1]{berend2013sharp}, we know that
$ \Expect\abs{N_1-\frac{n}{k}}=\frac{2n}{k}\pth{1-\frac{1}{k}}^n $ when $ n<k $ 
and $ \Expect\abs{N_1-\frac{n}{k}}\ge \sqrt{\frac{n}{2k}\pth{1-\frac{1}{k}}} $ when $ n\ge k $.
Therefore
\begin{align*}
    &-\Expect(\Hplug(N) - H)
    \ge 2\pth{1-\frac{1}{k}}^{2n}
    \gtrsim 1,\quad n<k,
%    \label{eq:p-con-1}
\\
    &-\Expect(\Hplug(N) - H)
    \ge \frac{k}{4n}\pth{1-\frac{1}{k}}
    \gtrsim \frac{k}{n},\quad n\ge k.
%    \label{eq:p-con-2}
\end{align*}
Consequently, 
%From \prettyref{eq:p-con-1}--\prettyref{eq:p-con-2} we conclude that
\[
\Expect[(\Hplug(N) - H)^2]
\ge [\Expect(\Hplug(N) - H)]^2
\gtrsim \pth{\frac{k}{n} \wedge 1}^2
\]

The upper bound of MSE follows from the upper bounds of bias and variance.
The squared bias can be upper bounded by $ (\frac{k-1}{n})^2 $ according to \cite[Proposition 1]{Paninski03}. 
For the variance we apply Steele's inequality \cite{steele86}:
\begin{equation}
    \var[\Hplug]\le \frac{n}{2}\Expect(\Hplug(N)-\Hplug(N'))^2,
    \label{eq:steel-H}
\end{equation}
where $ N' $ is the histogram of $(X_1,\ldots,X_{n-1},X_n')$ and $X_n'$ is an independent copy of $X_n$.
%samples given by replacing $ X_n $ with an \iid copy $ X_n' $.
Let $ \tilde N=(\tilde N_1,\dots,\tilde N_k) $ be the histogram of $ X_1^{n-1} $, then $ \tilde N\sim \Multinom(n-1,P) $ independently of $ X_n, X_n' $.
Hence, applying triangle inequality,
\begin{align}
    &\Expect(\Hplug(N)-\Hplug(N'))^2\nonumber\\
    =&\expect{\expect{\pth{\phi\pth{\frac{\tilde N_{X_n}+1}{n}}-\phi\pth{\frac{\tilde N_{X_n}}{n}}
                +\phi\pth{\frac{\tilde N_{X_n'}}{n}}-\phi\pth{\frac{\tilde N_{X_n'}+1}{n}}}^2\big|X_n,X_n'}}\nonumber\\
    \le & 4 \sum_{j=1}^{k}\expect{\pth{\phi\pth{\frac{\tilde N_{j}+1}{n}}-\phi\pth{\frac{\tilde N_{j}}{n}}}^2}p_j
    = \frac{4}{n^2}\sum_{j=1}^{k}\expect{\pth{\tilde N_{j}\log(1+\tilde N_{j}^{-1})+\log\frac{\tilde N_{j}+1}{n}}^2}p_j\nonumber\\
    \le & \frac{8}{n^2}+\frac{8}{n^2}\sum_{j=1}^{k}\expect{\log^2\frac{\tilde N_{j}+1}{n}}p_j,\label{eq:Hplug-diff}
\end{align}
where the last step follows from $ 0\le x\log(1+x^{-1})\le 1 $ for all $ x>0 $.

Now we rewrite and upper bound the last expectation:
\begin{align}
    \expect{\log^2\frac{\tilde N_{j}+1}{n}}
    =& \expect{\log^2\frac{n}{\tilde N_{j}+1}\indc{\tilde N_j\le (n-1)p_j/2}} + \expect{\log^2\frac{n}{\tilde N_{j}+1}\indc{\tilde  N_j> (n-1)p_j/2}}\nonumber\\
    \le& (\log^2n)\,\prob{\tilde N_j\le (n-1)p/2}+\log^2\frac{2n}{(n-1)p_j}.\label{eq:log-decomp}
\end{align}
Applying Chernoff bound for Binomial tail \cite[Theorem 4.5]{MU06} and plugging into \prettyref{eq:Hplug-diff} then \prettyref{eq:steel-H}, we obtain
\begin{align*}
    \var\Hplug
    \lesssim&~ \frac{1}{n}+\frac{1}{n}\sum_{j=1}^{k}p_j(\log^2p_j+ \log^2n \exp(-(n-1)p_j/8))\\
    \lesssim&~ \frac{\log^2k}{n}+\frac{\log^2n}{n}\frac{k}{n} = \frac{\log^2k}{n}\pth{1+\frac{k\log^2n}{n\log^2k}}
\end{align*}
where we have used $ \sum_{i=1}^{k} p_i\log^2 p_i\lesssim \log^2k $ and $\sup_{x>0} x \exp(-(n-1)x/8) = \frac{8}{(n-1)e} $.
We know that $ \frac{k\log^2n}{n\log^2k}\lesssim 1 $ when $ n\ge k $ and thus $ \var\Hplug\lesssim \frac{\log^2k}{n} $.
From \cite[Remark (iv), p. 168]{AK01} we also know that $ \var\Hplug(N)\lesssim \frac{\log^2 n}{n} $ for all $ n $ and consequently $ \var\Hplug(N)\lesssim \frac{\log^2(k\wedge n)}{n} $.
\end{proof}

\section{Moment matching and best polynomial approximation}
\label{app:moments}

In this appendix we discuss the relationship between moment matching and best polynomial approximation and, in particular, provide a short proof of \prettyref{eq:RE}. Let $g$ be a continuous function on the interval $[a,b]$.
Abbreviate by $\hat{\calE}^*$ the best uniform approximation error 
$E_L(g, [a,b]) =\inf_{p\in\calP_L}\sup_{x\in [a,b]}\abs{ g(x)-p(x) }$.

Let $ \calS_L=\sth{ (X,X')\in [a,b]^2: \expect{X^j} = \expect{X'^j},j = 1,\ldots,L } $. For any polynomial $ p\in \calP_L $, we have
\begin{align*}
    \calE^* & \triangleq \sup_{(X,X')\in \calS_L}{ \expect{ g(X) } - \expect{ g(X') } } \\
    & = \sup_{(X,X')\in \calS_L}{ \expect{ g(X)-p(X)} - \expect{ g(X')-p(X')} },
\end{align*}
and therefore by triangle inequality
\begin{align*}
    \calE^* & = \inf_{p\in\calP_L}\sup_{(X,X')\in \calS_L}{ \expect{ g(X)-p(X)} - \expect{ g(X')-p(X')} }\\
    & \le 2 \inf_{p\in\calP_L} \sup_{x\in [a,b]}\abs{ g(x)-p(x) } = 2 E_L(g, [a,b]).
\end{align*}

For the achievability part, Chebyshev alternating theorem \cite[Theorem 1.6]{petrushev2011rational} states that 
there exists a (unique) polynomial $ p^*\in\calP_L $ and at least $ L+2 $ points $ a\le x_1<\dots<x_{L+2}\le b $ and $ \alpha\in \sth{0,1} $
such that $ g(x_i)-p^*(x_i)=(-1)^{i+\alpha}\hat{\calE}^* $.
Fix any $ l=0,1,\dots,L $, define a Lagrange interpolation polynomial 
$ f_l(x)\triangleq \sum_{j=1}^{L+2}x_j^l\frac{\prod_{v\ne j}(x-x_v)}{\prod_{v\ne j}(x_j-x_v)} $ satisfying that $ f_l(x_j)=x_j^l $ for $ j=1,\dots,L+2 $.
Since $f_l$ has degree $L+1$, it must be that $ f_l(x)=x^l $.
Note that the coefficient of $ x^{L+1} $ of polynomial $ f_l $ is $ 0 $, \ie, $ \sum_ix_i^lb_i=0 $ where $ b_i\triangleq(\prod_{v\ne i}(x_i-x_v))^{-1} $.
Define $ w_i=\frac{2b_i}{\sum_j|b_j|} $, then $ \sum_i|w_i|=2 $.
When $ l=0 $ then $ \sum_ib_i=0 $ so $ \sum_iw_i=0 $.
Note that $ w_i $ change signs alternatively.
Construct discrete random variables $ X,X' $ with distributions $ \prob{X=x_i}=\abs{w_i} $ for $ i $ odd and $ \prob{X'=x_i}=\abs{w_i} $ for $ i $ even. Then $ (X,X')\in S_L $.
The property of those $ L+2 $ points that $ g(x_i)-p^*(x_i)=(-1)^{i+\alpha}\hat{\calE}^* $ yields that $ \abs{ \expect{ g(X)-p^*(X)} - \expect{ g(X')-p^*(X')} }=2\hat{\calE}^* $.

\begin{remark}
    Alternatively, the achievability part can be argued from an optimization perspective (zero duality gap, see \cite[Exercise 8.8.7, p. 236]{Luenberger69}), or using the Riesz representation of linear operators as in \cite{DL93}, which has been used in \cite{LNS99} and \cite{CL11}.
\end{remark}

\section{Best polynomial approximation of the logarithm function}
\label{app:error}

\begin{proof}[Proof of \prettyref{lmm:sep}]
    Recall the best uniform polynomial approximation error $ E_m(f, I)$ defined in \prettyref{eq:EL}. Put $ E_m(f) \triangleq E_m(f,[-1,1]) $. 
    In the sequel we shall slightly abuse the notation by assuming that $c L\in\naturals$, for otherwise the desired statement holds with $c$ replaced by $c/2$.
    % , in view of the fact that $m \mapsto E_m(f,I) $ is decreasing.
    % For all $ c\le 10^{-4} $ and $ L>400 \log \frac{1}{2\pi c} $, $ E_{cL}(\log, [1/L^2,1])\ge \frac{1}{200} $.
    % Equivalently, for $ \eta_L = (c/L)^2 $, $ E_{L}(\log, [\eta_L,1])=\Omega(1) $.
    Through simple linear transformation we see that $ E_{cL}(\log,[L^{-2},1])=E_{cL}(f_L) $
    where $$ f_L(x)=-\log \pth {\frac{1+x}{2}+\frac{1-x}{2L^2}} .$$
%In order to show that the desired result $E_{cL}(f_L) \gtrsim 1$ makes sense, 

The difficulty in proving the desired 
\begin{equation}
E_{cL}(f_L) \gtrsim 1   
        \label{eq:sep1}
\end{equation}
lies in the fact that the approximand $f_L$ changes with the degree $L$. In fact, the following asymptotic result has been shown in \cite[Section 7.5.3, p.~445]{timan63}: $E_L(\log(a-x)) = \frac{1+o(1)}{L\sqrt{a^2-1} (a+\sqrt{a^2-1})^L}$ for \emph{fixed} $a>1$ and $L\diverge$. In our case $E_{cL}(f_L) = E_{cL}(\log(a-x))$ with $a=\frac{1+L^{-2}}{1-L^{-2}}$. The desired \prettyref{eq:sep1} would follow if one substituted this $a$ into the asymptotic expansion of the approximation error, which, of course, is not a rigorous approach. To prove \prettyref{eq:sep1}, we need non-asymptotic lower and upper bounds on the approximation error.     
There exist many characterizations of approximation error, such as Jackson's theorem, in term of various moduli of continuity of the approximand. 
    Let $\Delta_m(x)=\frac{1}{m}\sqrt{1-x^2}+\frac{1}{m^2}$ and define the following modulus of continuity for $f$  (see, \eg, \cite[Section 3.4]{petrushev2011rational}):
    $$ \tau_1(f,\Delta_m)=\sup\{|f(x)-f(y)|: x,y\in[-1,1],|x-y|\le \Delta_m(x)\} .$$
%    Initiated from Jackson's theorem, there exist many characterizations of approximation error in term of refined modulus of continuity. We choose $ \tau_1 $ from \cite[Section 3.4]{petrushev2011rational} for our proof. 
    We first state the following bounds on $ \tau_1 $ for $f_L$:

    \begin{lemma}[Direct bound]
        \begin{equation}
            \tau_1(f_L,\Delta_m) \le \log \pth{\frac{2L^2}{m^2}},~\forall m\le 0.1L.
            \label{eq:direct}
        \end{equation}
        \label{lmm:direct}
    \end{lemma}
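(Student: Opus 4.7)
The plan is to reduce the modulus bound to an estimate on the ratio $g_L(y)/g_L(x)$, where we write $f_L=-\log g_L$ with $g_L(x)=\tfrac{1+x}{2}+\tfrac{1-x}{2L^2}$. This $g_L$ is affine with slope $\tfrac{L^2-1}{2L^2}\in(0,\tfrac12]$, and maps $[-1,1]$ onto $[L^{-2},1]$, so that
\[
|f_L(x)-f_L(y)|=\Bigl|\log\frac{g_L(y)}{g_L(x)}\Bigr|\le \log\Bigl(1+\frac{|y-x|/2}{g_L(\min(x,y))}\Bigr),
\]
since the increment of $g_L$ over an interval of length $h$ is at most $h/2$. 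The problem then reduces to bounding $\Delta_m(x)/g_L(\min(x,y))$ by a constant multiple of $L^2/m^2$ and verifying that the resulting bound fits inside $\log(2L^2/m^2)$.

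I would split the argument according to the size of $1+x$ relative to $\Delta_m(x)$. In the \emph{interior case} $1+x\ge 2\Delta_m(x)$, one has $1+y\ge (1+x)/2$ (whether $y<x$ or $y>x$), and since $g_L(t)\ge \tfrac{3}{8}(1+t)$ for $L\ge 2$, we get $g_L(\min(x,y))\gtrsim 1+x$. Using $\sqrt{1-x^2}\le\sqrt{2(1+x)}$ together with the case hypothesis $1+x\ge 2/m^2$, a short computation gives $\Delta_m(x)/(1+x)=O(1)$, so the ratio $g_L(y)/g_L(x)$ is bounded by an absolute constant, which is trivially dominated by $2L^2/m^2$ under $m\le 0.1L$.

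In the \emph{boundary case} $1+x<2\Delta_m(x)$, set $u=1+x$ and insert $\sqrt{1-x^2}\le\sqrt{2u}$ into $u\le 2\Delta_m(x)$ to obtain the quadratic inequality $u\le 2\sqrt{2u}/m+2/m^2$. Solving for $\sqrt{u}$ gives $u\le (6+4\sqrt{2})/m^2$; a slightly sharper form (applicable because the extremal configuration is $y=-1$) gives $u\le (2+\sqrt{3})/m^2$. Combined with the crude lower bound $g_L(\min(x,y))\ge L^{-2}$, the ratio becomes $g_L(x)/g_L(y)\le 1+u(L^2-1)/2\le 1+(2+\sqrt{3})(L^2-1)/(2m^2)$. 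The hypothesis $m\le 0.1L$ forces $L^2/m^2\ge 100$, which supplies exactly the slack $\bigl(2-\tfrac{2+\sqrt{3}}{2}\bigr)L^2/m^2\ge 1$ needed to absorb the additive constant and conclude that $g_L(x)/g_L(y)\le 2L^2/m^2$.

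The main obstacle is the tightness of the constants in the boundary case: the factor $2$ in the claimed bound $\log(2L^2/m^2)$ is barely achieved, and the specific threshold $m\le 0.1L$ is used precisely to guarantee that the residual constants lost in the quadratic solve are absorbed by $L^2/m^2$. Everywhere else the estimates rely only on the affine structure of $g_L$ and the elementary comparison between $\log$ and the relative increment. Assembling the two cases and exponentiating yields the stated bound $\tau_1(f_L,\Delta_m)\le\log(2L^2/m^2)$.
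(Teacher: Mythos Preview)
Your approach is structurally sound and close in spirit to the paper's, but the boundary case has a genuine gap. You assert that ``the extremal configuration is $y=-1$'' in order to upgrade the crude bound $u\le(6+4\sqrt{2})/m^2$ (which follows honestly from $u<2\Delta_m(x)$) to the sharp bound $u\le(2+\sqrt{3})/m^2$ (which follows only from $u\le\Delta_m(x)$). This upgrade is exactly what is at stake: with the crude bound, your estimate $g_L(x)/g_L(y)\le 1+u(L^2-1)/2$ yields a leading constant $(3+2\sqrt{2})\approx 5.83$, which exceeds the target $2$, so the lemma as stated does not follow. The claim that the maximum of $f_L(x-\Delta_m(x))-f_L(x)$ over the range where $x-\Delta_m(x)\ge -1$ is attained at the left endpoint (where $y=-1$) is \emph{not} automatic; it is a monotonicity statement. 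The paper proves it by introducing $\beta_L(x)=\Delta_m(x)/\bigl(x+\tfrac{L^2+1}{L^2-1}\bigr)$, writing the difference as $-\log(1-\beta_L(x))$, and checking from the derivative that $\beta_L$ is decreasing for $x$ larger than $(1-L^2)/(1+L^2)$, hence on the relevant range. You would need an argument of this type.

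There is a second, smaller omission: in your boundary case you only treat $y<x$ (you write the ratio as $g_L(x)/g_L(y)$), but convexity of $f_L$ does not by itself dispose of $y>x$ once $x-\Delta_m(x)<-1$, since the admissible leftward step is then shorter than the rightward one. The paper handles the $y>x$ direction globally by showing $\beta_L(x)\le L^2/m^2$ for all $x\in[-1,1]$ via the elementary inequality $\sqrt{1-x^2}\le L(1+x)+1/L$ (an AM--GM consequence), giving $\log(1+\beta_L(x))\le\log(2L^2/m^2)$. Incorporating these two ingredients---the monotonicity of $\beta_L$ for the leftward step and the uniform bound $\beta_L\le L^2/m^2$ for the rightward step---would close both gaps; without them, your argument only delivers $\tau_1(f_L,\Delta_m)\le\log(CL^2/m^2)$ with $C\approx 6$, which suffices for the downstream application but not for the lemma as stated.
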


    \begin{lemma}[Converse bound]
        \begin{equation}
            \tau_1(f_L,\Delta_L)\ge 1, \forall L\ge 10.
            \label{eq:converse}
        \end{equation}
        \label{lmm:converse}
    \end{lemma}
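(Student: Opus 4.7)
The plan is to exhibit explicit points $x,y\in[-1,1]$ satisfying $|x-y|\le \Delta_L(x)$ such that $|f_L(x)-f_L(y)|\ge 1$. Since $f_L$ is strictly decreasing on $[-1,1]$ with $f_L(-1)=2\log L$, and since its rapid variation is confined to a $1/L^2$-neighborhood of $-1$ while $\Delta_L$ itself is of order $1/L^2$ near the endpoint, the natural choice is $y=-1$ and $x=-1+c/L^2$ for a constant $c$ to be optimized. This places $y$ where $f_L$ attains its maximum, so the resulting gap is as large as the admissibility constraint permits.

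Setting $g_L(x)\triangleq (1+x)/2+(1-x)/(2L^2)$, so that $f_L=-\log g_L$, the functional gap becomes $f_L(-1)-f_L(x)=\log(L^2 g_L(x))$, and plugging in $x=-1+c/L^2$ yields $L^2 g_L(x)=(c+2-c/L^2)/2$. The admissibility constraint $c/L^2\le \Delta_L(x)=\sqrt{1-x^2}/L+1/L^2$ reduces, after using $\sqrt{1-x^2}=\sqrt{c(2-c/L^2)}/L$ and squaring (valid when $c\ge 1$), to $(c-1)^2\le c(2-c/L^2)$, or equivalently $c^2(1+1/L^2)\le 4c-1$.

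I would then take $c=7/2$. The constraint becomes $49(1+1/L^2)/4\le 13$, i.e.\ $L^2\ge 49/3$, which holds for $L\ge 5$. The functional gap is $\log((11-7/L^2)/4)$, which is at least $1$ precisely when $L^2\ge 7/(11-4e)\approx 55.2$, i.e.\ when $L\ge 8$. Both conditions are satisfied for $L\ge 10$, which proves the claim.

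The principal subtlety, and the main obstacle in the argument, is that even after optimizing over $c$ the achievable gap tends to $\log((4+\sqrt{3})/2)\approx 1.053$ as $L\to\infty$, leaving only a narrow margin above the target value $1$. Consequently any substantially weaker lower bound on $L$, or a significantly less careful choice of $c$, would fail; this is why the threshold in the lemma is stated as $L\ge 10$ rather than for arbitrary $L$.
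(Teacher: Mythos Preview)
Your proof is correct and follows essentially the same approach as the paper: both pick $y=-1$ and $x$ slightly to the right of $-1$, then verify the gap $f_L(-1)-f_L(x)\ge 1$ together with the admissibility constraint $|x-y|\le\Delta_L(x)$. The only difference is that the paper takes $x=x_L$, the exact root of $x-\Delta_L(x)=-1$ (so the constraint is tight), and plugs the closed-form expression for $x_L$ into the gap, whereas you use the explicit ansatz $x=-1+\tfrac{7}{2L^2}$; your choice trades a small amount of tightness for a cleaner computation, but the argument is otherwise identical.
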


    From \cite[Theorem 3.13, Lemma 3.1]{petrushev2011rational} we know that $ E_m(f_L)\le 100\tau_1(f_L,\Delta_m) $. Therefore, for all $ c\le 10^{-7}<0.1 $, the direct bound in \prettyref{lmm:direct} gives us
    \begin{equation}
        \frac{1}{L}\sum_{m=1}^{cL}E_m(f_L)
        \le \frac{100}{L} \sum_{m=1}^{cL}\log \pth{\frac{2L^2}{m^2}}
        = 100c\log 2+ \frac{200}{L}\log \frac{L^{cL}}{(cL)!}
        < \frac{1}{400}-\frac{100}{L}\log(2\pi cL),
        \label{eq:directsum}
    \end{equation}
    where the last inequality follows from Stirling's approximation $ n!>\sqrt{2\pi n}(n/e)^n $.
    We apply the converse result for approximation in \cite[Theorem 3.14]{petrushev2011rational} that
    \begin{equation}
        \tau_1(f_L,\Delta_L)\le \frac{100}{L}\sum_{m=0}^{L}E_m(f_L) ,
        \label{eq:convsum}
\end{equation}
     where $ E_0(f_L)=\log L $.     
    %    we apply \prettyref{eq:directsum} and \prettyref{lmm:converse} to obtain 
Assembling \prettyref{eq:converse}--\prettyref{eq:convsum}, we obtain for all $ c\le 10^{-7} $ and $ L>10 \vee \pth{100\times 400 \log \frac{1}{2\pi c}} $, 
    \[
    \frac{1}{L}\sum_{m=cL+1}^{L}E_m(f_L)
    \ge \frac{1}{100} - \pth {\frac{1}{L}E_0(f_L) + \frac{1}{L}\sum_{m=1}^{cL}E_m(f_L)}
    \ge \frac{1}{100} - \pth {\frac{1}{400} + \frac{100\log \frac{1}{2\pi c}}{L}}
    > \frac{1}{200}.
    \]
    By definition, the approximation error $ E_m(f_L) $ is a decreasing function of the degree $ m $.
    Therefore for all $ c\le 10^{-7} $ and $ L>4\times 10^4 \log \frac{1}{2\pi c} $,
    \[
    E_{cL}(f_L)
    \ge \frac{1}{L-cL}\sum_{m=cL+1}^{L}E_m(f_L)
    \ge \frac{1}{L}\sum_{m=cL+1}^{L}E_m(f_L)
    \ge \frac{1}{200}. \qedhere
    \]

\end{proof}

% \begin{remark}
%     The above proof intends to show $ E_{cL}(\log,[1/L^2,1])\gtrsim 1 $ for some $ c $ when $ L $ is large,
%     the constants in the proof are very loose to obtain concise non-asymptotic bounds.
%     Actually from numerical analysis, $ E_{L}(\log,[1/L^2,1])\approx\nbr{Experiment} $ when $ L $ is large enough.
% \end{remark}
\begin{remark}
    From the direct bound \prettyref{lmm:direct} we know that $ E_{cL}(\log,[1/L^2,1])\lesssim 1  $. Therefore the bound \prettyref{eq:sep} is in fact tight: $ E_{cL}(\log,[1/L^2,1])\asymp 1  $.
\end{remark}

\begin{proof}[Proof of Lemmas \ref{lmm:direct} and \ref{lmm:converse}]

    First we show \prettyref{eq:direct}.
    Note that
    \begin{equation*}
        \tau_1(f_L,\Delta_m)
        =  \sup_{x \in \qth{-1,1}}\sup_{y:|x-y|\le\Delta_m(x)}|f_L(x)-f_L(y)|.
    \end{equation*}
    For fixed $ x\in[-1,1] $, to decide the optimal choice of $ y $ we need to consider whether $ \xi_1(x)\triangleq x-\Delta_m(x)\ge -1 $ and whether $ \xi_2(x)\triangleq x+\Delta_m(x)\le 1 $.
    Since $ \xi_1 $ is convex, $ \xi_1(-1)<-1 $ and $ \xi_1(1)>-1 $, then $ \xi_1(x)>-1 $ if and only if $ x>x_m $,
    % Observe the following equivalence relations:
    % \begin{equation}
    %     \begin{aligned}
    %         % &\sth{x\in [-1,1]:x-\Delta_m(x)<-1} \equiv \sth{x \in [-1,x_m)},\\
    %         % &\sth{x\in [-1,1]:x-\Delta_m(x)>-1} \equiv \sth{x \in (x_m,1]}, 
    %         &\sth{x\in [-1,1]:x-\Delta_m(x)<-1} = [-1,x_m),\\
    %         &\sth{x\in [-1,1]:x-\Delta_m(x)>-1} = (x_m,1], 
    %     \end{aligned}
    %     \label{eq:equiv-tau}
    % \end{equation}
    where $ x_m $ is the unique solution to $ \xi_1(x)=-1 $, given by
    \begin{equation}
        x_m=\frac{m^2-m^4+\sqrt{-m^2+3m^4}}{m^2+m^4}.
        \label{eq:xm}
    \end{equation}
    Note that $ \Delta_m $ is an even function and thus $ \xi_2(x)=-\xi_1(-x) $.
    Then $ \xi_2(x)<1 $ if and only if $ x<-x_m $.

    Since $ f_L $ is strictly decreasing and convex, for fixed $ x $ and $ d>0 $ we have $f_L(x-d)-f_L(x) > f_L(x)-f_L(x+d) > 0$ as long as $-1<x-d<x+d<1$.
    If $ m\ge 2 $ since $ \xi_1(0)>-1 $ then $ x_m<0 $ and $ -x_m>0 $.
    Therefore
    % can decompose the supremum and it turns out 
    \begin{equation*}
        \tau_1(f_L,\Delta_m) 
        % =  \sup_{x \in \qth{-1,1}}\sup_{y:|x-y|\le\Delta_m(x)}|f_L(x)-f_L(y)|\\
        =  \sup_{x<x_m}\sth{f_L(x)-f_L(\xi_2(x))}
        \vee \sup_{x<x_m}\sth{f_L(-1)-f_L(x)}
        \vee \sup_{x\ge x_m}\sth{f_L(\xi_1(x))-f_L(x)}.
    \end{equation*}
    Note that the second term in the last inequality is dominated by the third term since $ f_L(\xi_1(x_m))-f_L(x_m)=f_L(-1)-f_L(x_m) >  f_L(-1)-f_L(x) $ for any $ x<x_m $. Hence
    \begin{align}
        \tau_1(f_L,\Delta_m)= & \sup_{x\in [-1,x_m)}\sth{f_L(x)-f_L(\xi_2(x))} \vee \sup_{x\in [x_m,1]}\sth{f_L(\xi_1(x))-f_L(x)}\nonumber\\
        = & \sup_{x\in [-1,x_m)}\sth{\log \pth{1+\beta_L(x)}} \vee \sup_{x\in [x_m,1]}\sth{-\log \pth{1-\beta_L(x)}},\label{eq:tau}
    \end{align}
    where $ \beta_L(x)\triangleq \frac{\Delta_m(x)}{x+\frac{L^2+1}{L^2-1}} $.
    If $ m=1 $ we know that $ x_1>0 $ and $ -x_1<0 $ by \prettyref{eq:xm}, then
    \begin{equation*}
        \tau_1(f_L,\Delta_m) 
        % =  \sup_{x \in \qth{-1,1}}\sup_{y:|x-y|\le\Delta_m(x)}|f_L(x)-f_L(y)|\\
        =  \sup_{x<x_m}\sth{f_L(x)-f_L(\xi_2(x)\wedge 1)}
        \vee \sup_{x<x_m}\sth{f_L(-1)-f_L(x)}
        \vee \sup_{x\ge x_m}\sth{f_L(\xi_1(x))-f_L(x)}.
    \end{equation*}
    Since $ f_L(\xi_2(x)\wedge 1)\ge f_L(\xi_2(x)) $, by the same argument, \prettyref{eq:tau} remains a valid upper bound of $ \tau_1(f_L,\Delta_1) $.
    Next we will show separately that the two terms in \prettyref{eq:tau} both satisfy the desired upper bound. 

    For the first term in \prettyref{eq:tau}, note that
    \[
    \beta_L(x)
    = \frac{\frac{1}{m}\sqrt{1-x^2}+\frac{1}{m^2}}{x+1+\frac{2}{L^2-1}}
    \le \frac{1}{m^2}\frac{L\sqrt{1-x^2}+1}{\pth{x+1}+\frac{2}{L^2}}
    = \frac{L^2}{m^2}\frac{\sqrt{1-x^2}+\frac{1}{L}}{L\pth{x+1}+\frac{2}{L}}.
    \]
    One can verify that $ \frac{\sqrt{1-x^2}+\frac{1}{L}}{L\pth{x+1}+\frac{2}{L}}\le 1 $ for any $ x\in[-1,1] $.
    Therefore
    \begin{equation*}
        \log \pth{1+\beta_L(x)}\le \log\pth{1+\frac{L^2}{m^2}},~\forall x\in [-1,1]
    \end{equation*}
    and, consequently,
    \begin{equation}
        \sup_{x\in [-1,x_m)}\sth{\log \pth{1+\beta_L(x)}}
        \le \log\pth{\frac{2L^2}{m^2}},~\forall m\le L.
        \label{eq:tau-1}
    \end{equation}

    For the second term in \prettyref{eq:tau}, it follows from the derivative of $ \beta_L(x) $ that it is decreasing when $ x>\frac{1-L^2}{1+L^2} $.
    From \prettyref{eq:xm} we have $ x_m>\frac{1-m^2}{1+m^2} $ and hence $ x_m>\frac{1-L^2}{1+L^2} $ when $ m\le L $.
    So the supremum is achieved exactly at the left end of $ [x_m,1] $, that is:
    \begin{equation*}
        \sup_{x\in [x_m,1]}\sth{-\log \pth{1-\beta_L(x)}}
        =-\log \pth{1-\beta_L(x_m)}
        =\log\pth{\frac{1+x_m}{2}L^2+\frac{1-x_m}{2}}.
    \end{equation*}
    From \prettyref{eq:xm} we know that $ x_m\ge -1 $ and $  x_m<-1+\frac{3.8}{m^2} $. Therefore $ \frac{1-x_m}{2}\le 1 $ and $ \frac{x_m+1}{2}<\frac{1.9}{m^2} $.
    For $ m\le 0.1 L $, we have
    \begin{equation}
        \sup_{x\in [x_m,1]}\sth{-\log \pth{1-\beta_L(x)}}
        \le \log\pth{1+\frac{1.9m^2}{L^2}}
        \le \log\pth{\frac{2m^2}{L^2}}.
        \label{eq:tau-2}
    \end{equation}
    Plugging \prettyref{eq:tau-1} and \prettyref{eq:tau-2} into \prettyref{eq:tau}, we complete the proof of \prettyref{lmm:direct}.
%\end{proof}
%
%\begin{proof}[Proof of \prettyref{lmm:converse}]
%    We still use the notations in the proof of \prettyref{lmm:direct} since it is closely related.

Next we prove \prettyref{eq:converse}. Recall that $ x_L-\Delta_L(x_L)=-1 $.
    By definition,
    \[
    \tau_1(f_L,\Delta_L)
    \ge f_L(x_L-\Delta_L(x_L))-f_L(x_L)
    =\log\pth{\frac{1+x_L}{2}L^2+\frac{1-x_L}{2}}.
    \]
    Using the close-form expression of $ x_L $ in \prettyref{eq:xm} with $ m=L $, we further obtain
    \[
    \tau_1(f_L,\Delta_L)
    \ge \log\pth{\frac{2L^2+\sqrt{-L^2+3L^4}}{2(L^2+1)}+\frac{2L^4-\sqrt{-L^2+3L^4}}{2(L^2+L^4)}}
    \ge 1
    \]
    when $ L\ge 10 $.
\end{proof}

\section{Approximation error at the end points}
        \label{app:end}

    % By the definition $ P_L(0)=a_0\frac{c_1\log k}{n} $ and $ E_L(\phi,[0,\frac{c_1\log k}{n}])=\frac{c_1\log k}{n}E_L(\phi,[0,1]) $.
    % It suffices to show that $ a_0=E_L(\phi,[0,1]) $.
    % Recall that $ p_L(x) $ is the best polynomial to approximate $ \phi $ on $ [0,1] $.
    We prove the claim in \prettyref{rmk:adaptive}.
    By Chebyshev alternating theorem \cite[Theorem 1.6]{petrushev2011rational}, the error function $ g(x)\triangleq P_L(x)-\phi(x) $ attains uniform approximation error (namely, $\pm E_L(\phi)$) on at least $ L+2 $ points with alternative change of signs; moreover, these points must be stationary points or endpoints.
    Taking derivatives, $ g'(x)=P_L'(x)+\log(ex) $ and $ g''(x)=\frac{xP_L''(x)+1}{x} $.
Since    $ g'' $ has at most $ L-1 $ roots in $(0,1)$ and hence $ g' $ has at most $ L-1 $ stationary points,
the number of roots of $g'$ and hence the number of stationary points of $g$ in $(0,1)$ are at most $L$.
Therefore the error at the ends points must be maximal, \ie, $ |g(0)|=|g(1)|=E_L(\phi)$.
To determine the sign, note that $ g'(0)=-\infty $ then $g(0)$ must be positive for otherwise the value of $ g $ at the first stationary point is below $-E_L(\phi)$ which is a contradiction.
Hence $ a_0=g(0)=E_L(\phi)$.

\section*{Acknowledgment}
It is a pleasure to thank Yury Polyanskiy for pointing out \cite{VV10} and inspiring discussions in the early stage of the project, Tony Cai and Mark Low for many conversations on their result \cite{CL11}, R. Srikant for the pointers in \cite{Luenberger69}, and Jiantao Jiao for sharing his observation that the variance bound for the empirical entropy can be improved from  $\frac{\log^2 n}{n}$ to  $\frac{\log^2(k\wedge n)}{n}$. 
The authors are also grateful to the Associate Editor and anonymous reviewers for helpful comments.

%\bibliographystyle{alpha}
%\bibliography{IEEEabrv,strings,refs,entropy_ref}
\newcommand{\etalchar}[1]{$^{#1}$}

\end{document}

%%% Local Variables: 
%%% mode: latex
%%% TeX-master: t
%%% End: 